\documentclass[11pt]{article}

\usepackage{graphicx}
\usepackage[utf8]{inputenc}
\usepackage[dvipsnames]{xcolor}
\usepackage{nicefrac}
\usepackage{amsthm}
\usepackage{comment}
\usepackage{subfigure}
\usepackage{mathtools}
\usepackage[style=apa,bibstyle=apa,doi=false,url=false,isbn=false,eprint=false,sortcites=false]{biblatex}
\usepackage[colorlinks=true, urlcolor=blue, linkcolor=blue, citecolor=red]{hyperref}
\usepackage[noabbrev,nameinlink,capitalise]{cleveref}
\usepackage{multirow,array}
\usepackage[margin=1in]{geometry}
\usepackage{setspace}
\usepackage{url}
\usepackage{authblk}
\usepackage{paralist}
\usepackage{bm}
\usepackage{bbm}
\usepackage{amsmath,amssymb}
\usepackage{tikz}
\usepackage{dirtytalk}
\newtheorem{theorem}{Theorem}[section]
\newtheorem{proposition}{Proposition}[section]

\newtheorem{lemma}{Lemma}[section]

\newtheorem{cor}{Corollary}[section]
\DeclareMathOperator*{\argmax}{arg\,max} 	
\allowdisplaybreaks

\theoremstyle{definition}
\newtheorem{definition}{Definition}[section]
\newtheorem{example}{Example}[section]

\title{Screening Under Competition}

\author[1]{Yu-Ting Ho\thanks{\href{mailto:hoyuting@berkeley.edu}{hoyuting@berkeley.edu}}}
\affil[1]{Department of Economics, University of California, Berkeley}

\date{\today}

\begin{document}

    \maketitle
    \begin{abstract}
       We study competition among multiple firms that offer differentiated varieties of the same good to a unit-demand agent. The agent has heterogeneous valuations for goods from different firms. Firms do not observe the agent’s exact valuations, but they know their distribution. Firms simultaneously post menus of contracts, after which the agent chooses a firm and one of its contracts to maximize her utility. This defines a game in which firms aim to  maximize expected revenue. We introduce a sufficient condition, density-regularity, under which each firm’s best response to any arbitrary menu profile posted by its opponents is equivalent to posting a menu that contains only a posted-price contract. Our result is not a direct extension of the canonical Myersonian model with a single seller. The standard argument in the literature breaks down once heterogeneous preferences and competition are introduced. We therefore adopt an optimal-control approach, in which the density-regularity condition is essential for establishing the optimality of posted prices. When this condition fails, posted prices may fail to be a best response. 
    \end{abstract}

     \newpage
    
    \section{Introduction}


Screening is a classic topic in microeconomic theory and market design. The canonical one-dimensional monopoly screening problem is well understood, but moving beyond this benchmark introduces substantial difficulties. First, allowing the agent to have a multidimensional type leads to one of the most challenging domains in mechanism design: Optimal mechanisms are often analytically intractable and exhibit complex features that are difficult to implement in practice \parencite{armstrong1996multiproduct, rochet1998ironing, manelli2007multidimensional}. Second, when multiple firms compete, each firm may condition its strategy on the broader contractual environment, giving rise to a complex communication game \parencite{mcafee1993mechanism, epstein1999revelation, peters2001common}. Establishing tractable equilibria in competing mechanism environments typically requires restrictive assumptions. While the literature struggles with these two aspects separately, many real-world markets involve both features at the same time.

Consider a unit-demand agent who chooses among differentiated varieties of the same good offered by competing firms. Since the agent may have different valuations for different varieties, her private information is naturally multidimensional. Firms compete by posting pricing schemes. This setting captures many familiar markets, such as choosing among airlines for the same route, purchasing a cell phone from competing brands, or watching the same movie at theaters operated by different chains. We are interested in markets in which heterogeneous preferences and competition arise simultaneously. Under our formulation, the interaction between these two features becomes tractable. We introduce the model in the following.

The agent's preference is described by a valuation vector \(\bm{\theta}=(\theta_1,\dots,\theta_n)\in\mathbb{R}^n\), where \(n\ge2\) is the number of firms and \(\theta_i\) denotes the agent's valuation for the good offered by firm \(i\). Each \(\theta_i\) is independently drawn from a cumulative distribution function \(F_i\), with strictly positive density \(f_i\) on the support \(\Theta_i=[0,\bar{\theta}_i]\subset\mathbb{R}_{\ge0}\). Firms know the joint distribution of \(\bm{\theta}\), but they do not observe its realization. Before the agent's type is realized, firms compete by offering pricing schemes with the aim of contracting with the agent. 

A \textbf{contract} $(q_i, t_i) \in [0,1] \times \mathbb{R}_{\ge 0}$ offered by firm $i$ consists of an allocation $q_i$, representing the probability or quantity of the good the agent receives, and a transfer $t_i$, representing the payment to firm $i$. Given a realized valuation $\theta_i$, the agent's utility from a contract $(q_i, t_i)$ is $q_i \theta_i - t_i$. A contract is a \textbf{posted-price contract} if it takes the form $(1, p_i)$ for some $p_i \in \mathbb{R}_{\ge 0}$. The contract $(0,0)$ is the \textbf{walk-away contract}, involving no allocation or payment, and is always available to the agent. A \textbf{menu} $M_i \subseteq [0,1] \times \mathbb{R}_{\ge 0}$ is a set of contracts available to the agent. A \textbf{posted-price menu} consists of only the walk-away contract and a posted-price contract: $M_i = \{(0,0), (1, p_i)\}$.

Firms compete by simultaneously posting menus, forming a menu profile $\bm{M} = (M_1, \dots, M_n)$. After observing the menu profile, the agent chooses the contract that yields the highest utility given her valuation vector $\bm{\theta}$. The firm offering the chosen contract then collects the corresponding transfer as revenue. We assume that the agent can contract with only one firm. This assumption naturally captures markets for indivisible goods, where the agent cannot split her unit demand across firms. The strategy space for each firm is the set of menus containing the walk-away contract. A menu profile $\bm{M}$ constitutes an equilibrium if no firm can achieve a strictly higher expected revenue by unilaterally deviating to an alternative menu. We refer to this setting as the \textbf{menu-posting game}. 

Adopting the menu-posting framework is not merely a matter of tractability. It also provides sufficient generality to characterize the equilibrium outcomes of a broader communication game. In this broader game, after the agent's type is realized, she sends a possibly distinct message to each firm. Each firm then proposes an outcome based on the message it receives. Thus, the agent’s strategy maps her type into a profile of messages, while each firm’s strategy maps the message it receives into an outcome. Characterizing equilibria in the broader communication game is difficult. We therefore rely on the fundamental result of \textcite{peters2001common} and \textcite{martimort2002revelation}. They show that, for any equilibrium outcome of this broader communication game, there exists an equilibrium of the menu-posting game that induces the same outcome.



Our main result introduces a sufficient condition, \textbf{density-regularity}, under which a firm's best response to any arbitrary menu profile offered by competitors must be equivalent to a posted-price menu with a unique price. This result also implies that any equilibrium, if it exists, must be equivalent to a posted-price equilibrium. We then establish the existence of a posted-price equilibrium and its uniqueness under certain conditions.

The optimality of posted prices is not a direct extension of the canonical one-dimensional screening model. Once competition and heterogeneous preferences are introduced, standard approaches in screening—such as pointwise maximization after integration by parts or extreme-point arguments over the space of incentive-compatible mechanisms—are no longer directly applicable, because each firm's expected revenue depends on its opponents' strategies. We therefore adopt an optimal-control approach. Under this approach, the density-regularity condition emerges naturally and plays a crucial role in establishing the optimality of posted prices.     

Density-regularity is closely related to Myerson's regularity condition. While Myerson's regularity requires the virtual value function $v_i(\theta_i) = \theta_i - \frac{1-F_i(\theta_i)}{f_i(\theta_i)}$ to be increasing, our sufficient condition requires \textbf{density-weighted} virtual value function $v_i(\theta_i)f_i(\theta_i)$ to be increasing. This condition holds for distributions whose densities do not decline too sharply, such as the uniform distribution, but it may fail for distributions whose densities are concentrated on a few small intervals. We show through a counterexample that, when the density has two peaks and density-regularity fails, posted prices can be suboptimal. This contrasts with the canonical Myersonian setting, in which the optimality of posted prices does not require any regularity condition.  


We also extend our analysis to a setting with non-exclusive contracts, focusing specifically on a duopoly framework. This setup naturally captures markets for divisible goods, such as coffee, meals, or gasoline, where an agent can split her fixed demand across multiple firms.



Given a menu profile $\bm{M}$, we say that a contract $(q_i, t_i) \in M_i$ is \textbf{active} if it is selected by the agent with positive probability. We show that any equilibrium profile $\bm{M}$ in the non-exclusive duopoly setting must contain an active posted-price contract. We further introduce a sufficient condition under which the posted-price contract is the unique active contract. This condition is related to density-regularity: any distribution whose density does not drop sharply will satisfy it. Moreover, our findings indicate that sustaining an equilibrium with a larger number of distinct active contracts requires a more restrictive class of distributions.

These results suggest that multi-contract menus may be redundant: in equilibrium, only the posted-price contract is selected. Alternatively, such menus may be interpreted as collections of distinct posted-price contracts targeted at consumers with heterogeneous total demands. Either interpretation supports the view that focusing on exclusive contracting is not overly restrictive, since demand splitting does not occur. We suspect that this non-splitting result is even more robust in practice, as agents likely face additional transaction costs when splitting demand—frictions that we abstract from in our framework.

In the last part of the analysis, we relax the assumption that the agent's valuations for goods from different firms are independently drawn. We show that if the agent's valuations are negatively correlated, then our main result—that the posted-price menu is the unique best response to any profile of opponents' menus—remains valid. Negative correlation captures the idea that when the agent values one firm's good more highly, she is less likely to value other firms' goods highly. The intuition is that firms enjoy greater market power under competition and behave more like monopolists selling to a loyal consumer base. 

However, once we go beyond negative correlation, the analysis becomes largely intractable. We cannot derive a sufficient condition on the distribution alone that pins down a unique structure for the best response. The best response need not be a posted-price menu, and its structure may depend on both the opponents' profile and the specific correlation structure. We then take a step back and ask whether a posted-price menu profile can at least sustain an equilibrium. We derive a sufficient condition, correlation-adjusted density-regularity, imposed solely on the distribution of types, that guarantees the existence of a posted-price equilibrium. We also show that a posted-price equilibrium exists when types are perfectly correlated.

We close the introduction with a final remark: each of the assumptions we adopt plays an essential role in establishing the main result. Although the posted-price result is standard in the canonical one-dimensional monopoly screening problem, we emphasize that considerably stronger restrictions are required to establish it as a best response to opponents' menus once competition and heterogeneous preferences are introduced. Relaxing either density-regularity or the independence assumption makes the analysis substantially more difficult, and characterizing a clean structure for the best response and equilibrium becomes elusive. The essential contribution of this paper is to construct a tractable framework for common agency with multidimensional types, and to push the boundary of when clean results—analogous to those in the standard monopoly screening problem—can be obtained.

\subsection{Related Literature}

This paper is closely related to the literature on nonlinear pricing. The pioneering contributions of \textcite{mussa1978monopoly} and \textcite{maskin1984monopoly} establish the standard framework for monopoly screening, in which a seller offers a nonlinear tariff to screen consumers with privately known willingness to pay. In the canonical vertical-screening interpretation, the contract variable $q$ represents product quality, and the agent has a scalar type $\theta$ that determines her marginal value for quality. Subsequent work extends this framework to environments with multiple firms and strategic competition. These models are closely related to menu-posting games, since firms compete by offering schedules or menus of contracts. Much of this literature preserves the one-dimensional vertical structure: firms compete over nonlinear schedules, but the agent’s marginal value for quality is governed by the same scalar type $\theta$ across firms \parencite{champsaur1989multiproduct, stole1995nonlinear, martimort1992multiprincipaux, martimort1996exclusive, stole1991mechanism, biglaiser1993principals}.

Our model differs in both the interpretation of the contract variable and the structure of heterogeneity. We interpret $q_i$ as a quantity or probability of firm $i$'s good rather than as a common quality dimension. More importantly, the agent’s marginal valuation is firm-specific: utility from firm $i$ depends on $\theta_i q_i$ rather than on a common term $\theta q_i$. Heterogeneity is therefore horizontal and multidimensional along the screening dimension itself.

There are also papers that study environments in which the agent has horizontal heterogeneity across goods offered by different firms. In these models, horizontal heterogeneity is often interpreted as a brand-specific shock, arising from factors such as transaction costs, location, or idiosyncratic preferences. A typical formulation takes the agent's utility from firm $i$ to be $q\theta-\xi_i$,
where $\xi_i$ is a firm-specific shock. In this specification, the shock affects the level of utility from choosing a firm, but the marginal value of $q$ is still governed by the common scalar type $\theta$, which is the main distinction from our model.

Some models further restrict horizontal heterogeneity to be summarized by a single index \parencite{spulber1989product, stole1995nonlinear, borenstein1985price, katz1984price, mezzetti1997common}. This approach is in the spirit of the spatial-differentiation idea introduced by \textcite{hotelling1929stability}, but later work adapts and extends the framework to richer environments with nonlinear pricing and competitive screening. \textcite{yang2008nonlinear} study a duopoly model with two-dimensional heterogeneity: each agent has a vertical taste parameter $\theta$ and a horizontal location parameter, represented by distances $(d_1,d_2)$ satisfying $d_1+d_2=1$.

\textcite{armstrong2001competitive} also study a duopoly market in which the agent's type is described by $(\theta,\xi_A,\xi_B)$, where $\xi_i$ is a firm-specific shock. Their approach formulates competition in utility space, while the mapping from consumer utility to payments is fixed. By contrast, our model allows firms to design arbitrary menus, so that the mapping from the agent's utility to transfers is endogenously chosen rather than restricted ex ante. \textcite{rochet2002nonlinear} address a closely related problem and extend the analysis to oligopoly. However, as emphasized by \textcite{stole2007price}, closed-form equilibrium solutions are generally unavailable in their model. 

The tractability obtained in \textcite{rochet2002nonlinear} relies on additional structure. Their oligopoly equilibrium characterization requires full market coverage and a condition on the inverse hazard rates of firms' winning probabilities, evaluated along the utility vector induced by each firm in the equilibrium. By contrast, our density-regularity condition is imposed directly on the primitive distribution of the agent's valuations. Because this condition is independent of the utilities induced by rival firms' equilibrium strategies, it can be used to analyze a firm's optimization problem off equilibrium. This allows us to derive a best-response characterization, rather than only an equilibrium characterization.

The recent work of \textcite{gomes2025nonlinear} adopts a related $(\theta,\xi_1,\ldots,\xi_n)$ framework to study how consumers' propensity to switch brands shapes market outcomes and equilibrium. They restrict the common vertical type $\theta$ to be binary, high or low, and analyze how differences in brand-switching propensities across consumer types affect market efficiency.

Only a few papers in the non-linear pricing competition literature study environments in which the agent's preferences exhibit multidimensional horizontal heterogeneity. Even in these settings, the existing literature primarily focuses on equilibrium characterization, market efficiency, and comparative statics, often under substantial additional structure. Our paper adopts a more general framework and derives a sharp best-response result under a primitive regularity condition on the distribution of valuations. We also study extensions with non-exclusive contracting and correlated types, two issues that have received relatively little attention in this literature.

The remainder of the paper is organized as follows. We introduce the benchmark model in \cref{sec:model}. The main result, establishing that a posted price is the optimal response under density-regularity, is presented in \cref{sec:e}. We extend our analysis to non-exclusive contracts in \cref{sec:n}. The discussion regarding correlated type distributions is presented in \cref{sec:cor}. Finally, we conclude in \cref{sec:conclude}.

 \section{Model}\label{sec:model}

There is an agent who has a fixed demand for a certain good, normalized to one. There are multiple firms that offer similar goods that are substitutable to varying degrees. The agent has different valuations for each firm's good. The agent's type is described by a vector $\bm{\theta} = (\theta_1,\dots,\theta_n) \in \mathbb{R}^n$, where $n \ge 2$ is the number of firms and $\theta_i$ represents the agent's valuation for firm $i$'s good. We assume that each $\theta_i$ is independently drawn from a cumulative distribution function $F_i$ with a continuously differentiable, strictly positive density $f_i$ on $\Theta_i = [0, \bar{\theta}_i] \subset \mathbb{R}_{\ge 0}$. The joint distribution of the agent's type $\bm\theta$ is denoted by $\bm F$, with density $\bm f$ and support $\bm\Theta$. Firms do not know the agent's exact type but know the joint distribution $\bm F$.

Firms aim to contract with the agent and compete with each other. In the most general setting, this competition induces a complicated communication game in which the agent should also be treated as a player. After observing her type, the agent sends messages to the firms. Each firm then chooses an outcome for the agent based on the messages it receives. The agent's strategy is a mapping from her type space to a message space, while each firm's strategy is a mapping from the message space to the outcome space. In settings with multiple firms, it has been shown that the revelation principle is no longer valid. There exist equilibria in the broad communication game that cannot be captured by direct mechanisms.

\textcite{peters2001common} shows that competition between firms can be effectively modeled by having firms offer menus of outcomes. That is, firms simultaneously post the sets of outcomes they are willing to implement, and then the agent chooses her preferred option. This setting is referred to as the \textbf{menu-posting game}. This result indicates that, for any equilibrium in the broader communication game, there exists an equilibrium in the menu-posting game that induces the same equilibrium outcome. Hence, we adopt the menu-posting framework to model competition among firms, which is described below.

A \textbf{contract} $(q_i, t_i)\in [0,1] \times \mathbb{R}_{\ge0}$ consists of two elements: $q_i$ is the quantity, or probability, of the good allocated to the agent, and $t_i$ is the corresponding transfer. A \textbf{menu} $M_i \subseteq [0,1] \times \mathbb{R}_{\ge0}$ is a set of contracts. Firms compete with each other by choosing menus to post. We assume that each firm can only post a menu that is a compact subset of the set of possible allocations and transfers. In addition, each posted menu must contain the \textbf{walk-away contract} $(0,0)$, ensuring that the agent always has the option not to contract with any firm. Let $\mathcal{A}$ denote the set of menus that firms can post, which is also the strategy space in the menu-posting game.

The utility of a contract $(q_i,t_i)$ offered by firm $i$ to an agent with type $\theta_i$ is defined as $q_i\theta_i-t_i$. Since $M_i$ is compact, there exists a contract in $M_i$ that maximizes the agent's utility. Thus, we define the agent's indirect utility from firm $i$'s menu $M_i$ as
\begin{equation}\label{eq:indirect}
    U_i(\theta_i)
    =
    \max_{(q_i,t_i)\in M_i}
    \bigl(q_i\theta_i-t_i\bigr).
\end{equation}
We denote by $(q_i(\theta_i),t_i(\theta_i))\in M_i$ a contract chosen by an agent with type $\theta_i$ from firm $i$'s menu. When multiple contracts in the same menu yield the same maximal utility for the agent, we assume that she selects the contract with the highest transfer to the firm. Therefore, $(q_i(\theta_i),t_i(\theta_i))\in M_i$ is well-defined.

Note that $U_i$ is convex and non-decreasing, since it is the maximum of a family of affine functions with non-negative slopes $q_i\ge 0$. By standard properties of convex functions, $U_i$ is absolutely continuous and therefore differentiable almost everywhere. Moreover, by the envelope theorem,
\begin{equation}\label{eq:envelope}
    U_i'(\theta_i)=q_i(\theta_i)
    \quad \text{almost everywhere.}
\end{equation}
Therefore, $U_i$ is strictly increasing on any interval over which $q_i(\theta_i)>0$ almost everywhere.

The menu-posting game proceeds as follows. Each firm $i$ simultaneously posts a menu $M_i$. Then, the agent's type vector $\bm{\theta}$ is realized. After observing the menu profile $\bm{M} = (M_1, \dots, M_n)$, the agent evaluates the maximum utility obtainable from each firm based on $\bm{\theta}$ and contracts with the firm that yields her the highest utility. We assume that contracting is exclusive, meaning that the agent can contract with at most one firm. When multiple firms yield the same maximal utility, the agent breaks ties according to an arbitrary fixed tie-breaking rule. Finally, the chosen contract is implemented, and the game concludes.

Note that the agent's type vector has not yet been realized when firms post their menus. Hence, the equilibrium notion is ex ante. To define expected revenue, we first introduce the interim probability that firm $i$ weakly secures the contract conditional on the agent's valuation $\theta_i$:
\[
P_i(\bm M, \theta_i)
=
\Pr\left( U_i(\theta_i) \ge \max_{j\ne i} U_j(\theta_j) \right)
=
\prod_{j\ne i}F_j\left(U_j^{-1}(U_i(\theta_i))\right),
\]
where $U_j^{-1}$ is the generalized inverse of $U_j$, defined by
\[
U_j^{-1}(u)
=
\sup\{\theta_j \in \Theta_j: U_j(\theta_j) \le u\}.
\]
The expected revenue that firm $i$ obtains under menu profile $\bm M$ is
\[
ER_i(\bm M)
=
\int_{\Theta_i} t_i(\theta_i) P_i(\bm{M}, \theta_i) f_i(\theta_i) \, d\theta_i.
\]

The expected revenue $ER_i$ is well-defined independently of the particular fixed tie-breaking rule. Under continuous type distributions, ties outside flat regions of the relevant indirect utility functions occur only on measure-zero sets. Hence, a positive-measure tie involving firm $i$ can arise only on a region where $U_i$ is flat, which implies $q_i(\theta_i)=0$ almost everywhere. Since transfers are nonnegative and the walk-away contract $(0,0)$ is included in $M_i$, we must have $t_i(\theta_i)=0$ almost everywhere on such a region. Therefore, any fixed tie-breaking rule yields the same expected revenue for firm $i$. Hence, when defining \(ER_i\), we can, without loss of generality, use the weak-winning probability in the definition of \(P_i\), rather than explicitly incorporating the tie-breaking rule into the formula.

Finally, we define the equilibrium concept. A menu profile $\bm{M}$ is an equilibrium in the menu-posting game if, for all $1 \le i \le n$, we have
\[
ER_i(\bm{M})  \ge ER_i(\tilde{M}_i, M_{-i}) 
\quad \forall \; \tilde{M}_i \in \mathcal{A}.
\]
This states that no firm $i$ can achieve a higher expected revenue by unilaterally deviating to any alternative feasible menu $\tilde{M}_i$.
    \section{Main Result}\label{sec:e}

Our main result characterizes the structure of a firm's best response. Given an arbitrary opponents' menu profile $M_{-i}$, we formulate firm $i$'s best response problem and derive a sufficient condition---\textbf{density-regularity}---under which firm $i$'s best response must induce the same outcome as a posted-price menu. That is, every best response is outcome-equivalent to a menu of the form $M_i^\star = \{(0,0),(1,p_i^\star)\}$ for some $p_i^\star >0$. We also show that density-regularity plays an essential role: without it, the best response can be highly complex and difficult to characterize.

We then prove the main theorem using optimal control theory as our primary analytical tool. The proof also constitutes a technical contribution. Although the posted-price result superficially resembles the standard result in monopoly screening, the problem we study cannot be solved using Myerson's integration-by-parts approach or an extreme point argument. The proof further reveals how density-regularity arises naturally. By comparing our problem with the monopoly screening benchmark, we provide a sharper understanding of Myerson's regularity condition through the lens of optimal control. Finally, we establish the existence of equilibrium and provide sufficient conditions for uniqueness.

\subsection{Characterization of the Best Response}

Fix an arbitrary opponents' menu profile $M_{-i}$. For each competing firm $j\ne i$, let $U_j(\theta_j)$ denote the agent's indirect utility induced by firm $j$'s menu $M_j$. We want to derive a menu $M_i^\star$ such that
\[
ER_i(M_i^\star,M_{-i}) \ge ER_i(M_i,M_{-i}),
\quad \forall\, M_i \in \mathcal{A}.
\]

For any menu $M_i$ offered by firm $i$, let $U_i(\theta_i)$ denote the induced indirect utility function defined in \eqref{eq:indirect}. Recall that $(q_i(\theta_i),t_i(\theta_i))\in M_i$ denotes an optimal contract chosen by an agent of type $\theta_i$, and that, by \eqref{eq:envelope},
\[
U_i'(\theta_i)=q_i(\theta_i)
\quad \text{almost everywhere.}
\]
By the fundamental theorem of calculus,
\[
U_i(\theta_i)
=
U_i(0)+\int_0^{\theta_i}q_i(\tau)\,d\tau.
\]
Since transfers are nonnegative and the walk-away contract $(0,0)$ is available, we have $U_i(0)=0$. Moreover, since $U_i$ is convex, the allocation rule $q_i$ is non-decreasing. Therefore, posting a menu $M_i$ is mathematically equivalent to specifying a non-decreasing allocation rule $q_i(\theta_i)\in[0,1]$. The corresponding transfer is determined by
\[
t_i(\theta_i)
=
\theta_i q_i(\theta_i)-U_i(\theta_i).
\]

Firm $i$'s optimization problem is therefore to choose a non-decreasing allocation rule $q_i(\theta_i)\in[0,1]$ to maximize expected revenue:
\[
ER_i
=
\int_{\Theta_i}
\Big[\theta_i q_i(\theta_i)-U_i(\theta_i)\Big]
P_i(U_i(\theta_i))f_i(\theta_i)\,d\theta_i,
\]
where
\[
P_i(U_i(\theta_i))
=
\prod_{j\ne i}
F_j\left(U_j^{-1}(U_i(\theta_i))\right).
\]

This maximization problem resembles the monopolist screening problem, but is substantially complicated by the presence of the term \(P_i(U_i(\theta_i))\). The allocation rule \(q_i(\theta_i)\) affects expected revenue not only through the transfer $t_i(\theta_i)$ but also through the winning probability \(P_i(U_i(\theta_i))\). Hence, changing the allocation rule $q_i$ creates a trade-off: providing more utility for the agent will lower the transfer extracted from her, but it may increase the probability that the agent chooses firm \(i\). This interaction makes the competitive screening problem substantially more intricate than the standard monopoly problem.

The standard approach of \textcite{myerson1981optimal} is not directly applicable here. The term \(P_i(U_i(\theta_i))\) prevents us from reorganizing the objective function through integration by parts and reducing the problem to pointwise maximization. The extreme point argument of \textcite{manelli2007multidimensional} is likewise unavailable, since we cannot guarantee that the objective function is quasi-convex in \(U_i\). A new regularity condition and a different proof approach are therefore needed. We introduce the regularity condition below and present the proof in the next subsection.

\begin{definition}
A density $f_i$ is density-regular if
\[
2f_i(\theta_i) + \theta_i f_i'(\theta_i) > 0
\quad \forall\, \theta_i \in \Theta_i.
\]
\end{definition}

Recall that Myerson's regularity condition requires the virtual valuation function
\[
v_i(\theta_i)
=
\theta_i - \frac{1-F_i(\theta_i)}{f_i(\theta_i)}
\]
to be increasing. In our setting, it is useful to consider the density-weighted virtual value $v_i(\theta_i)f_i(\theta_i)$. Taking the derivative with respect to $\theta_i$ yields
\begin{align*}
    \frac{d}{d\theta_i}\Big[v_i(\theta_i)f_i(\theta_i)\Big]
    &=
    \frac{d}{d\theta_i}
    \Big[\theta_i f_i(\theta_i)-1+F_i(\theta_i)\Big] =
    2f_i(\theta_i)+\theta_i f_i'(\theta_i).
\end{align*}
Thus, density-regularity is equivalent to requiring the density-weighted virtual value to be strictly increasing. This condition is satisfied by many commonly used distributions on bounded supports, including the uniform distribution and the Beta$(\alpha,1)$ family for any $\alpha\ge 1$. Intuitively, density-regularity requires that the density not decline too rapidly on its support. With the density-regularity condition established, we are now ready to present the main result of the paper.

\begin{theorem}\label{thm:br}
If $f_i$ is density-regular, then for any menu profile $M_{-i}$ posted by competing firms, there exists a unique $p_i^\star \in \Theta_i$ such that every best response of firm $i$ induces the same outcome as the posted-price menu $
M_i^\star=\{(0,0),(1,p_i^\star)\}.$
\end{theorem}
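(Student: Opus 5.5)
The plan is to remove the dependence of revenue on $U_i'$ by integrating by parts \emph{after} absorbing the winning probability into an antiderivative. That turns firm $i$'s problem into choosing the terminal utility $\bar u=U_i(\bar\theta_i)$ and then minimizing a cost that is pointwise monotone in $U_i$.

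\textbf{Step 1: rewrite the objective.} Write $G(u)=P_i(u)=\prod_{j\ne i}F_j(U_j^{-1}(u))$. It is non-decreasing in $u$ and bounded in $[0,1]$, so it is Borel measurable. Define
\[
H(u)=\int_0^u G(s)\,ds .
\]
Then $H$ is convex, non-decreasing and $1$-Lipschitz. Since $U_i$ is absolutely continuous with $U_i'=q_i$ a.e., the chain rule gives $\frac{d}{d\theta_i}H(U_i(\theta_i))=G(U_i(\theta_i))\,q_i(\theta_i)$ a.e.

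Integrate $\int \theta_i f_i\, q_i\, G(U_i)\,d\theta_i$ by parts, using $U_i(0)=0$. The revenue becomes
\[
ER_i=\bar\theta_i f_i(\bar\theta_i)\,H(\bar u)-\int_{\Theta_i}\phi\bigl(\theta_i,U_i(\theta_i)\bigr)\,d\theta_i ,
\]
where
\[
\phi(\theta,u)=H(u)\bigl(2f_i(\theta)+\theta f_i'(\theta)\bigr)+\bigl(uG(u)-H(u)\bigr)f_i(\theta).
\]
This decomposition comes from splitting $H(u)(f_i+\theta f_i')+uG(u)f_i$ as $H(u)(2f_i+\theta f_i')+(uG(u)-H(u))f_i$.

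\textbf{Step 2: monotonicity of $\phi$ in $u$.} Note that $uG(u)-H(u)=\int_0^u s\,dG(s)\ge 0$, and this quantity is non-decreasing in $u$. Density-regularity gives $2f_i+\theta f_i'>0$. Hence $\phi(\theta,\cdot)$ is non-decreasing for every $\theta$, and strictly increasing wherever $H(u)>0$.

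\textbf{Step 3: optimal shape for fixed $\bar u$.} Fix $\bar u\in[0,\bar\theta_i]$. The feasible set consists of convex $U_i$ with $U_i(0)=0$, $0\le U_i'\le 1$ and $U_i(\bar\theta_i)=\bar u$. Every such function satisfies
\[
U_i(\theta)\ \ge\ \underline U(\theta)=\max\{0,\ \theta-(\bar\theta_i-\bar u)\},
\]
because its slope is at most $1$. The lower envelope $\underline U$ is itself feasible, and it is exactly the indirect utility of the posted-price menu with $p=\bar\theta_i-\bar u$.

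By Step 2, $\underline U$ minimizes $\int\phi$ for the given $\bar u$. Moreover, any other feasible $U_i$ is strictly worse unless $U_i=\underline U$ a.e. on the set where $H(\underline U)>0$. Off that set the firm wins with probability zero and collects zero transfer, so the outcome is unaffected. This is the sense of outcome-equivalence I would use.

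\textbf{Step 4: reduce to a price and prove uniqueness.} The problem collapses to maximizing over $p\in\Theta_i$ the function
\[
R(p)=p\int_p^{\bar\theta_i}G(\theta-p)f_i(\theta)\,d\theta .
\]
Existence of a maximizer follows from upper semicontinuity of $R$, which I would obtain using the monotonicity of $G$ and dominated convergence.

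The main obstacle is uniqueness of $p_i^\star$, since $G$ is an arbitrary monotone function induced by the rivals' menus. My first attempt is to argue by contradiction. Suppose $p_1<p_2$ are both optimal and consider a convex combination of the two utility profiles: a $U_i$ with slope in $[0,1]$ that interpolates between $\underline U_{p_1}$ and $\underline U_{p_2}$, for instance their average, whose slope is $1/2$ on $[p_1,p_2]$. Its terminal value $\bar u$ lies in between, so Step 3 says it is weakly dominated by the posted price for that intermediate $\bar u$.

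I would then try to show that the revenue of the average is at least the average of $R(p_1)$ and $R(p_2)$. This needs a concavity-type property of $\bar\theta_i f_i(\bar\theta_i)H(\bar u)-\int\phi$ along this path. The convexity of $H$ and the strict monotonicity in Step 2 are the ingredients I expect to exploit. If that route fails, the fallback is to work with the one-sided derivatives of $R$ and use density-regularity to show that $R$ is strictly quasi-concave, i.e. single-crossing of its derivative at any critical point.

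Combining the steps, every best response coincides in outcome with the posted-price menu $M_i^\star=\{(0,0),(1,p_i^\star)\}$.
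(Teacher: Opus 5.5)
Your Steps 1--3 are correct, and they reach the posted-price conclusion by a genuinely different route from the paper's. The paper uses Pontryagin's maximum principle. It introduces a costate and shows that the switching function $\Phi=\theta_iP_if_i+\lambda$ satisfies $\Phi'=[2f_i+\theta_if_i']P_i+U_if_iP_i'\ge 0$. It then rules out singular arcs and concludes that each optimal trajectory is bang-bang with one switch. You instead integrate by parts against the antiderivative $H$ of $P_i$, which gives a direct dominance argument: for each terminal utility $\bar u$, the pointwise-smallest feasible profile $(\theta-(\bar\theta_i-\bar u))_+$ weakly dominates every feasible $U_i$ with the same endpoint.

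Density-regularity enters at the same spot in both proofs. Where $G$ is differentiable, $\partial_u\phi=[2f_i+\theta f_i']G+uf_iG'$, which is exactly the paper's $\Phi'$. What your route buys is rigor. It is a sufficiency argument, so it needs neither existence of an optimal control for the relaxed problem nor differentiability of $P_i$. Differentiability fails for general rival menus: finite menus produce kinks in $U_j^{-1}$, and the menu $\{(q,q^2/2)\}$ gives $U_j^{-1}(u)=\sqrt{2u}$. Your argument only uses monotonicity of $G$. It also shows that, contrary to the paper's remark, $P_i$ does not block integration by parts.

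One small correction in Step 3. At an optimum, strict monotonicity gives $U_i=\underline U$ wherever $H(U_i)>0$, not merely where $H(\underline U)>0$. You need the stronger form to conclude that firm $i$ sells nothing off that set.

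The gap is Step 4, and neither of your routes can close it, because uniqueness of $p_i^\star$ is false under the stated hypotheses. Take $n=2$ and $f_i\equiv 1$ on $[0,1]$, and let the rival post a price $p_j$. Then $G(v)=F_j(p_j+v)$ and $R(p)=p\int_0^{1-p}G(v)\,dv$. Choose $f_j$ with $F_j(p_j)=\epsilon$, almost no mass on $[p_j,p_j+a]$, and the remaining mass just above $p_j+a$. This is possible even with $f_j$ non-decreasing, hence itself density-regular, if $p_j$ is large relative to $a$. Then $G$ approximates $\epsilon$ on $[0,a)$ and $1$ on $[a,\infty)$.

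In the limit, $R(p)=\epsilon p(1-p)$ for $p\ge 1-a$ and $R(p)=p\,(1-p-(1-\epsilon)a)$ for $p\le 1-a$. With $\epsilon=0.01$:
\begin{itemize}
\item the right branch peaks at $p=1/2$ with value $0.0025$;
\item the left branch peaks at $p=(1-0.99a)/2$ with value $(1-0.99a)^2/4$, about $0.0030$ at $a=0.9$ and $0.0020$ at $a=0.92$;
\item $R<0.001$ at $p=1-a$, between the two peaks.
\end{itemize}
By continuity, some $a^\star\approx 0.909$ makes both peaks global maxima, at prices about $0.05$ and $0.5$, and these induce different outcomes. So $R$ is not quasi-concave. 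Averaging the two optimal profiles gives terminal price about $0.275$, where $R\approx 0.0020$, strictly below the optimum.

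The paper's proof has the same hole. The maximum principle only shows that each optimal trajectory is a cutoff rule with its own cutoff, and nothing compares cutoffs across optimal trajectories. What your Steps 1--3 do prove is the correct weaker statement: every best response is outcome-equivalent to the posted price $\bar\theta_i-U_i(\bar\theta_i)\in\argmax R$. Uniqueness needs extra structure. For instance, suppose $f_i$ and every rival density are log-concave. Then $G=\prod_{j\ne i}F_j\circ U_j^{-1}$ is log-concave for every rival menu, since each $U_j^{-1}$ is concave and $\ln F_j$ is concave and non-decreasing. Pr\'ekopa's theorem then makes $R(p)=p\int_0^{\bar\theta_i-p}G(v)f_i(v+p)\,dv$ strictly log-concave on $\{R>0\}$.
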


When we say that every best response induces the same outcome as $M_i^\star$, we mean something stronger than equal expected revenue. The statement says that the posted-price contract $(1,p_i^\star)$ is the only contract selected by the agent with positive probability. In particular, if firm $i$ posts a menu under which multiple contracts are selected with positive probability, that menu cannot be a best response. This result is particularly powerful: it does not merely show that a profile of posted-price menus constitutes an equilibrium, but rather implies that any equilibrium must be outcome-equivalent to a posted-price profile. Consequently, we can, without loss of generality, restrict attention to posted-price menus in the subsequent equilibrium analysis.

Under monopoly, it is well established that optimal mechanisms in multidimensional screening are often intractable, requiring complicated or stochastic contracts \parencite{rochet1998ironing,manelli2007multidimensional}. At the same time, posted-price menus are the most prevalent selling format in practice, even though consumers inevitably hold multidimensional valuations for substitutable goods across firms. This raises a natural question: do firms adopt posted prices because they maximize expected revenue, or merely because they are easy to implement? Our findings bridge the gap between the theoretical analysis and this ubiquitous empirical observation. In markets that satisfy density-regularity, competition acts as a simplifying force, collapsing a complicated multidimensional optimization problem into a practical and easily implementable menu.

A key contribution of \cref{thm:br} is that it identifies the types of markets in which competition simplifies screening. The sufficient condition, density-regularity, is satisfied when consumer valuations are smoothly and gradually distributed across the population. This condition is especially natural for essential or standardized goods such as books, electronics, movie tickets, or furniture, where willingness to pay tracks income or usage intensity without abrupt jumps. Formally, any distribution with a non-decreasing density $f_i$, such as the uniform distribution, satisfies density-regularity.

We emphasize that density-regularity is essential for establishing the result. Without it, a posted-price menu need not be a best response, as we show in the next example. This contrasts with the standard monopoly screening problem, where the posted-price result holds regardless of whether Myerson's regularity condition is satisfied. Density-regularity is more likely to fail in markets with distinct consumer segments and sharply heterogeneous valuations. Classic examples include airline passengers, where business and leisure travelers have very different willingness to pay, and software licenses, where student and professional buyers form two clearly separated groups. In both cases, the valuation distribution may be bimodal, with a pronounced valley between the two segments, creating the kind of non-monotone density that can violate density-regularity. 

\begin{example}\label{ex:two-peak}
Consider a duopoly where firm 2 posts a price $p_2 \in (0, 1)$ and 
$\theta_2 \sim \mathrm{Uniform}[0, 1]$, so that firm 1's winning probability 
is $P_1(U_1) = \Pr(\theta_2 - p_2 \le U_1) = \min\{U_1 + p_2, 1\}$. Let 
$\theta_1$ follow a two-point distribution with mass $\alpha$ at $\theta_L$ 
and $(1-\alpha)$ at $\theta_H$, where $0 < \theta_L < \theta_H$ and 
$\Delta\theta = \theta_H - \theta_L$. Although a two-point distribution is 
not $C^1$, it is the natural limit of bimodal $C^1$ densities; by continuity, 
the conclusion below extends to any sufficiently sharp $C^1$ bimodal density 
approximating it. We show that the posted-price menu is suboptimal in this 
example.

We first derive the optimal posted-price menu. If firm 1 adopts a posted 
price $p_1$, it either skims the high type or pools both types.

If firm 1 charges a skimming price $p_1 \in (\theta_L, \theta_H]$, only the 
high type purchases. Assuming $\Delta\theta + p_2 < 1$, the expected revenue is
\[
ER_{\text{skim}}(p_1) = (1-\alpha)\, p_1\,(\theta_H - p_1 + p_2).
\]
This is concave in $p_1$, with interior optimum at 
$p_1^{\text{skim}} = (\theta_H + p_2)/2$, yielding
\[
ER^*_{\text{skim}} = \frac{(1-\alpha)(\theta_H + p_2)^2}{4}.
\]
If firm 1 charges a pooling price $p_1 \in [0, \theta_L]$, both types 
purchase. The expected revenue is
\[
ER_{\text{pool}}(p_1) = p_1\bigl[\alpha(\theta_L - p_1 + p_2) 
+ (1-\alpha)(\theta_H - p_1 + p_2)\bigr].
\]
Assuming $p_2 > \theta_L$, the constrained optimum is $p_1^{\text{pool}} 
= \theta_L$, yielding
\[
ER^*_{\text{pool}} = \theta_L\bigl[(1-\alpha)\Delta\theta + p_2\bigr].
\]
The optimal posted-price revenue is $ER^*_{\text{posted}} = \max\bigl(
ER^*_{\text{skim}},\, ER^*_{\text{pool}}\bigr)$. We restrict attention to 
the parameter regime where pooling dominates skimming:
\[
\theta_L\bigl[(1-\alpha)\Delta\theta + p_2\bigr] 
\;\geq\; \frac{(1-\alpha)(\theta_H + p_2)^2}{4}.
\]
Hence $ER^*_{\text{posted}} = ER^*_{\text{pool}} = \theta_L[(1-\alpha)
\Delta\theta + p_2]$, attained at $p_1 = \theta_L$.

Now consider the alternative menu
\[
\tilde{M}_1(q_L) = \bigl\{(0,\,0),\;(q_L,\;q_L\theta_L),\;
(1,\;\theta_H - q_L\Delta\theta)\bigr\}, \quad q_L \in [0,1].
\]
Under $\tilde{M}_1(q_L)$, the low type selects $(q_L, q_L\theta_L)$ 
with $U_L = 0$, and the high type selects $(1, \theta_H - q_L\Delta\theta)$ 
with $U_H = q_L\Delta\theta$. The expected revenue is
\[
ER(q_L) = \alpha\, q_L\theta_L\, p_2 
+ (1-\alpha)(\theta_H - q_L\Delta\theta)(q_L\Delta\theta + p_2).
\]
At $q_L = 1$, both non-trivial contracts collapse to $(1, \theta_L)$, 
recovering the optimal posted price $p_1 = \theta_L$, so 
$ER(1) = ER^*_{\text{posted}}$. We now show there exists $q_L^* < 1$ 
such that $ER(q_L^*) > ER(1)$.

Direct computation gives
\[
ER''(q_L) = -2(1-\alpha)(\Delta\theta)^2 < 0,
\]
so $ER$ is strictly concave on $[0,1]$. The slope at $q_L = 1$ is
\[
ER'(1) = \alpha\theta_L p_2 - (1-\alpha)\Delta\theta\,
(\theta_H + p_2 - 2\theta_L).
\]
We further restrict to parameters satisfying
\[
\alpha\theta_L p_2 \;<\; (1-\alpha)\Delta\theta\,(\theta_H + p_2 - 2\theta_L),
\]
which guarantees $ER'(1) < 0$. By strict concavity, the optimum 
$q_L^* \in (0,1)$ is interior, with $ER(q_L^*) > ER(1)$. We conclude 
that the alternative menu $\tilde{M}_1(q_L^*)$ strictly dominates every 
posted-price menu in this parameter regime. As a concrete instance, 
$(\theta_L,\, \theta_H,\, \alpha,\, p_2) = (0.3,\, 0.7,\, 0.5,\, 0.4)$ 
satisfies all parameter constraints, yielding $q_L^* = 0.75$, 
$ER(q_L^*) = 0.185$, and $ER^*_{\text{posted}} = 0.18$.
\end{example}

\subsection{Proof of \cref{thm:br} and the Role of Density-Regularity}

\Cref{ex:two-peak} illustrates that when the type distribution concentrates 
on distinct intervals, firms have an incentive to offer multi-contract menus 
in order to extract more revenue from each segment. We refer to this scenario 
as a multi-peak market. This demonstrates how a violation of density-regularity 
can cause the posted-price result of \cref{thm:br} to break down. We emphasize 
that the multi-peak market in \cref{ex:two-peak} is only one instance of its 
failure. In general, type distributions can exhibit a rich variety of features, 
and violations of the posted-price result may arise from sources other than 
multi-peak markets. On the other hand, density-regularity is only a sufficient condition for the 
posted-price result. There may exist distributions that violate 
density-regularity for which a posted-price menu remains the best response.

To provide a better understanding of the derivation of the density-regularity condition, we walk through the proof of \cref{thm:br} in this subsection.  We also benchmark our model against the classic monopoly setting studied by \textcite{myerson1981optimal}. By demonstrating that Myerson’s standard result emerges as a special case within our broader common agency framework, we highlight the generalizability and technical contribution of our approach.

Recall that firm \(i\)'s optimization problem is
\begin{equation}\label{eq:ER}
    \max_{q_i:\Theta_i \to [0,1]}
    \int_{\Theta_i}
    \Big[ \theta_i q_i(\theta_i) - U_i(\theta_i) \Big]
    P_i(U_i(\theta_i)) f_i(\theta_i) \, d\theta_i,
\end{equation}
subject to the constraint that \(q_i\) is non-decreasing.

In the standard monopoly case, \(P_i\) is identically equal to \(1\), since there are no competing firms. The objective function can then be simplified through integration by parts, reducing the problem to pointwise maximization. Under the standard regularity condition---that the virtual function $v_i(\theta_i)$ is strictly increasing---the monotonicity constraint on \(q_i\) is automatically satisfied at the optimum.

The primary technical challenge in our problem is that the allocation rule \(q_i\) enters the winning probability \(P_i\) through the agent's indirect utility \(U_i\). Because of this nested dependence, neither Myerson's standard approach nor the extreme point argument is directly applicable. We therefore adopt an optimal control approach.

Optimal control theory is well suited to this nested structure. In a standard optimal control problem, the optimization is carried out over a continuous time interval. At each point in time, the decision maker chooses a control variable, which determines the evolution of a state variable. Both the control and the state enter the objective function. The pair consisting of the control variable and the corresponding state variable is referred to as a trajectory. Optimal control theory aims to derive the trajectory that maximizes the objective function over the entire time interval.

In our setting, the type \(\theta_i\) plays a role analogous to the time variable. As \(\theta_i\) increases, the firm chooses an allocation \(q_i(\theta_i)\), which determines the evolution of the indirect utility \(U_i(\theta_i)\). We can therefore view each pair \((q_i,U_i)\) as a trajectory along \(\theta_i\) on the interval \([0,\bar{\theta}_i]\), where \(q_i(\theta_i)\in[0,1]\) is the control variable and \(U_i(\theta_i)\) is the state variable. By the envelope theorem, the state equation is
\[
U_i'(\theta_i)=q_i(\theta_i),
\qquad
U_i(0)=0.
\]
Our goal is to identify the optimal trajectory \((q_i,U_i)\) that maximizes the objective function~\eqref{eq:ER}.

The first step is to construct the corresponding Hamiltonian:
\begin{align*}
\mathcal{H}(\theta_i)
&=
\Big[ \theta_i q_i(\theta_i) - U_i(\theta_i) \Big]
P_i(U_i(\theta_i)) f_i(\theta_i)
+
\lambda(\theta_i) q_i(\theta_i) \\
&=
q_i(\theta_i)
\Big[
\theta_i P_i(U_i(\theta_i)) f_i(\theta_i)
+
\lambda(\theta_i)
\Big]
-
U_i(\theta_i)P_i(U_i(\theta_i))f_i(\theta_i),
\end{align*}
where \(\lambda(\theta_i)\) is the costate variable. The Hamiltonian \(\mathcal{H}(\theta_i)\) can be interpreted as the local optimization problem at each type \(\theta_i\), and \(\lambda(\theta_i)\) plays the role of a continuous-time Lagrange multiplier that prices the state evolution constraint.

By Pontryagin's Maximum Principle, any optimal trajectory $(q_i^\star, U_i^\star, \lambda^\star)$ must satisfy two necessary conditions. First, at almost every $\theta_i \in \Theta_i$, the control $q_i^\star(\theta_i)$ maximizes the Hamiltonian $\mathcal{H}(\theta_i)$ pointwise. Correspondingly, we define the switching function as
\[
\Phi(\theta_i) \equiv \theta_i P_i(U_i^\star(\theta_i)) f_i(\theta_i) + \lambda^\star(\theta_i).
\]
The optimal trajectory must satisfy $q_i^\star(\theta_i) = 1$ when $\Phi(\theta_i) > 0$ and $q_i^\star(\theta_i) = 0$ when $\Phi(\theta_i) < 0$, which is also referred to as the bang-bang solution. Second, the costate evolves according to
\[
\lambda^{\star\prime}(\theta_i) = P_i(U_i^\star(\theta_i))f_i(\theta_i) - \Big[\theta_i q_i^\star(\theta_i) - U_i^\star(\theta_i)\Big] P_i'(U_i^\star(\theta_i)) f_i(\theta_i).
\]

To establish that the optimal allocation rule \(q_i^\star\) is a step function, two issues must be addressed. First, the optimal control formulation solves a relaxed problem in which \(q_i^\star\) is not required to be non-decreasing. Hence, if \(\Phi(\theta_i)\) is not non-decreasing, the bang-bang solution induced by the Hamiltonian maximization may fail to be non-decreasing. In that case, the resulting trajectory would not be feasible for the original maximization problem.

Second, even if \(\Phi(\theta_i)\) is non-decreasing, there may exist an interval \([a,b]\) such that \(\Phi(\theta_i)=0\) for all \(\theta_i\in[a,b]\). On such an interval, any \(q_i^\star(\theta_i)\in[0,1]\) satisfies the Hamiltonian maximization condition, leaving the structure of the optimal allocation rule undetermined. We show in the appendix that no such interval can exist once the monotonicity of the switching function is established. Hence, this issue does not affect the step-function structure of the optimal allocation.

To address the monotonicity issue, we differentiate the switching function with respect to \(\theta_i\) and substitute the costate equation:
\begin{align*}
    \Phi'(\theta_i)
    &=
    \Big[
    P_i(U_i(\theta_i)) f_i(\theta_i)
    + \theta_i P_i'(U_i(\theta_i))q_i(\theta_i) f_i(\theta_i)
    + \theta_i P_i(U_i(\theta_i)) f_i'(\theta_i)
    \Big]
    + \lambda'(\theta_i) \\
    &=
    \Big[
    2f_i(\theta_i) + \theta_i f_i'(\theta_i)
    \Big] P_i(U_i(\theta_i))
    + U_i(\theta_i) f_i(\theta_i) P_i'(U_i(\theta_i)).
\end{align*}
Since \(P_i \ge 0\), \(P_i' \ge 0\), \(U_i \ge 0\), and \(f_i>0\), the second term is non-negative. Therefore, a sufficient condition for \(\Phi'(\theta_i)\ge 0\) is
\begin{equation}\label{eq:density}
    2f_i(\theta_i) + \theta_i f_i'(\theta_i) > 0,
\end{equation}
which is exactly the density-regularity condition. The condition \eqref{eq:density} also implies that the density-weighted virtual value \(v_i(\theta_i)f_i(\theta_i)\) is increasing.

We claim that the monotonicity of the density-weighted virtual valuation 
\(v_i(\theta_i)f_i(\theta_i)\) is stronger than what is strictly required to solve the screening problem via the optimal control approach. To see this, it is useful to compare our setting with the standard monopoly screening problem. In the monopoly case, where \(P_i\equiv 1\) and \(P_i'\equiv 0\), the costate equation simplifies to
\[
\lambda^{\star\prime}(\theta_i)=f_i(\theta_i).
\]
Combined with the transversality condition \(\lambda^\star(\bar{\theta}_i)=0\), integration yields
\[
\lambda^\star(\theta_i)=F_i(\theta_i)-1.
\]
The switching function therefore reduces to
\[
\Phi(\theta_i)
=
\theta_i f_i(\theta_i)+F_i(\theta_i)-1
=
v_i(\theta_i)f_i(\theta_i).
\]
To conclude that the optimal allocation rule is a step function, we do not need \(\Phi(\theta_i)\) itself to be increasing; we only need it to cross zero exactly once, moving from negative to positive. Since \(f_i>0\), this single-crossing property is guaranteed by the strict monotonicity of \(v_i(\theta_i)\), which is precisely Myerson's classical regularity condition. In the optimal control formulation, the allocation rule jumps at the point where \(\Phi(\theta_i)=0\). In the monopoly case, this is exactly the type satisfying \(v_i(\theta_i)=0\), which coincides with the cutoff derived from Myerson's approach.

We adopt the strict monotonicity of the density-weighted virtual valuation \(v_i(\theta_i) f_i(\theta_i)\) because it provides a clean sufficient condition for the result in the competitive setting. Once \(P_i\) enters the switching function, it becomes difficult to characterize the precise necessary condition for \(\Phi(\theta_i)\) to be single-crossing in terms of primitive properties of \(f_i\). We view giving up the precise necessary condition in exchange for a clean and tractable assumption as a worthwhile trade-off.

While establishing the posted-price result in the monopoly screening problem does not require any regularity condition on type distribution, density-regularity plays an essential role in our setting. Without it, the best response menu can be complicated and difficult to characterize, particularly because we place no restriction on the opponents' menu profile \(M_{-i}\) in \cref{thm:br}.

\subsection{Equilibrium Analysis}

By \cref{thm:br}, any best response is outcome-equivalent to a posted-price menu, so for the purpose of equilibrium analysis we can restrict attention to the price-posting game, which is introduced as follows: The strategy 
space for each firm $i$ simplifies to $\Theta_i = [0,\bar{\theta}_i]$, where 
each $p_i \in \Theta_i$ represents the price posted by firm $i$. A strategy 
profile is therefore represented by the vector $\bm{p} = (p_1,\dots,p_n) \in 
\prod_{i=1}^n \Theta_i$. The expected revenue of firm $i$ under the profile 
$(p_i, p_{-i})$ is
\[
ER_i(p_i, p_{-i}) = p_i \int_{p_i}^{\bar{\theta}_i} 
\prod_{j \ne i} F_{j}(\theta_i - p_i + p_{j})\, f_i(\theta_i)\, d\theta_i 
= p_i \cdot Q_i(p_i, p_{-i})\footnote{Throughout this section, we extend each \(F_j\) by setting \(F_j(x)=0\) for \(x<0\) and \(F_j(x)=1\) for \(x>\bar\theta_j\).},
\]
where $Q_i(p_i, p_{-i})$ denotes the expected demand for firm $i$'s good 
under $(p_i, p_{-i})$. A strategy profile $(p_1^\star,\dots,p_n^\star)$ 
constitutes an equilibrium if, for all $1 \le i \le n$,
\[
ER_i(p_i^\star, p_{-i}^\star) \ge ER_i(p_i, p_{-i}^\star), 
\quad \forall\, p_i \in \Theta_i.
\]
We define the best response correspondence in the price-posting game as
\[
BR_i(p_{-i}) = \arg\max_{p_i \in \Theta_i} ER_i(p_i, p_{-i}).
\]
By \cref{thm:br}, under density-regularity, there exists a unique $p_i^\star$ 
maximizing firm $i$'s expected revenue for any fixed $p_{-i}$. Hence 
density-regularity ensures that $BR_i$ is single-valued, yielding the 
following corollary.

\begin{cor}\label{cor:eqm}
Assume that $f_i$ is density-regular for all $1 \le i \le n$. There exists 
a posted-price menu profile that constitutes an equilibrium in the menu-posting game.
\end{cor}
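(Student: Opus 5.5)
We reduce the existence question in the menu-posting game to existence of a pure-strategy equilibrium in the price-posting game. Then we lift it back using \cref{thm:br}. The lifting step goes as follows. Suppose $\bm p^\star=(p_1^\star,\dots,p_n^\star)$ is an equilibrium of the price-posting game, and consider the menu profile $\bm M^\star$ with $M_i^\star=\{(0,0),(1,p_i^\star)\}$. Fix $i$ and any deviation $\tilde M_i\in\mathcal A$. Against $M_{-i}^\star$, \cref{thm:br} says every best response of firm $i$ in $\mathcal A$ is outcome-equivalent to some posted-price menu $\{(0,0),(1,\hat p_i)\}$. To apply this, we need a best response to exist. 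Since we restrict attention to outcomes, it suffices to note that the supremum of $ER_i(\cdot,M^\star_{-i})$ over $\mathcal A$ is attained. Attainment follows from the optimal control formulation in the proof of \cref{thm:br}: the set of non-decreasing $q_i:\Theta_i\to[0,1]$ is compact under pointwise convergence by Helly's selection theorem, and the revenue functional is upper semicontinuous there. Alternatively, we take the theorem's conclusion as already guaranteeing a maximizer. Hence
\[
ER_i(\tilde M_i,M^\star_{-i})\le \sup_{M_i}ER_i(M_i,M^\star_{-i})=\max_{p_i\in\Theta_i}ER_i(p_i,p^\star_{-i})=ER_i(p_i^\star,p^\star_{-i}).
\]
Here $ER_i(p_i,p_{-i})$ is exactly the expected revenue of posting $\{(0,0),(1,p_i)\}$ against $M^\star_{-i}$, since $U_j(\theta_j)=(\theta_j-p_j)^+$. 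So no deviation is profitable.

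It remains to show that the price-posting game has a pure equilibrium. We apply Kakutani's (indeed Brouwer's) fixed-point theorem to the best-response map $BR=(BR_1,\dots,BR_n)$ on the compact convex set $\prod_i\Theta_i$. By \cref{thm:br}, $BR_i(p_{-i})$ is a singleton for every $p_{-i}$, so $BR$ is a function. Brouwer's theorem then applies once $BR$ is continuous. For continuity, first note that
\[
ER_i(p_i,p_{-i})=p_i\int_{p_i}^{\bar\theta_i}\prod_{j\ne i}F_j(\theta_i-p_i+p_j)f_i(\theta_i)\,d\theta_i
\]
is jointly continuous in $(p_i,p_{-i})$. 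This holds because each extended $F_j$ is continuous (it is continuous at $0$ and at $\bar\theta_j$, with $F_j(0)=0$ and $F_j(\bar\theta_j)=1$) and bounded, and the integration limits vary continuously, so dominated convergence applies. Berge's maximum theorem then gives that $BR_i$ is upper hemicontinuous with nonempty compact values. A single-valued upper hemicontinuous correspondence is a continuous function. A fixed point $\bm p^\star=BR(\bm p^\star)$ is therefore an equilibrium of the price-posting game.

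The main obstacle is the careful justification that the uniqueness in \cref{thm:br} indeed covers deviations in the full menu space, and that the best response over posted prices is well-defined for \emph{every} $p_{-i}$. This includes degenerate cases, such as opponents pricing at $\bar\theta_j$ (which makes $U_j\equiv0$ and reduces firm $i$ to a monopolist). In those cases we must check that the maximizer is still unique, so that $BR$ is a function rather than a correspondence. If there is doubt there, we fall back on Kakutani: the argmax correspondence of a continuous function on the compact interval $\Theta_i$ is upper hemicontinuous with nonempty compact values. It is also convex-valued once we use that, by \cref{thm:br}, it is a singleton. Convexity is the only place uniqueness is genuinely needed, since $ER_i$ need not be quasiconcave in $p_i$ a priori.
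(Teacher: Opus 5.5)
Your proposal is correct and follows essentially the same route as the paper. Berge's theorem together with single-valuedness from \cref{thm:br} makes $BR$ a continuous function, Brouwer then gives a price-posting equilibrium, and \cref{thm:br} lifts it to the menu-posting game. Your added check that a best response over all menus actually exists (via Helly's selection theorem and semicontinuity of the revenue functional) makes explicit a step the paper leaves implicit in its lifting argument.
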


The proof is straightforward. By Berge's maximum theorem, the best response correspondence is upper hemicontinuous. We have that \(BR_i\) is single-valued and continuous. Then, we can apply Brouwer's fixed point theorem and yield 
the existence of an equilibrium in the price-posting game, which is also an 
equilibrium in the menu-posting game.

Having established existence, the next natural question is whether the 
equilibrium is unique. Uniqueness provides a precise prediction of market 
outcomes and a tractable environment for comparative statics. Hence, we introduce 
a sufficient condition for uniqueness of equilibrium in the following proposition.

\begin{proposition}\label{prop:lattice}
Assume that $f_i$ is density-regular and log-concave for all $1 \le i \le n$. There exists a posted-price menu profile that induces the unique equilibrium outcome in the menu-posting game.
\end{proposition}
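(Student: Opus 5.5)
The plan is to split the statement into existence and uniqueness. Existence is immediate from \cref{cor:eqm}. For uniqueness, \cref{thm:br} implies that in any equilibrium of the menu-posting game each firm's menu is outcome-equivalent to a posted-price menu $\{(0,0),(1,p_i)\}$. So any equilibrium outcome is the outcome of an equilibrium $\bm p$ of the price-posting game, and it suffices to show that the price-posting game has at most one equilibrium. By \cref{thm:br}, $BR_i(p_{-i})$ is a singleton for every $p_{-i}$, and I will use this uniqueness repeatedly.

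\textbf{Step 1 (log-supermodularity).} Write
\[
\log ER_i = \log p_i + \log Q_i(p_i,p_{-i}).
\]
Substituting $x=\theta_i-p_i$ shows that $Q_i$ depends only on the differences $p_j-p_i$:
\[
Q_i=\int_0^{\bar\theta_i-p_i}\prod_{j\ne i}F_j(x+p_j)\,f_i(x+p_i)\,dx .
\]
I will show that $\partial^2 \log Q_i/\partial p_i\partial p_j\ge 0$ for every $j\ne i$, so that $ER_i$ is log-supermodular in $(p_i,p_j)$. The tools are the log-concavity of $f_i$ and of each $F_j$ (the latter inherited from log-concavity of $f_j$), together with preservation of log-supermodularity and log-concavity under integration (Prékopa, Karlin--Rinott). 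Concretely, $\partial_{p_i}\log Q_i$ is a conditional expectation of $(\log f_i)'$ minus a boundary term. I need to show that raising $p_j$ shifts the conditioning weight $\prod_j F_j(x+p_j)$ toward regions where this expression is larger. Log-concavity of $F_j$ makes the ratio $F_j(x+p_j')/F_j(x+p_j)$ monotone in $x$ for $p_j'>p_j$, which gives an MLR shift. Log-concavity of $f_i$ makes $(\log f_i)'$ monotone. Combining these via a covariance (Chebyshev) inequality should give the sign. I expect this step to be the main obstacle, mainly keeping the directions of monotonicity consistent and handling the truncation boundaries $x\ge 0$ and $x\le \bar\theta_i-p_i$.

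\textbf{Step 2 (shift argument).} Suppose $\bm p\ne\bm p'$ are both equilibria. Relabel if necessary so that some coordinate has $p_i>p_i'$, and choose $i$ maximizing $p_i-p'_i=\delta>0$. Then $p_{-i}\le p'_{-i}+\delta$ componentwise. By shift invariance of $Q_i$ and the first-order (or optimality) condition at $p'$,
\[
\partial_{p_i}\log ER_i\big(p'_i+\delta,\,p'_{-i}+\delta\big)=\frac{1}{p'_i+\delta}+\partial_{p_i}\log Q_i(p'_i,p'_{-i})<\frac{1}{p'_i}+\partial_{p_i}\log Q_i(p'_i,p'_{-i})\le 0 .
\]
The same inequality holds at every $p_i\ge p_i'+\delta$, since the shifted objective differs from the original only through $\log p_i$, whose derivative shrinks. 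Step 1 shows that lowering opponents' prices from $p'_{-i}+\delta$ to $p_{-i}$ weakly decreases this derivative. Hence $\log ER_i(\cdot,p_{-i})$ is strictly decreasing at and above $p'_i+\delta=p_i$. Together with the single-valued best response from \cref{thm:br}, this contradicts $p_i\in BR_i(p_{-i})$.

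Corner cases need separate care: $p_i'=0$ (which is excluded since zero revenue is never a best response), and prices at $\bar\theta_i$ or kinks where the support truncation binds. I plan to handle these by replacing derivatives with one-sided derivatives or with a direct comparison of revenues. With these cases covered, the equilibrium price vector is unique, and therefore so is the equilibrium outcome.
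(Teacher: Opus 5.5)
\textbf{Gap in Step 2: $Q_i$ is not shift-invariant.} Step 2 depends on the claim that $Q_i$ depends only on the differences $p_j-p_i$, and this claim is false. In $\theta_i$-coordinates the integrand $\prod_{j\ne i}F_j(\theta_i-p_i+p_j)f_i(\theta_i)$ does depend only on the differences. The lower limit $\theta_i\ge p_i$ does not: it comes from the walk-away option, whose utility $0$ does not move when all prices rise. Hence
\[
Q_i(p_i+\delta,p_{-i}+\delta)=Q_i(p_i,p_{-i})-\int_{p_i}^{p_i+\delta}\prod_{j\ne i}F_j(\theta_i-p_i+p_j)\,f_i(\theta_i)\,d\theta_i .
\]
Your own $x$-formula shows the same thing: after a uniform shift, the integration range loses $x\in[0,\delta)$. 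For example, with $n=2$ and uniform types, $Q_1(0,0)=1/2$ but $Q_1(\delta,\delta)=(1-\delta^2)/2$.

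So the displayed equality in Step 2 fails. So does the claim that the shifted objective differs from the original only through $\log p_i$. This is not cosmetic. The dropped boundary term is exactly where the paper uses density-regularity in the uniqueness argument: it is the term $\prod_{k\ne i}F_k(p_k)\,[2f_i(p_i)+p_if_i'(p_i)]$ in \cref{lemma:contraction}. Apart from single-valuedness of $BR_i$, your Step 2 never uses density-regularity. Log-concavity alone does not sign the relevant derivative, since a log-concave $f_i$ can still fall sharply.

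\textbf{How to repair it.} The strict inequality you want is true, but it has to be proved. Write $\Pi=\prod_{j\ne i}F_j(p_j)$, $C=\Pi f_i(p_i)$ and $S=\sum_{j\ne i}\partial Q_i/\partial p_j\ge 0$. Along the diagonal one has $\frac{d}{ds}Q_i(p_i+s,p_{-i}+s)=-C$, and also $\partial_{p_i}Q_i=-C-S$. These give
\[
\frac{d}{ds}\,\partial_{p_i}\log ER_i(p_i+s,p_{-i}+s)\Big|_{s=0}=-\Bigl[\frac{1}{p_i^2}+\frac{\Pi f_i'(p_i)}{Q_i}+\frac{C(C+S)}{Q_i^2}\Bigr].
\]
When $\Pi>0$, density-regularity gives $\Pi f_i'(p_i)>-2C/p_i$. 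The bracket then exceeds $\bigl(\frac{1}{p_i}-\frac{C}{Q_i}\bigr)^2+\frac{CS}{Q_i^2}\ge 0$. When $\Pi=0$ the bracket equals $1/p_i^2>0$.

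Integrating in $s$ gives $\psi'(x)<\phi'(x)$, where $\psi(x)=\log ER_i(x+\delta,p'_{-i}+\delta)$ and $\phi(x)=\log ER_i(x,p'_{-i})$.

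Your further claim that the inequality holds ``at every $p_i\ge p_i'+\delta$'' would need $\phi'\le 0$ on all of $[p_i',\bar\theta_i]$. Global optimality of $p_i'$ does not give this. Instead, use that $\psi-\phi$ is strictly decreasing. Then for $x>p_i'$ we get $\psi(x)-\psi(p_i')<\phi(x)-\phi(p_i')\le 0$. At the interior optimum, $\psi'(p_i')<\phi'(p_i')=0$. Together these give $BR_i(p'_{-i}+\delta)<p_i'+\delta$.

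Step 1 supplies monotonicity of $BR_i$. It is sound: it is the paper's \cref{lemma:supermodular} with the cross-partial taken in the other order. Your boundary term at $\bar\theta_i$ is non-increasing in $p_j$ because $f_j/F_j$ is non-increasing. Combining, $p_i=BR_i(p_{-i})\le BR_i(p'_{-i}+\delta)<p_i$, a contradiction.

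With this repair your route is a clean alternative to the paper's. It compares any two equilibria directly, so it needs neither Tarski's largest and smallest equilibria nor the paper's unproved absolute continuity of $BR_k$ along a path.
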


The proof consists of two parts. We first apply Tarski's fixed point theorem to establish a lattice structure for the set of equilibria. We then show that the largest and smallest equilibria must coincide, yielding uniqueness.

To apply Tarski's fixed point theorem, we require the best response function to be monotone. That is, for any two opponents' price profiles \(p_{-i}\) and \(p_{-i}'\) such that \(p_{-i}\ge p_{-i}'\), it must hold that
\[
BR_i(p_{-i})\ge BR_i(p_{-i}').
\]
Intuitively, this condition says that when opponents raise their prices, a firm's best response is to weakly increase its own price.

However, this strategic complementarity does not hold automatically. Consider a duopoly where firm \(i\)'s opponent, firm \(j\), controls a substantial segment of consumers who strongly prefer \(j\)'s product. As long as firm \(j\) remains competitively active, firm \(i\) may optimally concede this segment and choose not to compete aggressively for these consumers. If firm \(j\) effectively exits the market by setting a prohibitive price, firm \(i\) may instead find it profitable to lower its price in order to capture the consumers abandoned by firm \(j\).

This potential for non-monotonic behavior explains why, in \cref{prop:lattice}, we impose log-concavity of the density \(f_i\). In the appendix, we show that log-concavity rules out such anomalies and guarantees that the best response function is monotone.

Tarski's fixed point theorem then implies not only the existence of an equilibrium but also that the set of equilibria forms a complete lattice. Hence, there exist a largest equilibrium price vector \(\bar{\bm p}\) and a smallest equilibrium price vector \(\underline{\bm p}\). To establish uniqueness, suppose for contradiction that \(\bar{\bm p}\ne \underline{\bm p}\), and let \(k\) be a firm such that
\[
\bar p_k-\underline p_k
=
\max_i(\bar p_i-\underline p_i)>0.
\]
Since both \(\bar{\bm p}\) and \(\underline{\bm p}\) are fixed points of the best response function, we must have
\[
\bar p_k-\underline p_k
=
BR_k(\bar p_{-k})-BR_k(\underline p_{-k}).
\]
However, in the appendix, we also show that
\[
BR_k(\bar p_{-k})-BR_k(\underline p_{-k})
<
\bar p_k-\underline p_k,
\]
with the assumption of density-regularity. Therefore, assuming that \(\bar{\bm p}\ne \underline{\bm p}\) will yield a contradiction. We conclude that the largest and smallest equilibria must coincide, and the equilibrium outcome in the menu-posting game is unique.

We conclude this section with a discussion of the distributional assumptions. The two conditions play distinct roles: log-concavity ensures the lattice structure of equilibria by guaranteeing monotonicity of the best response, while density-regularity ensures the posted-price structure of best responses and plays a key role in the uniqueness argument. It is worth noting that neither condition implies the other. This contrasts with the classical monopoly screening framework of \textcite{myerson1981optimal}, where the standard regularity condition---monotonicity of the virtual value---is implied by log-concavity of the density.
Nevertheless, many commonly used distributions on bounded supports satisfy both conditions simultaneously, including the uniform distribution and the Beta$(\alpha,1)$ family for any $\alpha\ge 1$.

The uniqueness result is particularly valuable for applied work. When the equilibrium is unique, the model delivers sharp comparative statics and unambiguous policy predictions. In contrast, models admitting multiple equilibria require additional equilibrium selection arguments, which can undermine predictive power. Researchers adopting posted-price competition can invoke this result to guarantee a unique equilibrium, provided they verify that the type densities are density-regular and log-concave.

    \section{Non-exclusive Contracting}\label{sec:n}

In the benchmark model, the assumption of exclusive contracting naturally captures markets for indivisible goods, such as cars, movies, or hotel rooms, where the agent must fulfill her entire demand through a single firm. In this section, we allow contracts to be non-exclusive to study markets for divisible goods, such as coffee, meals, or office supplies. In these markets, an agent can split her fixed demand across multiple firms; for instance, a consumer requiring one gallon of coffee per month could purchase half a gallon from Starbucks and the remaining half from Blue Bottle. Furthermore, in practice, we often observe firms in these markets offering multi-contract menus that go beyond simple posted prices, such as coffee shops providing different drink sizes and fast food restaurants offering options of different meal sizes. This empirical reality motivates our subsequent inquiry: can a posted-price menu profile still sustain an equilibrium when contracts are non-exclusive? For analytical tractability, we focus on the duopoly case while adopting the core environment from the previous section.

\subsection{Menus with Finite Contracts}

We maintain the notation from \cref{sec:model}, specializing to the case \(n=2\). Each firm \(i\in\{1,2\}\) offers a menu of contracts \(M_i\subseteq [0,1]\times\mathbb R_{\ge0}\). We restrict attention to menus consisting of finitely many contracts. Thus, each menu takes the form
\[
M_i=\{(q_{i\ell},t_{i\ell})\}_{\ell=1}^{k_i}
\]
for some \(k_i\le K\), where \(K\in\mathbb N\) is the maximum number of contracts that a firm can include in its menu. To ensure that the agent can walk away or contract exclusively with a single firm, we also require each menu to contain the walk-away contract \((0,0)\). Let
\[
\mathcal A^K
=
\left\{
M\subseteq [0,1]\times\mathbb R_{\ge0}:
(0,0)\in M,\ |M|\le K
\right\}
\]
denote the set of feasible finite menus.

After observing the menu profile \(\bm M=(M_1,M_2)\), an agent with type \(\bm\theta\) simultaneously chooses one contract from each menu to maximize her quasilinear utility, subject to a unit-capacity constraint:
\[
U^{\bm M}(\bm\theta)
=
\max_{\substack{(q_1,t_1)\in M_1,\\ (q_2,t_2)\in M_2}}
\left\{
\theta_1q_1+\theta_2q_2-t_1-t_2
\right\}
\quad
\text{s.t.}
\quad
q_1+q_2\le1.
\]

Let \((q_i^{\bm M}(\bm\theta),t_i^{\bm M}(\bm\theta))\) denote the contract that an agent of type \(\bm\theta\) chooses from menu \(M_i\) under profile \(\bm M\). Whenever the agent has multiple utility-maximizing feasible contract pairs, we assume that she selects one according to an arbitrary fixed measurable tie-breaking rule. This guarantees that \((q_i^{\bm M}(\bm\theta),t_i^{\bm M}(\bm\theta))\) is well-defined.\footnote{
The results below are invariant to the particular tie-breaking rule. We only require that the rule be fixed ex ante, so that for each menu profile \(\bm M\) and type \(\bm\theta\), the selected contracts \((q_i^{\bm M}(\bm\theta),t_i^{\bm M}(\bm\theta))\) are well-defined. All selected contracts and expected revenues are evaluated relative to this fixed rule.
}

A menu profile \(\bm M\) constitutes an equilibrium if, for each firm \(i\in\{1,2\}\),
\[
\int_{\bm\Theta} t_i^{\bm M}(\bm\theta)\bm f(\bm\theta)\,d\bm\theta
\ge
\int_{\bm\Theta} t_i^{(\tilde M_i,M_{-i})}(\bm\theta)\bm f(\bm\theta)\,d\bm\theta,
\quad
\forall\,\tilde M_i\in\mathcal A^K.
\]
We refer to this setting as the \textbf{duopoly non-exclusive menu-posting game}. The existence of the equilibrium follows immediately from the result in the previous section.

\begin{cor}\label{cor:non}
Assume that \(f_i\) is density-regular for both \(i\in\{1,2\}\). There exists a posted-price menu profile that constitutes an equilibrium in the duopoly non-exclusive menu-posting game.
\end{cor}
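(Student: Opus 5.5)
We take the posted-price equilibrium $\bm p^\star=(p_1^\star,p_2^\star)$ of the exclusive game from \cref{cor:eqm}. Density-regularity of both densities gives its existence. We then show that the profile $M_i^\star=\{(0,0),(1,p_i^\star)\}$, $i=1,2$, remains an equilibrium once contracts are non-exclusive.

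First, observe that at $\bm M^\star$ the non-exclusive game reproduces the exclusive outcome. Both non-trivial contracts have $q=1$, so the capacity constraint $q_1+q_2\le1$ prevents the agent from combining them. Her feasible bundles are therefore walk away, buy $(1,p_1^\star)$ alone, or buy $(1,p_2^\star)$ alone. Hence $U^{\bm M^\star}(\bm\theta)=\max\{0,\theta_1-p_1^\star,\theta_2-p_2^\star\}$, and each firm's revenue coincides with $ER_i(\bm p^\star)$ from the exclusive game, up to measure-zero ties.

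Second, fix firm $1$ (the case of firm $2$ is symmetric) and an arbitrary deviation $\tilde M_1\in\mathcal A^K$. The key claim is that, against the posted-price menu $M_2^\star$, the non-exclusive revenue of $\tilde M_1$ is at most the revenue of some menu $\hat M_1\in\mathcal A$ in the exclusive game against $M_2^\star$. Given the claim, \cref{thm:br} together with the fact that $p_1^\star=BR_1(p_2^\star)$ yields
\[
\text{Rev}_1(\tilde M_1,M_2^\star)\le ER_1(\hat M_1,M_2^\star)\le ER_1(\bm p^\star),
\]
so no profitable deviation exists.

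To prove the claim, note that any purchase of $(q_1,t_1)$ with $q_1>0$ rules out firm $2$'s contract $(1,p_2^\star)$, unless $q_1=0$. So when the agent buys $(q_1,t_1)$ from $\tilde M_1$, she is necessarily pairing it with $(0,0)$ from firm $2$. She therefore compares $\max_{(q_1,t_1)\in\tilde M_1}(q_1\theta_1-t_1)$ with $\theta_2-p_2^\star$ and with $0$, exactly as in the exclusive game. Taking $\hat M_1=\tilde M_1$, which is finite and hence compact, the allocation chosen by every type and the transfer firm $1$ collects are identical in the two games, except on ties. This makes the inequality an equality. Tie-breaking differences occur either on a measure-zero set or on flat regions of $U_1$, where $t_1=0$, as argued in \cref{sec:model}. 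The same reasoning applies to firm $2$ deviating against $M_1^\star$.

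We expect the main obstacle to be the careful handling of ties and the fixed measurable tie-breaking rule in the non-exclusive game. The concern is a deviation that exploits indifference to obtain a contract combination in which firm $1$ earns positive transfer on a positive-measure set not achievable in the exclusive game. To rule this out, we will argue that any positive-measure indifference set for firm $1$ forces $U_1$ to be flat there, and hence forces zero transfer, so revenue is unaffected.
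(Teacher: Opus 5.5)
Your proposal is correct and follows the paper's proof: take the price-posting equilibrium supplied by \cref{cor:eqm}, use the capacity constraint to reduce any deviation against a posted-price opponent to the same deviation in the exclusive game, and rule it out with \cref{thm:br}. The only difference is that you spell out the reduction and the tie-handling, which the paper leaves to its brief remark before the corollary; with finite menus the ties are harmless, since every relevant tie set is a finite union of lines in $\bm\Theta$ and hence null.
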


In the duopoly case, if one firm posts a posted-price menu, then any positive-quantity contract chosen from the other firm rules out simultaneously choosing the posted-price contract, since the capacity constraint would be violated. In this sense, the agent's comparison between a positive-quantity contract from one firm and the opponent's posted-price contract resembles the exclusive-contracting problem. For the rest of this subsection, we investigate further properties of equilibrium. Before presenting the main results, we introduce two useful definitions.

Given the joint distribution of types \(\bm F\) and a menu profile \(\bm M\), certain contracts may never be chosen by any agent type. Removing these latent contracts from the menus does not alter the expected payoffs of the agent or the firms. This observation motivates the following definition.

\begin{definition}
Given a menu profile \(\bm M=(M_1,M_2)\), the \textbf{active menu} of firm \(i\), denoted by \(M_i^A(\bm M)\subseteq M_i\), is the subset of contracts in \(M_i\) that are chosen by a strictly positive measure of agent types \(\bm\theta\in\bm\Theta\), given the joint distribution \(\bm F\).
\end{definition}

A contract \((q_i,t_i)\) is active under \(\bm M\) if \((q_i,t_i)\in M_i^A(\bm M)\). Since inactive contracts are never selected, they do not affect expected revenues. Hence, for equilibrium analysis, we can restrict attention to active menus.

Next, for a given menu profile $\bm{M} = (M_1, M_2)$, an agent of type $\bm{\theta}$ might select a pair of contracts that does not fulfill her total demand, meaning $q_1^{\bm{M}}(\bm{\theta}) + q_2^{\bm{M}}(\bm{\theta}) < 1$. We can characterize the class of menu profiles under which the agent always selects contracts that fulfill her exact demand. The corresponding definition is introduced below.

\begin{definition}
A menu profile \(\bm M=(M_1,M_2)\) is \textbf{demand-matching} if, for every firm \(i\in\{1,2\}\) and every active contract \((q_i,t_i)\in M_i^A(\bm M)\), there exists a complementary active contract \((q_{-i},t_{-i})\in M_{-i}^A(\bm M)\) that is chosen simultaneously by some type \(\bm\theta\) and satisfies
$q_i+q_{-i}=1$.

\end{definition}

A menu profile that is not demand-matching gives firms an incentive to revise the corresponding active contracts. By appropriately increasing both the quantity and the transfer, a firm can induce the relevant agents to choose the revised contract and thereby increase its expected revenue. Formalizing this intuition yields the following result.

\begin{proposition}\label{prop:demand}
Every equilibrium menu profile in the duopoly non-exclusive menu-posting game is demand-matching.
\end{proposition}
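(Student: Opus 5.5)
We argue by contradiction and use a profitable local deviation. Suppose \(\bm M=(M_1,M_2)\) is an equilibrium that is not demand-matching. Then some firm, say firm \(1\), has an active contract \((q_1,t_1)\in M_1^A(\bm M)\) such that no type choosing it pairs it with an active contract \((q_2,t_2)\) satisfying \(q_1+q_2=1\). Because menus are finite, the set of active pairs is finite. Hence there is some active partner \((q_2,t_2)\) for which the set
\[
S=\{\bm\theta:\ (q_1^{\bm M}(\bm\theta),q_2^{\bm M}(\bm\theta))=(q_1,q_2)\}
\]
has positive measure, and \(q_1+q_2<1\). Write \(s=1-q_1-q_2>0\) for the slack. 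We will enlarge firm \(1\)'s contract into the slack, and raise its transfer by slightly less than the value the buyer gets from the extra quantity.

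For \(\delta\in(0,s]\) and \(\varepsilon>0\), consider the deviation menu \(\tilde M_1\). It is obtained from \(M_1\) by replacing \((q_1,t_1)\) with \((q_1+\delta,\ t_1+\delta c-\varepsilon)\), where \(c\) is chosen just below the typical \(\theta_1\) on \(S\). A cleaner route is to keep the original contract and add the new one, which respects \(|M_1|\le K\) only if we replace. We therefore replace, choosing \(c\) so that the revised contract is still attractive.

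The key steps are as follows.

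\begin{enumerate}
\item Since \(S\) has positive measure and \(\bm f>0\), we find a positive-measure compact subset \(S'\subseteq S\) and a number \(c>0\) with \(\theta_1>c\) on \(S'\). The case where \(S\) lies only in \(\{\theta_1=0\}\) is null. On \(S'\), the agent's original choice strictly dominates every alternative pair by a margin \(\eta>0\), after shrinking \(S'\). This uses finiteness of the menus and the fact that ties between distinct pairs occur only on lower-dimensional sets, or on sets where the pairs are payoff-equivalent. In the latter case we pick among them the one the tie-breaking rule selects.
\item For types in \(S'\), the revised bundle \((q_1+\delta,q_2)\) is feasible and yields utility at least the old utility plus \(\delta(\theta_1-c)+\varepsilon>\) the old utility. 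So these types still buy the revised contract, and they pay \(t_1+\delta c-\varepsilon\). Firm \(1\) gains at least \(\delta c-\varepsilon>0\) per such type when \(\varepsilon<\delta c\).
\item We bound the losses. The revision changes payoffs of any bundle containing the revised contract by at most \(\delta\bar\theta_1+\varepsilon\). So only types whose choice was within \(\delta\bar\theta_1+\varepsilon\) of indifference between some pair involving the old contract and some other pair can switch. The measure of such types tends to \(0\) as \(\delta,\varepsilon\to0\), because pairwise indifference sets are measure-zero hyperplanes in \(\bm\theta\) and the density is bounded. Each switch costs firm \(1\) at most \(\max_\ell t_{1\ell}+\bar\theta_1\), a bounded amount. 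Types that formerly chose the old contract with other partners either keep it and pay more, or switch in a near-indifference set.
\end{enumerate}

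Combining these, the revenue change is at least
\[
(\delta c-\varepsilon)\Pr(S') - C\,\Pr(\text{near-indifference band of width }\delta\bar\theta_1+\varepsilon).
\]
The band probability is \(O(\delta+\varepsilon)\), so this first comparison is inconclusive to first order. This is the main obstacle. To overcome it, set \(\varepsilon=\delta c/2\) and use the margin \(\eta\) from step~1 more carefully. Choose \(c\) to be a value below which \(\theta_1\) falls for only a small fraction of \(S\), and bound the set of switching types. Types outside \(S\) that switch must lie within \(O(\delta)\) of a boundary. Any switch toward the revised contract is a gain, so the real losses come only from types who abandon the old contract, or who move to lower-revenue combinations because the revised contract became more expensive for them. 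Those are types with \(\theta_1<c\) for whom the price increase \(\delta c-\varepsilon\) exceeds the added value. We replace the old contract only if this set is small relative to \(S'\), and otherwise use a contract \(c\) close to \(0\).

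If this comparison remains delicate, we use a fallback. We exploit that menus may have up to \(K\) contracts by having the deviating firm drop an inactive contract, which exists when \(|M_1|<K\), and add the enlarged contract alongside the original. Then no type's option set shrinks, so no type can lose firm-\(1\) revenue except by choosing the new contract. Choosing it yields revenue \(t_1+\delta c-\varepsilon\) from types who previously paid at most the old bundle's transfers. This gives a strict gain and contradicts equilibrium, which establishes demand-matching.
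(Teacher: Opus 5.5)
Your proposal does not close the gap you flag as the main obstacle, and the two sentences meant to close it are false.

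\textbf{The uncontrolled inflow.} Fix $\theta_2$ and look along that slice. By single crossing in $(q_1,\theta_1)$, types just to the right of the slack cell $S$ buy a weakly larger firm-1 quantity. Typically they pay firm 1 more, for instance its posted-price contract $(1,p_1)$ bought alone. The edge separating them from $S$ lies at a value of $\theta_1$ above every $S$-type on that slice, so it lies above $c$ wherever your enlarged contract attracts $S$-types. Offering $(q_1+\delta,\,t_1+\delta c-\varepsilon)$ moves that edge outward by an amount of order $\delta$, proportional to $\delta(\theta_1-c)+\varepsilon$. So a set of order-$\delta$ measure switches from paying, say, $p_1$ to paying about $t_1$, at a loss of roughly $p_1-t_1$ each. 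The gain $(\delta c-\varepsilon)\Pr(S')$ is also of order $\delta$. The net first-order sign therefore depends on the cell geometry, on $c$ and on the transfer gaps, and nothing in your argument controls it.

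\textbf{Why your patches fail.} ``Any switch toward the revised contract is a gain'' is wrong for exactly these types. Lowering $c$ toward $0$ does not help, because the per-type gain $\delta c$ vanishes while the outward shift of that edge grows. The fallback repeats the error. Adding rather than replacing does guarantee that firm-1 revenue changes only for types who pick the new contract. It does not guarantee that those types ``previously paid at most the old bundle's transfers,'' since the inflow across the right edge paid firm 1 more. The fallback also mishandles the cap. If $|M_1|<K$ you may simply add a contract. If $|M_1|=K$ and every contract is active, there is nothing to drop, and you are back to replacement, which has its own order-$\delta$ band of exiting types with $\theta_1<c$.

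\textbf{Comparison with the paper.} The paper uses exactly your fallback deviation: it adds $(q_1+\varepsilon,\,t_1+\underline\theta_1\varepsilon)$, with $\underline\theta_1$ just above the infimum of $\theta_1$ on the slack set, and it likewise ignores the cap $K$. It disposes of outside types by noting that each strictly prefers its old bundle once $\varepsilon$ falls below a threshold $\varepsilon''$. That threshold depends on the type and vanishes at indifference boundaries. So this step does not supply the missing measure-level estimate either; the uncovered set is precisely the band you identified.

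\textbf{What a complete argument needs.} You need one of two things:
\begin{itemize}
\item an explicit first-order comparison showing that the gain dominates the inflow from higher-transfer cells; or
\item a deviation whose inflow from such cells is of higher order than the gain.
\end{itemize}
The second works, for example, when $S$ meets $\theta_1=\bar\theta_1$ along a segment. Then taking $c$ near $\bar\theta_1$ makes the gain of order $\delta(\bar\theta_1-c)$ and the inflow of order $\delta(\bar\theta_1-c)^2$. In general, though, $S$ need not reach $\bar\theta_1$. Your proposal supplies neither.
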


With \cref{prop:demand} established, we can, without loss of generality, rewrite the agent's problem for equilibrium analysis as follows:
\[
U^{\bm M}(\bm\theta)
=
\max_{\substack{(q_1,t_1)\in M_1,\\ (q_2,t_2)\in M_2}}
\left\{
\theta_1q_1+\theta_2q_2-t_1-t_2
\right\}
\quad
\text{s.t.}
\quad
q_1+q_2=1.
\]
Substituting \(q_2=1-q_1\) and defining the relative preference
\[
\Delta=\theta_1-\theta_2,
\]
the agent's indirect utility can be written as
\[
U^{\bm M}(\bm\theta)
=
\theta_2+V^{\bm M}(\Delta),
\]
where
\[
V^{\bm M}(\Delta)
=
\max_{\substack{(q_1,t_1)\in M_1,\\ (1-q_1,t_2)\in M_2}}
\left\{
\Delta q_1-t_1-t_2
\right\}.
\]
Because \(\theta_2\) enters additively as a constant, it does not affect the agent's optimization. The optimal choice is therefore driven entirely by the relative preference \(\Delta\). Accordingly, we let \(q_1^{\bm M}(\Delta)\) denote the optimal allocation from firm 1 chosen by an agent with relative preference \(\Delta\).

The random variable \(\Delta\) is distributed according to a cumulative distribution function \(H\), with density \(h\), on support \([\underline{\Delta},\overline{\Delta}]\). Since \(V^{\bm M}\) is the upper envelope of affine functions of \(\Delta\), it is convex. Consequently, \(q_1^{\bm M}(\Delta)\) is unique almost everywhere. By the Envelope Theorem, we have
\[
\frac{dV^{\bm M}(\Delta)}{d\Delta}
=
q_1^{\bm M}(\Delta)
\quad
\text{almost everywhere}.
\]
We also know that \(q_1^{\bm M}(\Delta)\) is non-decreasing because \(V^{\bm M}\) is convex.

\begin{proposition}\label{prop:finite}
For any equilibrium menu profile in the duopoly non-exclusive menu-posting game, the active menu of each firm must contain a posted-price contract.
\end{proposition}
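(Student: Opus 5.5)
We argue by contradiction, working in the reduced one-dimensional representation that follows \cref{prop:demand}. Fix an equilibrium profile $\bm M=(M_1,M_2)$. By \cref{prop:demand} it is demand-matching, so every active contract of firm $1$ with quantity $q_1$ is paired with an active contract of firm $2$ with quantity $1-q_1$. Agent behavior is then governed by the relative preference $\Delta=\theta_1-\theta_2$ and the convex envelope $V^{\bm M}(\Delta)$.

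The active allocations of firm $1$ form a finite set $0\le q^{(1)}<\dots<q^{(m)}\le 1$. The map $\Delta\mapsto q_1^{\bm M}(\Delta)$ is a non-decreasing step function, so the types choosing $q^{(k)}$ form an interval $[\delta_{k-1},\delta_k]$ in $\Delta$. A posted-price contract of firm $1$ is active if and only if $q^{(m)}=1$. By the symmetric statement for firm $2$ (using $1-q_1$), firm $2$ has an active posted-price contract if and only if $q^{(1)}=0$. So it suffices to show that, in equilibrium, $q^{(m)}=1$ and $q^{(1)}=0$. By symmetry we treat only $q^{(m)}<1$.

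Suppose $q^{(m)}=\bar q<1$, and let $(\bar q,\bar t_1)$ and $(1-\bar q,\bar t_2)$ be the matched active pair chosen on the top interval $[\delta_{m-1},\overline\Delta]$, which has positive $H$-measure. Firm $1$ deviates by adding one contract $(1,\, \bar t_1+\bar t_2+(1-\bar q)\delta^\star)$ for a threshold $\delta^\star$ slightly below $\overline\Delta$. This new contract requires the agent to drop firm $2$ entirely, so she takes it with the walk-away contract $(0,0)$ of firm $2$. Relative to $(\bar q,\bar t_1)\&(1-\bar q,\bar t_2)$, her gain is $(1-\bar q)(\Delta-\delta^\star)$. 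She therefore switches exactly when $\Delta>\delta^\star$, as long as $\delta^\star$ is high enough that no other pair beats it. This holds because for $\Delta$ near $\overline\Delta$ the top pair is optimal.

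On the switching set, firm $1$'s revenue rises from $\bar t_1$ to $\bar t_1+\bar t_2+(1-\bar q)\delta^\star$. The gain is strictly positive provided $\bar t_2+(1-\bar q)\delta^\star>0$. All other types keep their previous choices, since the original menu is still available and the new contract is dominated for them. So the deviation is strictly profitable whenever $H(\delta^\star)<1$ and the increment is positive, contradicting equilibrium.

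The main obstacle is the degenerate case where the increment can fail to be positive, namely $\bar t_2=0$ and $\overline\Delta\le0$, or with a nonpositive threshold forced. In that case, firm $1$ adjusts the price instead: it charges $\bar t_1+\bar t_2+(1-\bar q)\delta^\star+\varepsilon$ or uses $\delta^\star$ close to $\overline\Delta$. It must also check that the agent's choice among other pairs, including combinations of the new contract with other firm-$2$ contracts, is unaffected. This holds because any other firm-$2$ contract has $q_2>0$ and is infeasible alongside $q_1=1$.

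If instead $\bar t_2>0$ trivially yields profitability, no problem arises. If $\overline\Delta$ is small, then firm $2$'s symmetric deviation at the bottom end is the relevant one, and one checks that at least one side's profitable deviation applies to each missing posted price. Ties are measure zero under the continuous $H$, so the fixed tie-breaking rule does not matter. Finally, the possibility that $\bar q$ is chosen together with a firm-$2$ contract that is itself a posted price with $q_2=1-\bar q$ is already covered, since only the pair's total transfer enters the argument.
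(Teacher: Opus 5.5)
Your construction is the paper's. You assume the top active quantity satisfies $\bar q<1$ and add a full-quantity contract priced so that a type $\delta^\star$ just below $\overline{\Delta}$ is indifferent to the top matched pair. Firm $1$ then collects the extra $\bar t_2+(1-\bar q)\delta^\star$ from the types above $\delta^\star$.

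The one real flaw is your ``degenerate case'' paragraph. That case cannot arise. Since $\theta_2$ can equal $0$, we have $\overline{\Delta}=\bar\theta_1>0$, so for $\delta^\star$ close to $\overline{\Delta}$ the increment $\bar t_2+(1-\bar q)\delta^\star$ is strictly positive even when $\bar t_2=0$. This is exactly the observation the paper uses to get $p-t_1^k>0$, and symmetrically $\underline{\Delta}=-\bar\theta_2<0$ handles firm $2$. You should use this fact rather than the patch you propose, which would not work if the case did occur. A profitable bottom-end deviation by firm $2$ contradicts equilibrium only when firm $2$ lacks an active posted price; it says nothing about firm $1$'s top quantity $\bar q$. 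Adding $\varepsilon$ to firm $1$'s price only raises the threshold and shrinks the switching set.

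Separately, you assert rather than show that types below the top interval do not switch. The paper verifies this by telescoping the indifference conditions. A one-line version uses the slope bound on $V^{\bm M}$, which is at most $\bar q<1$ almost everywhere. Hence for $\Delta<\delta^\star$,
\[
V^{\bm M}(\delta^\star)-V^{\bm M}(\Delta)\le\bar q(\delta^\star-\Delta)<\delta^\star-\Delta .
\]
Combined with $V^{\bm M}(\delta^\star)=\delta^\star-p$, this gives $V^{\bm M}(\Delta)>\Delta-p$, so the new contract is strictly worse than the original choice for every such type.
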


The proof is by contradiction. Suppose that, in equilibrium, the largest active contract of firm \(i\) allocates a strictly partial quantity \(q_{ik}<1\). Firm \(i\) can then profitably deviate by introducing a posted-price contract \((1,p_i)\) with \(p_i>t_{ik}\). By choosing \(p_i\) appropriately, firm \(i\) can ensure that the agent types who previously selected \((q_{ik},t_{ik})\) strictly prefer the new contract, while all other types remain with their original choices. Under this augmented menu, firm \(i\) strictly increases its expected revenue because the deviating types now pay a higher transfer. This contradicts the assumption that the original menu profile is an equilibrium.

A natural next question is whether the posted-price contract is the only active contract in equilibrium. An affirmative answer would imply that the agent does not split her demand across multiple firms, even though she is allowed to do so. However, the uniqueness of the active contract is not guaranteed in general; an additional condition is required. We introduce this condition in the following proposition.

\begin{proposition}\label{prop:finite_unique}
Assume that $\varphi(\Delta)=\Delta h(\Delta)+2H(\Delta)$ is strictly monotone. For any equilibrium menu profile in the duopoly non-exclusive menu-posting game, the posted-price contract is the unique active contract with positive allocation for each firm.\footnote{
The walk-away contract \((0,0)\) may also be active by the demand-matching property established in \cref{prop:demand}: if an agent chooses \((1,p_i)\) from \(M_i\), she must choose \((0,0)\) from \(M_{-i}\). Since the walk-away contract induces no allocation or transfer, we omit it from the discussion of active contracts with positive allocation.
}
\end{proposition}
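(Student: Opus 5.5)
The proof is by contradiction: assume some firm has a second positive-allocation active contract, pass to the one-dimensional problem in $\Delta$, and show that under strict monotonicity of $\varphi$ that firm's best response is a single cutoff.

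\textbf{Reduction.} By \cref{prop:demand} and \cref{prop:finite}, each firm's active menu contains a posted-price contract, and every active pair $(q_1,1-q_1)$ is complementary. Fix an equilibrium and suppose firm $1$ has an active contract $(q_{1\ell},t_{1\ell})$ with $0<q_{1\ell}<1$. By demand-matching, firm $2$ then has an active contract with allocation $1-q_{1\ell}\in(0,1)$. Since $V^{\bm M}$ is convex and $q_1^{\bm M}(\Delta)$ is non-decreasing, the active contracts are ordered along $\Delta$. Low $\Delta$ types choose $(0,0)$ from firm 1 and $(1,p_2)$ from firm 2; high $\Delta$ types choose $(1,p_1)$ and $(0,0)$; interior allocations occupy intermediate intervals.

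\textbf{Firm 1's best-response problem as optimal control.} Hold firm 2's menu $M_2$ fixed, and use $\Delta$ as the ``time'' variable. Firm 1 chooses a non-decreasing $q_1(\Delta)$, and the agent's net utility evolves with $V'=q_1$. The key point is that firm 2's menu enters only through the fixed function
\[
W_2(q)=\min\{t_2:(1-q,t_2)\in M_2\},
\]
so firm 1's revenue is $t_1(\Delta)=\Delta q_1(\Delta)-V(\Delta)-W_2(q_1(\Delta))$. With finitely many contracts, $W_2$ is a step function on a finite set. I plan to write firm 1's revenue as
\[
\int \bigl[\Delta q_1(\Delta)-V(\Delta)-W_2(q_1(\Delta))\bigr]h(\Delta)\,d\Delta,
\]
with boundary condition $V(\underline\Delta)$ pinned down by firm 2's outside option.

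Integrating by parts, the part $\Delta q_1 - V$ has switching function
\[
\Phi(\Delta)=\Delta h(\Delta)+H(\Delta)-1+\text{const}.
\]
The constant comes from the fact that firm 2's lowest option $(1,p_2)$ shifts the free boundary. Lowering the $(1,p_2)$ threshold moves types across both firms at once, and I expect this to contribute a second $H$ term. That would give $\Phi=\varphi-\text{const}$ with $\varphi=\Delta h+2H$, consistent with the statement's condition.

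\textbf{Single crossing and the contradiction.} If $\varphi$ is strictly monotone, $\Phi$ crosses zero at most once. Arguing as in the proof of \cref{thm:br} (the appendix argument excluding intervals where $\Phi\equiv0$), the optimal $q_1$ is then bang-bang with a single jump from $0$ to $1$. Hence firm 1 strictly gains by replacing the menu that contains $(q_{1\ell},t_{1\ell})$ with a pure cutoff menu, which contradicts equilibrium. The symmetric argument applies to firm 2 with $-\Delta$. Strict monotonicity in either direction suffices, because one firm sees the reversed orientation.

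\textbf{Main obstacle.} The hardest step is deriving the exact switching function. Firm 1's deviation changes which firm-2 contract is paired with each type, and $W_2$ is a nonconvex step function, so the Hamiltonian is not linear in $q_1$. My first attempt would restrict firm 1 to perturbations that keep firm 2's pairing intervals fixed, plus one perturbation shifting the cutoff between $(0,0)$ and firm 1's first positive contract. For each, I would compute the marginal revenue change directly: quantity increment $\times$ ($\Delta h + H$ from the rent) plus the stolen mass from firm 2 (another $H$). I would then show that with $\varphi$ monotone, an interior allocation level can never satisfy both first-order conditions at its two boundaries simultaneously. If this computation produces a weighting other than $2H$, I will adjust which perturbation class I use rather than the overall contradiction structure.
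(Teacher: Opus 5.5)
\textbf{There is a genuine gap.} Your contradiction is meant to come from firm 1's optimality alone: hold $M_2$ fixed and show that firm 1's best response is a cutoff. That intermediate claim is false, and the $2H$ in $\varphi$ cannot come out of a single firm's problem.

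In your own reduced formulation, with the lowest type left at $V(\underline\Delta)=-p_2$, firm 1's revenue is $\int[v(\Delta)q_1-W_2(q_1)+p_2]\,h\,d\Delta$, where $v=\Delta-(1-H)/h$.
\begin{itemize}
\item Firm 1 does not count the mass that firm 2 loses, so there is no ``second $H$''.
\item No constant can turn $\Delta h+H-1$ into $\Delta h+2H$, since the two differ by $H(\Delta)$.
\item As you note yourself, $W_2$ is a nonlinear cost of $q_1$. So the pointwise maximizer need not be bang-bang, even when $\varphi$ is strictly monotone.
\end{itemize}

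Here is a concrete counterexample. Take $H$ uniform on $[-1,1]$, so that $\varphi(\Delta)=\frac32\Delta+1$ is strictly increasing. Let firm 2 post $\{(0,0),(\frac12,0.1),(1,1)\}$. Then $v=2\Delta-1$, and $vq_1-W_2(q_1)$ equals $-1$, $v/2-0.1$, and $v$ at $q_1=0,\frac12,1$. Hence $q_1=\frac12$ is optimal on $(-0.4,0.4)$. The menu $\{(0,0),(\frac12,0.7),(1,1)\}$ earns $0.58$, while the best posted price earns only $0.5$. So no choice of perturbation class for firm 1 alone can close the argument. The missing ingredient is that firm 2's menu is itself a best response, and your plan never uses this.

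\textbf{How the paper's proof uses it.} The argument is finite-dimensional.
\begin{itemize}
\item Index the active allocations as $0=q_0<\cdots<q_K=1$, with contract $k$ chosen on $[\Delta_{k-1},\Delta_k]$. Set $\alpha_k=q_k-q_{k-1}$ and $s_k=t_1^k+t_2^k$.
\item Indifference at the boundaries gives $s_{k+1}-s_k=\Delta_k\alpha_{k+1}$. So raising $t_1^k$ or $t_2^k$ moves $\Delta_{k-1}$ at rate $1/\alpha_k$ and $\Delta_k$ at rate $-1/\alpha_{k+1}$.
\item Firm 1's first-order condition in $t_1^k$ then reduces to
\[
\bigl[\Delta_k h(\Delta_k)+H(\Delta_k)\bigr]-\bigl[\Delta_{k-1}h(\Delta_{k-1})+H(\Delta_{k-1})\bigr]
=\frac{h(\Delta_k)}{\alpha_{k+1}}\bigl(t_2^{k+1}-t_2^k\bigr)-\frac{h(\Delta_{k-1})}{\alpha_k}\bigl(t_2^{k}-t_2^{k-1}\bigr).
\]
The right side involves firm 2's transfer increments, which firm 1 takes as given. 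So this condition alone places no restriction on $H$, consistent with the example.
\item Firm 2's condition in $t_2^k$ has the same form, with $t_1$ replacing $t_2$ on the right.
\item Adding the two doubles the left side. The right side now depends on transfers only through the $s_k$, and the indifference conditions turn it into $\Delta_k h(\Delta_k)-\Delta_{k-1}h(\Delta_{k-1})$.
\item This leaves $\varphi(\Delta_{k-1})=\varphi(\Delta_k)$ at the two ends of every interior contract, which strict monotonicity forbids.
\end{itemize}

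Your ``main obstacle'' paragraph, with first-order conditions at both boundaries of an interior allocation, points in this direction. But it must combine both firms' conditions for the same contract. Once you do that, no Hamiltonian, bang-bang, or singular-arc argument is needed.
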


To provide intuition for the sufficient condition, we present a sketch of the proof. Let \(\bm M\) be an equilibrium profile whose active menus contain contracts other than posted-price contracts. Order firm 1's active contracts by allocation:
\[
0=q_0<q_1<q_2<\cdots<q_{k-1}<q_k=1,
\]
with corresponding transfers
\[
t_1^0=0,\quad t_1^1,\ldots,t_1^{k-1},\quad t_1^k=p_1.
\]
We assume \(k\ge2\), so that at least one interior contract exists. By demand-matching, established in \cref{prop:demand}, firm 2's complementary active contracts have allocations \(1-q_j\), with transfers \(t_2^j\). In particular,
\[
t_2^0=p_2
\quad
\text{and}
\quad
t_2^k=0.
\]

Since \(V^{\bm M}\) is convex and \(q_1^{\bm M}\) is non-decreasing, the relative-preference space admits a partition
\[
\underline{\Delta}
=
\Delta_0
<
\Delta_1
<
\cdots
<
\Delta_k
<
\Delta_{k+1}
=
\overline{\Delta},
\]
such that all types \(\Delta\in[\Delta_j,\Delta_{j+1}]\) choose the contract pair
\[
(q_j,t_1^j)
\quad
\text{and}
\quad
(1-q_j,t_2^j),
\]
for \(j=0,\ldots,k\). The firms' expected revenues can therefore be written as
\begin{align*}
ER_1
&=
\sum_{j=1}^{k-1}
t_1^j\bigl[H(\Delta_{j+1})-H(\Delta_j)\bigr]
+
p_1\bigl[H(\Delta_{k+1})-H(\Delta_k)\bigr],
\\
ER_2
&=
p_2\bigl[H(\Delta_1)-H(\Delta_0)\bigr]
+
\sum_{j=1}^{k-1}
t_2^j\bigl[H(\Delta_{j+1})-H(\Delta_j)\bigr].
\end{align*}

At each interior boundary type \(\Delta_j\), the agent is indifferent between adjacent contract pairs \(j-1\) and \(j\):
\[
\Delta_j q_{j-1}-t_1^{j-1}-t_2^{j-1}
=
\Delta_j q_j-t_1^j-t_2^j,
\quad
j=1,\ldots,k.
\]
Combining these indifference conditions with the first-order conditions of both firms' expected revenue functions with respect to their transfers yields
\begin{equation}\label{eq:finite_unique}
\Delta_j h(\Delta_j)+2H(\Delta_j)
=
\Delta_{j+1}h(\Delta_{j+1})+2H(\Delta_{j+1}),
\quad
j=1,\ldots,k-1.
\end{equation}
This equality cannot hold if
\[
\varphi(\Delta)=\Delta h(\Delta)+2H(\Delta)
\]
is strictly monotone. Hence, under the condition in \cref{prop:finite_unique}, no equilibrium can contain an interior active contract. The posted-price contract is therefore the unique active contract with positive allocation for each firm.

Since \(H(\Delta)\) is increasing, monotonicity of \(\varphi(\Delta)\) is guaranteed whenever \(h(\Delta)\) does not drop sharply. For instance, the condition holds when \(\Delta\) is uniformly distributed over \([\underline{\Delta},\overline{\Delta}]\).

We note that monotonicity of \(\varphi\) is only a sufficient condition. In general, as the partition of \([\underline{\Delta},\overline{\Delta}]\) becomes finer, \eqref{eq:finite_unique} becomes increasingly difficult to satisfy, since it requires \(\varphi\) to take the same value at an increasingly large set of distinct points. The takeaway is that sustaining an equilibrium with more distinct active contracts requires stronger restrictions on the type distribution. As we show in the next subsection, this observation also helps explain why uniqueness of the active contract is obtained more readily when menus are posted as continuous tariffs offering contracts for every quantity \(q_i\in[0,1]\).

Real-world markets frequently feature firms offering multiple contracts for divisible goods---for example, coffee in different cup sizes or office supplies sold in packages of varying quantities. While consumers could theoretically fulfill a fixed demand by splitting purchases across multiple vendors, \cref{prop:finite_unique} shows that, under the sufficient condition above, agents buy from a single firm via a posted-price contract. This result suggests that the multi-contract menus observed in practice are either inactive in equilibrium or designed to screen consumers with heterogeneous total demands, rather than to facilitate demand splitting across firms.

\subsection{Menus with Continuous Tariff}

In this subsection, we connect our model to the standard literature on nonlinear pricing. We assume that each firm \(i\) produces at a constant marginal cost \(c_i\) and offers a tariff \(t_i:[0,1]\to\mathbb R_{\ge0}\), mapping each quantity \(q_i\in[0,1]\) to a transfer \(t_i(q_i)\). We restrict attention to tariffs that are continuous and almost everywhere twice differentiable. A tariff acts as a menu with a continuum of contracts, offering one contract for each possible quantity \(q_i\in[0,1]\). Formally,
\[
M_i=\{(q_i,t_i(q_i)):q_i\in[0,1]\}.
\]

Upon observing the tariffs \(t_1\) and \(t_2\), an agent of type \(\bm\theta=(\theta_1,\theta_2)\) chooses a quantity pair \((q_1,q_2)\) to solve
\begin{align*}
\max_{(q_1,q_2)\in[0,1]^2}
&\left\{
\theta_1q_1+\theta_2q_2-t_1(q_1)-t_2(q_2)
\right\}
\quad
\text{s.t.}
\quad
q_1+q_2=1  \\
\equiv
\max_{q_1\in[0,1]}
&\left\{
\Delta q_1-t_1(q_1)-t_2(1-q_1)
\right\}
+\theta_2,
\end{align*}
where \(\Delta=\theta_1-\theta_2\).\footnote{
In this subsection, we impose the assumption that the agent's unit demand is exhausted, so that \(q_1+q_2=1\). This is the continuous-tariff analogue of the demand-matching property in the finite-menu case.
}

For any given tariff profile \(\bm t=(t_1,t_2)\), let
$\bm q^{\bm t}(\Delta)
=
(q_1^{\bm t}(\Delta),q_2^{\bm t}(\Delta))$ denote the optimal allocation chosen by an agent with relative preference \(\Delta\). A tariff profile \(\bm t^\star\) constitutes an equilibrium if, for each firm \(i\in\{1,2\}\),
\[
\int_{\underline{\Delta}}^{\overline{\Delta}}
\left[
t_i^\star(q_i^{\bm t^\star}(\Delta))
-
c_iq_i^{\bm t^\star}(\Delta)
\right]
h(\Delta)\,d\Delta
\ge
\int_{\underline{\Delta}}^{\overline{\Delta}}
\left[
t_i(q_i^{(t_i,t_{-i}^\star)}(\Delta))
-
c_iq_i^{(t_i,t_{-i}^\star)}(\Delta)
\right]
h(\Delta)\,d\Delta
\]
for all admissible tariffs \(t_i:[0,1]\to\mathbb R_{\ge0}\). That is, no firm can strictly increase its expected profit by unilaterally deviating from \(\bm t^\star\). We refer to this setting as the \textbf{duopoly tariff-posting game}.

If an equilibrium tariff profile contains active contracts other than a posted-price contract, then the agent splits her demand across the two firms with positive probability. That is, a positive measure of types \(\Delta\) chooses an interior allocation \(q_i^{\bm t}(\Delta)\in(0,1)\). In the following proposition, we establish a necessary condition for the agent to choose an interior allocation in equilibrium.

\begin{proposition}\label{prop:continuous}
For any equilibrium in the duopoly tariff-posting game, an agent with type \(\Delta\) chooses an interior solution \(q_1^\star(\Delta)\in(0,1)\) only if $\psi'(\Delta)=1$,
where $\psi(\Delta)=\frac{1-2H(\Delta)}{h(\Delta)}$. Consequently, if \(\psi'(\Delta)\ne1\) almost everywhere, then the agent chooses an interior allocation with probability zero.
\end{proposition}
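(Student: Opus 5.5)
The plan is to derive, on the set of types who choose an interior allocation, two Mirrlees-type pointwise first-order conditions, one for each firm's best response to the other's equilibrium tariff. I would then combine them with the agent's first-order condition. Together these force the identity $\psi(\Delta)=\Delta+c_2-c_1$ on that set, and differentiating gives $\psi'=1$.

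\emph{Step 1: agent's problem.} Fix an equilibrium $\bm t^\star$ and write $V(\Delta)=\max_{q_1}\{\Delta q_1-t_1^\star(q_1)-t_2^\star(1-q_1)\}$. This is the analogue of $V^{\bm M}$ from the finite-menu subsection. It is convex, $V'(\Delta)=q_1^\star(\Delta)$ a.e., and $q_1^\star$ is non-decreasing. At an interior choice where both tariffs are differentiable, the agent's first-order condition is
\[
t_1^{\star\prime}(q_1^\star(\Delta))-t_2^{\star\prime}(1-q_1^\star(\Delta))=\Delta .
\]

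\emph{Step 2: firm 1's best response.} Holding $t_2^\star$ fixed, firm 1 effectively chooses an implementable allocation $q_1(\cdot)$. Its transfer is recovered as
\[
t_1=\Delta q_1-t_2^\star(1-q_1)-V(\Delta),\qquad V(\Delta)=V(\underline\Delta)+\int_{\underline\Delta}^{\Delta}q_1 .
\]
Integrating by parts as in Myerson, the expected profit becomes
\[
\int\Big[\Big(\Delta-\tfrac{1-H}{h}-c_1\Big)q_1-t_2^\star(1-q_1)\Big]h\,d\Delta
\]
minus a boundary term. On a positive-measure set of interior types where $q_1^\star$ is strictly increasing (no ironing), pointwise optimality gives
\[
\Delta-\tfrac{1-H(\Delta)}{h(\Delta)}+t_2^{\star\prime}(1-q_1^\star)=c_1 .
\]

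\emph{Step 3: firm 2's best response.} Firm 2's problem is symmetric: its relative type is $-\Delta$, so the roles of the tails swap. This yields
\[
-\Delta-\tfrac{H(\Delta)}{h(\Delta)}+t_1^{\star\prime}(q_1^\star)=c_2 .
\]

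\emph{Step 4: combine.} From Step 2, $t_2^{\star\prime}=c_1-\Delta+\tfrac{1-H}{h}$. Substituting into the agent's condition from Step 1 gives $t_1^{\star\prime}=c_1+\tfrac{1-H}{h}$. Equating this with firm 2's condition $t_1^{\star\prime}=c_2+\Delta+\tfrac{H}{h}$ yields
\[
\frac{1-2H(\Delta)}{h(\Delta)}=\Delta+c_2-c_1 ,
\]
that is, $\psi(\Delta)=\Delta+c_2-c_1$, on the interior set. If that set has positive measure, then at its Lebesgue density points (where $\psi$ is differentiable, given the differentiability of $h$) the identity can be differentiated, which forces $\psi'(\Delta)=1$. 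Hence if $\psi'\neq1$ a.e., the interior set is null, which is the ``consequently'' part.

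\emph{Main obstacle.} The delicate part is rigorously justifying the pointwise conditions in Steps 2 and 3. Two issues arise. First, the opponent's tariff enters the firm's objective nonlinearly through $t_{-i}^\star(1-q_1)$. Second, the monotonicity (ironing) constraint may bind, and the tariffs are only a.e.\ twice differentiable.

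I would first try a direct variational argument rather than the full control formulation. Fix a small interval of interior quantities and perturb $t_1^\star$ there by a smooth bump $\varepsilon\eta(q_1)$. The agent's first-order condition, via the implicit function theorem, gives how the chosen $q_1$ shifts. The equilibrium condition that profit is stationary in $\varepsilon$ then yields the integrated form of the Step 2 condition; letting the bump shrink yields the pointwise version. I would also argue that on pooling (ironed) intervals $q_1^\star$ is constant, so the types there choose a single quantity. Such intervals contribute only countably many quantities, and for the measure of interior choices one restricts to the strictly increasing part, where the argument above applies.
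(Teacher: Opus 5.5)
Your argument is correct, and it takes a somewhat different route from the paper. Steps 1--3 are exactly the three conditions the paper uses. These are the agent's first-order condition $t_1'-t_2'=\Delta$, firm 1's pointwise condition involving $\frac{1-H}{h}$, and firm 2's involving $\frac{H}{h}$. The paper obtains firm 2's condition by integrating $\int Vh$ by parts against $H$ rather than re-parametrizing by $-\Delta$, but the result is the same.

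The two routes part in Step 4. The paper differentiates all three equalities in $\Delta$, obtaining $1=(t_1''+t_2'')q_1'$, $1-t_2''q_1'=\frac{d}{d\Delta}\bigl[\frac{1-H}{h}\bigr]$ and $1-t_1''q_1'=-\frac{d}{d\Delta}\bigl[\frac{H}{h}\bigr]$. It then adds the two firm equations and substitutes the agent's to eliminate the second-derivative terms. You instead eliminate $t_1'$ and $t_2'$ algebraically at the level of the first-order conditions. You then differentiate only the resulting identity $\psi(\Delta)=\Delta+c_2-c_1$, which involves primitives alone.

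This buys two things. First, you need neither second derivatives of the tariffs nor $q_1^{\star\prime}$. Second, you get a strictly sharper necessary condition, pinning down the level and not just the slope. Equivalently $\Delta h+2H=1-(c_2-c_1)h$, so with equal (e.g.\ zero) costs $\varphi\equiv1$ on the interior set. This makes the parallel with the finite-menu condition of \cref{prop:finite_unique} explicit.

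The cost is that you must recover $\psi'=1$ from an identity holding on a set rather than on an interval. Your density-point argument does this correctly: it gives $\psi'=1$ at every non-isolated interior type where $\psi$ is differentiable, which is all the ``consequently'' part needs. The paper's differentiation tacitly requires the three conditions to hold on a neighborhood of $\Delta$ anyway, so nothing is lost relative to it.

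One remark in your obstacle discussion needs correcting. That pooling intervals ``contribute only countably many quantities'' does not make them negligible. A single interior quantity $\bar q$ can be chosen by a positive-measure interval of types, and that is exactly what the probability-zero conclusion must exclude. So restricting attention to the strictly increasing part of $q_1^\star$ is not enough.

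If $t_1^\star$ is differentiable at $\bar q$ and $t_2^\star$ at $1-\bar q$, your own Step 1 already rules pooling out. The condition $t_1^{\star\prime}(\bar q)-t_2^{\star\prime}(1-\bar q)=\Delta$ cannot hold for a whole interval of $\Delta$; this is what the paper's $1=(t_1''+t_2'')q_1'$ encodes. Pooling at a kink of the tariffs, where the first-order conditions become one-sided inequalities, is covered by neither your argument nor the paper's. The paper treats all three conditions as differentiable equalities.
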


For a tariff profile \((t_1,t_2)\) to constitute an equilibrium, three conditions must hold simultaneously: the agent's first-order condition and each firm's best-response condition. Assuming that an agent with type \(\Delta\) chooses an interior solution, these three necessary conditions jointly imply \(\psi'(\Delta)=1\). To sustain an equilibrium with multiple active contracts, there must therefore exist an interval \([a,b]\subseteq[\underline{\Delta},\overline{\Delta}]\) such that $\psi'(\Delta)=1
\quad
\text{for all } \Delta\in[a,b]$. This is a  condition that can be satisfied only by a very specific class of distributions.

A final remark on non-exclusive contracting is that an equilibrium in which the agent splits her demand requires additional conditions on the type distribution and is easily broken by perturbations. Moreover, in practice, the agent likely incurs an additional transaction cost when splitting her demand across firms, which further discourages such behavior. The results of \cref{sec:n} therefore support the conclusion that, in a broad class of environments, we can focus on exclusive contracting when analyzing competition between firms.

    \section{Correlated Types}\label{sec:cor}

A maintained assumption in our analysis is that type distributions are independent across firms. The proof of our main result relies on this assumption in an essential way. When types are correlated, an agent's valuation for firm \(i\)'s good is informative about her valuations for competing firms' goods. Hence, from firm \(i\)'s perspective, the competitive outside option generated by other firms depends directly on \(\theta_i\), not only through the utility \(U_i(\theta_i)\) that firm \(i\) offers. This additional information structure complicates the sufficient condition for the posted-price result under the optimal control approach. In this section, we explore how far our optimal control approach can be pushed in this direction, although a complete characterization remains elusive.

\subsection{Imperfect Correlation}

We now revise the benchmark model to allow for correlated types. Let
\(\bm\theta=(\theta_1,\ldots,\theta_n)\) have joint density \(\bm f\) on the full-dimensional product support $\bm\Theta=\prod_{i=1}^n[0,\bar\theta_i].$ We assume that \(\bm f\) is continuously differentiable and strictly positive on \(\bm\Theta\). Let \(f_i\) denote the marginal density of \(\theta_i\).

Given that each competing firm \(j\ne i\) induces indirect utility \(U_j\), the probability that the agent contracts with firm \(i\) when firm \(i\) delivers utility \(U_i(\theta_i)\) to type \(\theta_i\) is
\[
P_i(\theta_i,U_i(\theta_i))
=
\Pr\left(
U_i(\theta_i)\ge \max_{j\ne i}U_j(\theta_j)
\,\middle|\,
\theta_i
\right).
\]
In contrast to the independent case, this probability depends on \(\theta_i\) not only through \(U_i(\theta_i)\), but also directly through the conditioning, since the conditional distribution of \(\bm\theta_{-i}\mid\theta_i\) varies with \(\theta_i\). When types are independent, conditioning on \(\theta_i\) is uninformative about \(\bm\theta_{-i}\), so \(P_i(\theta_i,U_i(\theta_i))\) reduces to \(P_i(U_i(\theta_i))\), recovering the formulation of the benchmark model.

Firm \(i\)'s optimization problem is to choose a non-decreasing allocation rule \(q_i\) to maximize
\[
ER_i
=
\int_{\Theta_i}
\Big[
\theta_i q_i(\theta_i)-U_i(\theta_i)
\Big]
P_i(\theta_i,U_i(\theta_i))
f_i(\theta_i)\,d\theta_i.
\]

Setting up the corresponding Hamiltonian as in the proof of \cref{thm:br} and differentiating the switching function with respect to \(\theta_i\) yields
\[
\Phi'(\theta_i)
=
\big[2f_i(\theta_i)+\theta_i f_i'(\theta_i)\big]P_i
+
\theta_i\frac{\partial P_i}{\partial \theta_i}f_i(\theta_i)
+
U_i(\theta_i)\frac{\partial P_i}{\partial U_i}f_i(\theta_i),
\]
where \(P_i\) and its partial derivatives are evaluated at $(\theta_i,U_i(\theta_i))$.

The monotonicity of \(\Phi\) is what guarantees that the relaxed optimal allocation rule \(q_i^\star\) is a step function and hence that the best response of firm \(i\) is a posted-price menu. In the independent case, the second term in \(\Phi'(\theta_i)\) is absent, since \(P_i\) does not depend directly on \(\theta_i\). Density-regularity is then sufficient to guarantee that \(\Phi\) is non-decreasing. In the correlated case, however, the additional term $\theta_i\frac{\partial P_i}{\partial \theta_i}f_i(\theta_i)$ appears. Its sign depends on the correlation structure of \(\bm\theta\) and is not guaranteed to be non-negative. Consequently, the posted-price result of \cref{thm:br} may fail under arbitrary correlation.

The two directions of correlation tell different economic stories. When types are positively correlated, firms become more directly competitive: the agent is likely to value all goods highly or poorly simultaneously, making the goods closer substitutes from her perspective. When types are negatively correlated, by contrast, each firm enjoys greater market power over its own variety, since an agent who strongly values one firm's good is less likely to value its competitors' goods highly. In this case, firms behave more like local monopolists, so our main result is more likely to extend. We formalize this intuition in the following definition and establish the corresponding result.

\begin{definition}\label{def:sdcd}
The joint distribution of \(\bm{\theta}\) is \textbf{stochastically decreasing in own type} if, for each \(i\), the conditional joint distribution function \(F_{-i\mid i}(\bm{\theta}_{-i}\mid \theta_i)\) is non-decreasing in \(\theta_i\). That is,
\[
\theta_i' \ge \theta_i
\implies
F_{-i\mid i}(\bm{\theta}_{-i}\mid \theta_i')
\ge
F_{-i\mid i}(\bm{\theta}_{-i}\mid \theta_i),
\quad
\forall\,\bm{\theta}_{-i}\in\bm{\Theta}_{-i}.
\]
\end{definition}

This condition states that consumers who value firm \(i\)'s good more highly are stochastically less likely to value competitors' goods highly. Economically, it captures horizontal product differentiation: a stronger preference for one variety is associated with weaker preferences for others. The independent case is the limiting case in which the conditional distribution does not depend on \(\theta_i\), so the inequality holds with equality.


\begin{proposition}\label{prop:correlated}
Assume that type distributions of each firm are correlated. If the marginal density $f_i$ is density-regular and the joint distribution
of $\bm \theta$ is stochastically decreasing in own type, then for any menu profile $M_{-i}$ posted by 
competing firms, there exists a unique $p_i^\star \in \Theta_i$ such that 
every best response of firm $i$ induces the same outcome as the posted-price 
menu $M_i^\star = \{(0,0),(1,p_i^\star)\}$.
\end{proposition}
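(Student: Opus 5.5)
The plan is to repeat the optimal-control argument used for \cref{thm:br}, with the state-independent winning probability $P_i(U_i)$ replaced by the conditional one $P_i(\theta_i,U_i)$. The only new ingredient is a sign for the extra term $\theta_i\,\partial P_i/\partial\theta_i\, f_i(\theta_i)$ in $\Phi'$, and stochastic decrease in own type will supply it. Concretely, I would first reduce firm $i$'s problem, exactly as in \cref{sec:e}, to choosing a non-decreasing $q_i:\Theta_i\to[0,1]$ with state equation $U_i'=q_i$, $U_i(0)=0$. I would then form the Hamiltonian
\[
\mathcal H=q_i\big[\theta_i P_i(\theta_i,U_i)f_i(\theta_i)+\lambda\big]-U_iP_i(\theta_i,U_i)f_i(\theta_i),
\]
with switching function $\Phi=\theta_iP_if_i+\lambda$ and costate equation $\lambda'=P_if_i-(\theta_iq_i-U_i)\,\partial_UP_i\, f_i$. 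Differentiating $\Phi$ and substituting the costate equation cancels the $q_i$ terms and gives the displayed expression
\[
\Phi'=\big[2f_i+\theta_if_i'\big]P_i+\theta_i\,\partial_{\theta}P_i\,f_i+U_i\,\partial_UP_i\,f_i .
\]

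The key step is to sign the three terms. For the third, $P_i(\theta_i,u)=\Pr(\max_{j\ne i}U_j(\theta_j)\le u\mid\theta_i)$ is non-decreasing in $u$ because the event grows with $u$, and $U_i\ge0$, so the term is non-negative. For the second, write $P_i(\theta_i,u)=F_{-i\mid i}\big((U_j^{-1}(u))_{j\ne i}\mid\theta_i\big)$. This uses the fact that $\{U_j(\theta_j)\le u\}=\{\theta_j\le U_j^{-1}(u)\}$ up to null sets, so the event is a lower orthant. By \cref{def:sdcd}, $P_i(\theta_i,u)$ is then non-decreasing in $\theta_i$ for fixed $u$, and with $\theta_i\ge0$ the term is non-negative. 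The first term is non-negative by density-regularity of the marginal $f_i$. Hence $\Phi$ is non-decreasing, and strictly increasing wherever $P_i>0$.

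Since $P_i$ and $U_j^{-1}$ may be only monotone rather than differentiable, I would run the sign argument in integrated form: compare $\Phi(\theta')-\Phi(\theta)$ directly using the monotonicity of $P_i$ in each argument, or smooth the opponents' utilities and pass to the limit. Doing this rigorously is the main technical obstacle, and it is the same issue the appendix proof of \cref{thm:br} must handle. My plan is to mirror that treatment, since monotonicity in $\theta_i$ enters in exactly the same way as monotonicity in $U$.

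With $\Phi$ monotone, the rest follows the proof of \cref{thm:br}. Pontryagin's principle gives the bang-bang form of $q_i^\star$ off $\{\Phi=0\}$, and a non-decreasing $\Phi$ makes this relaxed solution non-decreasing, hence feasible. I would then rule out a singular interval on which $\Phi\equiv0$: on it $\Phi'=0$, which forces $P_i=0$ there by strict density-regularity. Such types never win, so their allocation is revenue-irrelevant and can be set to $0$, which does not affect the outcome. Thus $q_i^\star=\mathbf 1\{\theta_i\ge p_i^\star\}$ with $p_i^\star$ the single crossing point, giving the menu $\{(0,0),(1,p_i^\star)\}$. For uniqueness of $p_i^\star$ and outcome-equivalence of every best response, I would argue as in the appendix for \cref{thm:br}. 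Any best response must satisfy the maximum principle, and strict monotonicity of $\Phi$ on the region where $P_i>0$ pins down a single cutoff. Two distinct optimal cutoffs would yield two trajectories whose switching functions cross zero at different points, contradicting the costate and transversality conditions, which determine $\lambda$ uniquely from the trajectory.
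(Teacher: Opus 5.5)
Up to the last step, your route is the paper's. You use the same Hamiltonian and the same three-term formula for $\Phi'$. You sign $\partial P_i/\partial\theta_i$ the same way, via the lower-orthant representation $P_i(\theta_i,u)=F_{-i\mid i}\bigl((U_j^{-1}(u))_{j\ne i}\mid\theta_i\bigr)$ together with \cref{def:sdcd}. You then use the same no-singular-arc and cutoff argument from the proof of \cref{thm:br}.

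Two small corrections there. First, on a singular interval the allocation is not ``revenue-irrelevant'' and free to reset to $0$: changing $q_i$ on $[a,b]$ shifts $U_i$ at every higher type. The right step, as in \cref{lemma:no_singular_arc}, is that $P_i(\theta_i,u)>0$ for all $u>0$. This holds because $U_j^{-1}(u)\ge\min\{u,\bar\theta_j\}>0$ and the joint density is positive on the product support. Hence $P_i=0$ forces $U_i^\star=0$, and so $q_i^\star=0$ a.e.\ on that interval. Second, the appendix proof of \cref{thm:br} does not handle the non-differentiability of $P_i$ that you flag; it differentiates formally. So there is no treatment there for you to mirror.

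The genuine gap is your uniqueness step. The costate equation and transversality condition determine $\lambda$ \emph{from a given trajectory}. Two optimal cutoff rules with cutoffs $c\ne c'$ therefore each come with their own costate, each satisfies the maximum principle, and nothing is contradicted. Monotonicity of $\Phi$ is a property along one trajectory and cannot rank different cutoffs. Uniqueness would require the posted-price revenue $R(c)$ to have a unique maximizer, and nothing in the argument gives that.

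The paper does not close this either: its proof defers to that of \cref{thm:br}, which only shows that each optimal trajectory is a cutoff rule. In fact the claim is false as stated, since only $f_i$ is restricted. Here is a counterexample:
\begin{itemize}
\item Take independent types, which satisfy \cref{def:sdcd} with equality; a small negative FGM perturbation also works by continuity.
\item Let $\theta_1\sim U[0,1]$, and let firm $2$ post $p_2$ with $F_2(p_2+v)\approx\varepsilon$ for $v<0.9$ and $F_2(p_2+v)\approx 1$ beyond.
\item Then $R(c)=c\int_0^{1-c}F_2(p_2+v)\,dv$ has one local maximum near $c=1/2$ with value $\varepsilon/4$. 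It has another near $c=(0.1+0.9\varepsilon)/2$ with value $(0.1+0.9\varepsilon)^2/4$.
\item Tuning $\varepsilon\approx 0.012$ in a smooth version makes these two values tie. Two different posted prices are then both best responses.
\end{itemize}
What your argument, like the paper's, actually establishes is that every best response is outcome-equivalent to \emph{some} posted-price menu.
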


Imposing \cref{def:sdcd} guarantees that the term $\theta_i f_i(\theta_i)\frac{\partial P_i}{\partial \theta_i}$ appearing in \(\Phi'(\theta_i)\) is non-negative, so the original proof strategy carries through unchanged. Beyond negative correlation, however, the analysis becomes substantially less tractable, and establishing the structure of the best response under arbitrary correlation remains an open question. While a posted-price menu need not be the best response to arbitrary opponent menus once we move beyond negative correlation, we can still ask whether a posted-price menu profile constitutes an equilibrium.

In the following, we show that when competitors are restricted to posted-price menus, the best response under the setting with correlated type distributions remains tractable. The posted-price structure allows us to write the winning probability as a conditional distribution function, which leads to a clean sufficient condition under which a firm's best response to posted prices is again a posted-price menu.

Fix firm \(i\), and suppose that all firms \(j\neq i\) post prices \(p_j\). Then firm \(j\)'s indirect utility is
\[
U_j(\theta_j)=\max\{0,\theta_j-p_j\}.
\]
If firm \(i\) gives utility \(U\) to type \(\theta_i\), then firm \(i\) wins against firm \(j\) whenever
\[
U\ge \theta_j-p_j
\quad\Longleftrightarrow\quad
\theta_j\le U+p_j.
\]
Therefore, when opponents use posted prices, firm \(i\)'s interim winning probability is
\[
P_i^p(\theta_i,U)
=
\Pr\left(\theta_j\le U+p_j\ \forall j\neq i \mid \theta_i\right)
=
F_{-i|i}(U+p_{-i}\mid \theta_i),
\]
where \(F_{-i|i}(\cdot\mid\theta_i)\) is the conditional joint distribution of \(\theta_{-i}\) given \(\theta_i\), and \(U+p_{-i}\) denotes the vector \((U+p_j)_{j\neq i}\)\footnote{We extend conditional distribution functions in the same way as in \cref{sec:e} when an argument exceeds the support.}. With this notation, after setting up the Hamiltonian as usual, the derivative of the switching function is
\[
\begin{aligned}
\Phi_i'(\theta_i)
&=
\left[
2f_i(\theta_i)+\theta_i f_i'(\theta_i)
\right]
F_{-i|i}(U_i(\theta_i)+p_{-i}\mid \theta_i) \\
&\quad
+
\theta_i f_i(\theta_i)
\frac{\partial}{\partial \theta_i}
F_{-i|i}(U_i(\theta_i)+p_{-i}\mid \theta_i) \\
&\quad
+
U_i(\theta_i)f_i(\theta_i)
\frac{\partial}{\partial U_i}
F_{-i|i}(U_i(\theta_i)+p_{-i}\mid \theta_i),
\end{aligned}
\]
where
\[
\frac{\partial}{\partial U_i}
F_{-i|i}(U_i+p_{-i}\mid\theta_i)
:=
\sum_{j\ne i}
\frac{\partial F_{-i|i}}{\partial x_j}
(U_i+p_{-i}\mid\theta_i).
\]
As before, the second term is problematic, and the monotonicity of $\Phi$ cannot be established by density-regularity alone. We therefore introduce this alternative condition below.

\begin{definition}
Fix firm \(i\). The joint distribution of \(\bm\theta=(\theta_i,\theta_{-i})\) satisfies
\textbf{correlation-adjusted density-regularity} for firm \(i\) if the conditional joint distribution \(F_{-i|i}(x_{-i}\mid \theta_i)\) is continuously differentiable in \(\theta_i\), and for every \(\theta_i\in\Theta_i\) and every \(x_{-i}\in\Theta_{-i}\),
\[
\left[
2f_i(\theta_i)+\theta_i f_i'(\theta_i)
\right]
F_{-i|i}(x_{-i}\mid \theta_i)
+
\theta_i f_i(\theta_i)
\frac{\partial}{\partial \theta_i}
F_{-i|i}(x_{-i}\mid \theta_i)
\ge 0.
\]
\end{definition}

When types are positively correlated, a higher value of \(\theta_i\) shifts the conditional distribution of opponents' valuations upward, so the conditional CDF \(F_{-i|i}(x_{-i}\mid\theta_i)\) may fall with \(\theta_i\). The condition requires the density-regularity term to dominate this correlation penalty.

When opponents are restricted to posted-price menus, firm \(i\)'s winning probability reduces to a conditional CDF of the joint distribution. This simplification allows us to formulate a primitive condition on the joint distribution \(\bm F\), independent of firm \(i\)'s choice of indirect utility \(U_i\), under which the switching function \(\Phi\) is monotone. The following proposition formalizes this result.

\begin{proposition}\label{prop:cor_post}
Suppose the joint distribution of \(\bm\theta=(\theta_i,\theta_{-i})\) satisfies correlation-adjusted density regularity for firm \(i\). Then, for any posted-price menu profile posted by competing firms, there exists a unique \(p_i^\star\in\Theta_i\) such that every best response of firm \(i\) induces the same outcome as the posted-price menu $M_i^\star=\{(0,0),(1,p_i^\star)\}$.

\end{proposition}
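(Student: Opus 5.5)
We reuse the optimal-control argument behind \cref{thm:br}. The only changes are that the winning probability becomes the conditional CDF $P_i^p(\theta_i,U)=F_{-i|i}(U+p_{-i}\mid\theta_i)$ and that the monotonicity of the switching function now comes from correlation-adjusted density-regularity. Fix the opponents' prices $p_{-i}$. Firm $i$'s problem is to choose a non-decreasing $q_i:\Theta_i\to[0,1]$ to maximize
\[
\int_{\Theta_i}\big[\theta_i q_i(\theta_i)-U_i(\theta_i)\big]F_{-i|i}(U_i(\theta_i)+p_{-i}\mid\theta_i)f_i(\theta_i)\,d\theta_i,
\]
with state equation $U_i'=q_i$ and initial condition $U_i(0)=0$. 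We first drop the monotonicity constraint and treat this as a relaxed control problem with control set $[0,1]$. Existence of an optimum follows because the feasible $U_i$ are $1$-Lipschitz, which gives compactness by Arzelà--Ascoli. The integrand is also concave in the control (it is linear in $q_i$) and continuous in the state, since $\bm f$ is continuous. Pontryagin's principle then yields a costate $\lambda^\star$ with $\lambda^\star(\bar\theta_i)=0$ and the switching function
\[
\Phi(\theta_i)=\theta_i F_{-i|i}(U_i^\star+p_{-i}\mid\theta_i)f_i(\theta_i)+\lambda^\star(\theta_i),
\]
together with the bang-bang rule: $q_i^\star=1$ where $\Phi>0$ and $q_i^\star=0$ where $\Phi<0$.

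The second step is monotonicity of $\Phi$. Substituting the costate equation $\lambda^{\star\prime}=F_{-i|i}f_i-(\theta_iq_i-U_i)\partial_UF_{-i|i}\,f_i$ cancels the $q_i$ term, as in the proof of \cref{thm:br}, and gives the displayed expression for $\Phi_i'$. Its first two terms are exactly the correlation-adjusted density-regularity expression evaluated at $x_{-i}=U_i^\star(\theta_i)+p_{-i}$, so they are nonnegative. When some coordinate of $x_{-i}$ lies above the support, we use the extension convention: the CDF is then the lower-dimensional marginal conditional CDF, and we must check that the inequality still holds there, either as a limit of the condition at boundary points $x_j=\bar\theta_j$ or by reading the condition as applying to the extended function. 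The third term $U_i^\star f_i\,\partial_U F_{-i|i}$ is nonnegative because $U_i^\star\ge0$ and the CDF is non-decreasing in its arguments. Hence $\Phi$ is non-decreasing, so the bang-bang control is automatically non-decreasing and therefore feasible for the original problem.

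The third step, which I expect to be the main obstacle, is ruling out a singular arc, i.e.\ an interval $[a,b]$ on which $\Phi\equiv0$. Unlike \cref{thm:br}, the condition here is only weak ($\ge0$), so strict monotonicity is not available for free. On such an arc we would need $\Phi'\equiv0$, which forces each nonnegative term to vanish. If $U_i^\star>0$ somewhere on the arc, then $\partial_UF_{-i|i}>0$, because $\bm f>0$ gives strictly positive conditional densities whenever the argument is interior. That makes the third term strictly positive, a contradiction. If instead $U_i^\star\equiv0$ on the arc, then $q_i^\star=0$ there, so the singular arc lies inside the region where the control is zero. I would first try to show this region can be absorbed into the $q_i=0$ part of the bang-bang structure. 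If that fails, I would argue that the vanishing of $\Phi'$ cannot persist once $U_i^\star$ starts to grow, so the region where the control is interior has measure zero. I would handle the boundary case $x_{-i}$ above the support, where the derivative in $U$ vanishes, separately: there firm $i$ wins with probability one, the problem is a monopoly problem, and Myerson's argument applies.

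Given these steps, $q_i^\star=\mathbf 1\{\theta_i\ge p_i^\star\}$ almost everywhere, where $p_i^\star$ is the unique zero crossing of $\Phi$. Every optimal trajectory is of this cutoff form, and the price equals the cutoff by the envelope formula. Uniqueness of $p_i^\star$ follows, as in \cref{thm:br}, by comparing candidate cutoffs: any two optimal step functions give equal-revenue posted prices. I would then argue that the necessary conditions along each optimal trajectory pin down a single crossing, reusing the appendix argument for \cref{thm:br}, which depends only on $\Phi$ being monotone without a flat zero set. Outcome equivalence of every best response then follows, since any menu inducing a different allocation on a positive-measure set is strictly suboptimal.
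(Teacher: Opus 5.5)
Your route is the paper's. Its proof writes $P_i^p=F_{-i|i}(U+p_{-i}\mid\theta_i)$, signs the first two terms of $\Phi_i'$ by correlation-adjusted density-regularity and the third by monotonicity of the CDF, and then defers to the proof of \cref{thm:br}.

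Your singular-arc discussion is more careful than that deferral. \cref{lemma:no_singular_arc} uses $2f_i+\theta_if_i'>0$ strictly, and the weak hypothesis here does not give that. Your third-term repair works whenever $\max_{j\ne i}(\bar\theta_j-p_j)>0$. On a zero interval $[a,b]$ of the non-decreasing $\Phi$ one has $q_i^\star=0$ below $a$, so $U_i^\star(a)=0$. If $U_i^\star$ turns positive on $[a,b]$, it must spend an interval of positive length with $U_i^\star>0$ and $U_i^\star+p_j<\bar\theta_j$ for some $j$. There $U_i^\star f_i\,\partial_UF_{-i|i}>0$, which is the contradiction you want.

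The remaining case, $p_j\ge\bar\theta_j$ for all $j\ne i$, is a genuine hole that Myerson's argument does not fill. Firm $i$ is then a monopolist, and at $x_{-i}=\bar\theta_{-i}$ the hypothesis yields only $2f_i+\theta_if_i'\ge0$, i.e.\ $\theta_i^2f_i$ non-decreasing. This permits a segment on which $f_i=\kappa/\theta_i^2$ and $\theta_if_i+F_i-1\equiv0$; with independent types such an $f_i$ satisfies the hypothesis. On that segment every price is optimal, and so is every non-decreasing allocation that switches inside it. So without strictness neither uniqueness nor \emph{every best response is a posted price} survives.

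The more fundamental gap is uniqueness of $p_i^\star$ across different optimal trajectories. The maximum principle does not deliver it, because its conditions are checked trajectory by trajectory. Along the cutoff trajectory $U=(\theta_i-c)_+$ one computes $\Phi_c(c)=-R'(c)$, where $R(c)=c\int_c^{\bar\theta_i}P_i^p(\theta_i,\theta_i-c)f_i(\theta_i)\,d\theta_i$ is the posted-price revenue. Your monotonicity argument makes $\Phi_c$ non-decreasing along every trajectory. Hence every interior critical point of $R$, even a local minimum, satisfies all the necessary conditions you invoke, and nothing in them compares two global maximizers.

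Such ties do occur. Take independent types, $\theta_i\sim U[0,1]$ (so the hypothesis holds), and one opponent posting $p_j=0$. Then $R(c)=c\int_0^{1-c}F_j(v)\,dv$. Let $F_j$ spread mass $\varepsilon$ uniformly on $[0,0.8]$ and put the rest in a sharp, smoothed lump just above $0.8$. Then $R$ has separated local maxima near $c=1/3$ and $c\approx0.15$, with values close to $\varepsilon/10.8$ and $(0.2+0.4\varepsilon)^2/4$. By continuity $\varepsilon$ can be tuned (near $0.23$) to make them exactly equal.

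The paper's deferral to \cref{thm:br} does not supply this step either. What your argument does establish is that every best response is outcome-equivalent to \emph{some} posted-price menu, outside the degenerate monopoly case. Uniqueness of the price needs an additional global argument or an assumption on $R$.
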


The same fixed-point argument used in \cref{cor:eqm} yields the following corollary.

\begin{cor}\label{cor:eqm_cor}
Suppose the joint distribution of \(\bm\theta\) satisfies correlation-adjusted density regularity for all \(1\le i\le n\). Then there exists a posted-price menu profile that constitutes an equilibrium in the menu-posting game.
\end{cor}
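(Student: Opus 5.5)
The plan mirrors the fixed-point argument behind \cref{cor:eqm}, using \cref{prop:cor_post} in place of \cref{thm:br}. First restrict attention to the price-posting game with strategy space $\prod_i \Theta_i$. By \cref{prop:cor_post}, applied to each firm $i$ (the hypothesis holds for every $i$), whenever all opponents post prices $p_{-i}$, every best response of firm $i$ among \emph{all} menus in $\mathcal{A}$ is outcome-equivalent to a posted-price menu with a unique price $p_i^\star$. Hence define $BR_i(p_{-i}) = p_i^\star$; this is a single-valued map $\prod_{j\ne i}\Theta_j \to \Theta_i$. It coincides with the unique maximizer of
\[
ER_i(p_i,p_{-i}) = p_i\int_{p_i}^{\bar\theta_i} F_{-i|i}(\theta_i - p_i + p_{-i}\mid\theta_i)\, f_i(\theta_i)\,d\theta_i
\]
over $p_i\in\Theta_i$. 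This holds because a posted-price menu is itself feasible in $\mathcal A$, and the optimum over $\mathcal A$ is attained at a posted price.

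Next I would establish continuity. The map $(p_i,p_{-i})\mapsto ER_i$ is jointly continuous on the compact set $\prod_j\Theta_j$. The integrand is bounded and the conditional CDF is continuous in its argument, because $\bm f$ is continuous and strictly positive on a product support. Continuity in $p_i$ through the lower limit of integration is immediate. Berge's maximum theorem then gives that the argmax correspondence is upper hemicontinuous and nonempty-valued. Since it is single-valued, it is a continuous function.

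The map $\bm p\mapsto (BR_1(p_{-1}),\dots,BR_n(p_{-n}))$ is therefore a continuous self-map of the compact convex set $\prod_i\Theta_i$, and Brouwer's theorem yields a fixed point $\bm p^\star$.

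Finally, I would lift the fixed point back to the menu-posting game. At $\bm p^\star$, each firm $i$ faces posted-price opponents. By \cref{prop:cor_post}, $\{(0,0),(1,p_i^\star)\}$ attains the maximal expected revenue over all of $\mathcal A$, so no unilateral menu deviation is profitable, and the posted-price profile is an equilibrium of the menu-posting game.

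The main subtlety is this last step. The best-response property must be guaranteed against arbitrary menu deviations, not just price deviations. That is exactly why \cref{prop:cor_post} is stated for deviations in $\mathcal A$ against posted-price opponents, which is the only case we need. A secondary check is that the extension convention for $F_{-i|i}$ beyond the support keeps $ER_i$ continuous; this follows because the CDF equals $1$ continuously at the upper boundary.
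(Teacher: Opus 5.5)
Your proposal is correct and follows essentially the same route as the paper. You restrict to the price-posting game, use \cref{prop:cor_post} to make $BR_i$ single-valued, get continuity from the continuous, strictly positive joint density together with Berge's theorem, apply Brouwer, and lift the fixed point back to the menu-posting game via \cref{prop:cor_post}; you simply spell out the continuity and lifting steps that the paper's one-line proof (which defers to \cref{cor:eqm}) leaves implicit.
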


At first glance, correlation-adjusted density-regularity may appear hard to verify, as the dependence structure among types can be arbitrarily complex. The condition becomes transparent and tractable, however, once the correlation structure is captured by a finite-dimensional parameter. The following example illustrates this.

\begin{example}
Consider the duopoly case with support \([0,1]^2\). Suppose both marginal
distributions are uniform:
\[
F_1(\theta_1)=\theta_1,
\qquad
F_2(\theta_2)=\theta_2.
\]
Let the joint distribution be given by the Farlie--Gumbel--Morgenstern copula
\[
F(\theta_1,\theta_2)
=
\theta_1\theta_2
\left[
1+\rho(1-\theta_1)(1-\theta_2)
\right],
\]
where \(\rho\in[-1,1]\). If \(\rho\le0\), then types are negatively correlated, and \cref{prop:correlated} applies. Hence, it remains to verify correlation-adjusted density-regularity for firm \(1\) when \(\rho>0\). Since the marginal density of \(\theta_1\) is uniform, we have
\[
2f_1(\theta_1)+\theta_1 f_1'(\theta_1)=2.
\]

The conditional CDF of \(\theta_2\) given \(\theta_1\) is
\[
F_{2|1}(\theta_2\mid \theta_1)
=
\frac{\partial F(\theta_1,\theta_2)/\partial \theta_1}
{f_1(\theta_1)}.
\]
Because \(f_1(\theta_1)=1\), differentiating the joint CDF gives
\begin{align*}
F_{2|1}(\theta_2\mid \theta_1)
&=
\theta_2
\left[
1+\rho(1-2\theta_1)(1-\theta_2)
\right],\\
\frac{\partial}{\partial \theta_1}
F_{2|1}(\theta_2\mid \theta_1)
&=
-2\rho \theta_2(1-\theta_2).
\end{align*}
Correlation-adjusted density-regularity for firm \(1\) requires
\[
\left[
2f_1(\theta_1)+\theta_1 f_1'(\theta_1)
\right]
F_{2|1}(\theta_2\mid\theta_1)
+
\theta_1 f_1(\theta_1)
\frac{\partial}{\partial\theta_1}
F_{2|1}(\theta_2\mid\theta_1)
\ge 0
\]
for every \((\theta_1,\theta_2)\in[0,1]^2\). Substituting the expressions above,
this condition becomes
\begin{align*}
&2\theta_2
\left[
1+\rho(1-2\theta_1)(1-\theta_2)
\right]
-
2\rho\theta_1\theta_2(1-\theta_2)
\ge 0\\
\iff\;&
2\theta_2
\left[
1+\rho(1-\theta_2)(1-3\theta_1)
\right]
\ge 0.
\end{align*}
For \(\theta_2>0\), this is equivalent to
\[
1+\rho(1-\theta_2)(1-3\theta_1)\ge0.
\]
By continuity, the same condition characterizes the boundary case \(\theta_2=0\).

If \(\rho\ge0\), the left-hand side is minimized at \((\theta_1,\theta_2)=(1,0)\), and the minimum value is \(1-2\rho\). Therefore, for positive correlation, the condition holds if and only if
\[
\rho\le \frac12.
\]

Hence, the posted-price menu profile constitutes an equilibrium under the FGM distribution for every \(\rho\in[-1,1/2]\): for \(\rho\le0\), this follows from the negative-correlation result, while for \(0<\rho\le1/2\), it follows from the calculation above, and the correlation-adjusted density-regularity is satisfied. By symmetry, the same condition also holds for firm \(2\). 

Notice that ordinary density-regularity holds for all \(\rho\in[-1,1]\), since the marginal densities are uniform. However, density-regularity alone does not guarantee any posted-price result when \(\rho>0\). This example illustrates how correlation-adjusted density-regularity extends the equilibrium analysis to a class of positively correlated environments.
\end{example}

\subsection{Perfect Correlation}

We conclude this section by examining the case that type distributions are perfectly correlated. We focus on the following perfectly correlated type structure. There exists a
common scalar type \(\theta\), drawn from a continuously differentiable
distribution \(F\) with strictly positive density \(f\) on \([0,\bar\theta]\).
The agent's valuation for firm \(i\)'s good is
\[
    \theta_i=\theta+c_i,
\]
where \(c_i\in\mathbb R_{\ge0}\) is a firm-specific constant. Hence
\(\theta_i\) is distributed on \([c_i,\bar\theta+c_i]\). We denote the marginal
density of \(\theta_i\) by \(f_i\). Since \(\theta_i=\theta+c_i\), this density
satisfies
\[
    f_i(\theta_i)=f(\theta_i-c_i),
    \qquad
    \theta_i\in[c_i,\bar\theta+c_i] \equiv \Theta_i.
\]

One might hope that the perfect-correlation structure imposes enough 
discipline to restore the result that posted-price menu is the best response to any arbitrary opponents' menu profile. The following counterexample 
shows that this is not the case: even under perfect correlation, posting 
a posted-price menu could be suboptimal under a specific opponent's menu.

\begin{example}\label{ex:cor}
    Consider a duopoly case in which $\theta_1=\theta_2=\theta$, and
    $\theta\sim U[0,1]$. Assume that firm 2 offers the following menu 
    \[
    M_2 = \{ (q,\frac{1}{2}q^2):q \in [0,1\}],
    \]
    that induces the
    following indirect utility:
    \[
        U_2(\theta)=\frac{1}{2}\theta^2.
    \]
    If firm 1 adopts a posted-price menu and posts price $p$, the agent buys
    from firm 1 if her utility from buying from firm 1 exceeds the outside
    option:
    \[
        \theta-p\ge \frac{1}{2}\theta^2.
    \]
    Let $\theta_c$ be the marginal type who is exactly indifferent. Then
    \[
        p=\theta_c-\frac{1}{2}\theta_c^2.
    \]
    Firm 1's expected revenue from a posted price is therefore
    \[
        ER_{\text{post}}(\theta_c)
        =
        \left(\theta_c-\frac{1}{2}\theta_c^2\right)(1-\theta_c).
    \]
    The first-order condition gives
    \[
        \theta_c^\star=1-\frac{1}{\sqrt{3}},
        \qquad
        p^\star=\frac{1}{3},
        \qquad
        ER_{\text{post}}^\star=\frac{1}{3\sqrt{3}}.
    \]

    Now consider the following alternative menu. Let $\delta>0$ be sufficiently
    small, and let firm 1 choose the allocation rule
    \[
        q_1(\theta)=
        \begin{cases}
            \theta+\delta, & \text{if } 0\le \theta<\frac{1}{2},\\
            1, & \text{if } \frac{1}{2}\le \theta\le 1.
        \end{cases}
    \]
    The induced indirect utility is
    \[
        U_1(\theta)=\int_0^\theta q_1(\tau)\,d\tau,
    \]
    which gives
    \[
        U_1(\theta)=
        \begin{cases}
            \frac{1}{2}\theta^2+\delta\theta,
            & \text{if } 0\le \theta<\frac{1}{2},\\
            \theta-\frac{3}{8}+\frac{\delta}{2},
            & \text{if } \frac{1}{2}\le \theta\le 1.
        \end{cases}
    \]
    For every $\theta>0$, we have
    \[
        U_1(\theta)>U_2(\theta).
    \]
    Thus firm 1 wins every type except possibly $\theta=0$, which is a
    measure-zero event and hence does not affect expected revenue.

    The corresponding transfer is
    \[
        t(\theta)=\theta q_1(\theta)-U_1(\theta).
    \]
    Therefore,
    \[
        t(\theta)=
        \begin{cases}
            \frac{1}{2}\theta^2,
            & \text{if } 0\le \theta<\frac{1}{2},\\
            \frac{3}{8}-\frac{\delta}{2},
            & \text{if } \frac{1}{2}\le \theta\le 1.
        \end{cases}
    \]
    For sufficiently small $\delta>0$, all transfers are nonnegative. Firm 1's
    expected revenue from this menu is
    \begin{align*}
        ER_{\text{menu}}
        &=
        \int_0^{1/2}\frac{1}{2}\theta^2\,d\theta
        +
        \int_{1/2}^1
        \left(\frac{3}{8}-\frac{\delta}{2}\right)d\theta \\
        &=
        \frac{1}{48}+\frac{3}{16}-\frac{\delta}{4} \\
        &=
        \frac{5}{24}-\frac{\delta}{4}.
    \end{align*}
    Since
    \[
        \frac{5}{24}>\frac{1}{3\sqrt{3}},
    \]
    there exists a sufficiently small $\delta>0$ such that
    \[
        ER_{\text{menu}}>ER_{\text{post}}^\star.
    \]
    Therefore, a posted-price menu is not firm 1's best response.
\end{example}

\Cref{ex:cor} is inspired by \textcite{jullien2000participation}. As types become perfectly correlated, the competing offers provided by opponent firms can be viewed as type-dependent outside options. Firms need not guarantee full participation across all types; in fact, they may be better off by excluding some bottom types. This is precisely the problem studied in Section 4 of \textcite{jullien2000participation}, where it is shown that the optimal mechanism may not take the form of a posted price.

This setting differs importantly from those studied in \textcite{dworczak2024mechanism} and \textcite{augias2025economics}, where the type-dependent outside option is treated as a hard participation constraint that the principal must satisfy while solving the optimization problem. Their findings support the conclusion that, under type-dependent participation constraints, the posted price remains optimal under Myerson's regularity condition. This contrast delivers the message that competition among firms introduces complications well beyond those of the standard monopoly screening problem: because each firm has the freedom to endogenously choose which types to exclude, the resulting optimization problem becomes substantially more difficult to solve.

As before, when posted-price menus do not constitute the unique best response structure, we ask whether a posted-price menu profile can nonetheless sustain an equilibrium. Fortunately, this can be established under density-regularity.

\begin{proposition}\label{prop:per_cor}
Assume that types are perfectly correlated and that \(\theta_i=\theta+c_i\) for all \(1\le i\le n\). If each marginal density \(f_i\) is density-regular, then for any posted-price menu profile posted by competing firms such that $p_j>c_j-c_i \
\text{for all } j\ne i$, there exists a unique \(p_i^\star\in\Theta_i\) such that every best response of firm \(i\) induces the same outcome as the posted-price menu $M_i^\star=\{(0,0),(1,p_i^\star)\}$.

\end{proposition}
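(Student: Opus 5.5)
Under perfect correlation the competitive environment facing firm $i$ collapses to a deterministic, type-dependent outside option. I would exploit this to rewrite the best-response problem as a truncated monopoly problem, to which density-regularity applies directly.

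\textbf{Step 1: the outside option.} Since $\theta_j=\theta_i-c_i+c_j$, each opponent posting $p_j$ gives type $\theta_i$ utility $\max\{0,\theta_i-(p_j-c_j+c_i)\}$. Define
\[
d=\min_{j\ne i}(p_j-c_j+c_i),
\]
which is strictly positive by the hypothesis $p_j>c_j-c_i$. The best competing offer is then $(\theta_i-d)^+$. Hence the interim winning probability is the indicator
\[
P_i(\theta_i,U)=\mathbbm{1}\{U\ge(\theta_i-d)^+\}.
\]

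\textbf{Step 2: firm $i$ wins a lower interval of types.} For any menu of firm $i$, $U_i\ge0$, so firm $i$ wins every type with $\theta_i\le d$. On $\theta_i\ge d$, the gap $g(\theta_i)=U_i(\theta_i)-(\theta_i-d)$ has derivative $q_i(\theta_i)-1\le0$, so it is non-increasing. Therefore the set of types won by firm $i$ is, up to measure zero, an interval $[c_i,\theta^\ast]$. If $\theta^\ast>d$, this interval is characterized by the constraint $U_i(\theta^\ast)\ge\theta^\ast-d$.

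Fixing $\theta^\ast$, the problem becomes a monopoly problem on $[c_i,\theta^\ast]$: maximize $\int_{c_i}^{\theta^\ast}(\theta q-U)f_i\,d\theta$ subject to monotone $q$, $U'=q$, and the endpoint constraint. The difficulty in the general case was that $P_i$ sits inside the integrand. Here it only truncates the domain and adds a single scalar constraint on $\int q$.

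\textbf{Step 3: single crossing via a Lagrangian.} I would integrate by parts, exactly as in the monopoly benchmark discussed after \cref{thm:br}, and attach a multiplier $\mu\ge0$ to the endpoint constraint. This gives a pointwise objective in $q$ with coefficient
\[
\Phi(\theta)=\theta f_i(\theta)+F_i(\theta)-F_i(\theta^\ast)+\mu .
\]
This coefficient is the density-weighted virtual value shifted by a constant. Its derivative is $2f_i+\theta f_i'>0$ by density-regularity, so $\Phi$ is strictly increasing and crosses zero at most once. Consequently the optimal $q$ is a step function from $0$ to $1$ with no singular arc, so every best response is outcome-equivalent to a posted price.

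An equivalent route is the Hamiltonian argument of the main proof, in which the switching-function derivative reduces to the density-regularity term. Either way, this is the step I expect to need the most care. The issues are making the Lagrangian/Pontryagin argument rigorous with the truncation point $\theta^\ast$ chosen jointly, and handling the case where the constraint binds. A clean way around both is to first show that any optimal menu with $\theta^\ast<\bar\theta+c_i$ can be weakly improved, and then argue directly.

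\textbf{Step 4: reduce to a one-dimensional price choice.} With a posted price $p$, we have $U_i=(\theta_i-p)^+$. This beats $(\theta_i-d)^+$ for all types iff $p\le d$. If $p>d$, firm $i$ wins no buyer at a positive price. So the reduced problem is to maximize $p(1-F_i(p))$ over $p\in[c_i,\min\{d,\bar\theta+c_i\}]$, together with the trivial option $p=c_i$, where every type buys. The derivative is
\[
1-F_i(p)-pf_i(p)=-v_i(p)f_i(p),
\]
which is strictly decreasing by density-regularity. Hence the objective is strictly concave and has a unique maximizer $p_i^\star$. Combined with the strict single crossing from Step 3, this delivers both the uniqueness of $p_i^\star$ and the claim that every best response induces the same outcome as $M_i^\star$.
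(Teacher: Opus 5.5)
Your route is sound and differs genuinely from the paper's. The paper applies Pontryagin's principle directly to the problem with the winning indicator $P_i=\mathbf{1}\{U_i>O_i(\theta_i)\}$, taking the costate equation $\lambda_i'=P_if_i$ away from contact points. It then shows four things: $\Phi_i>0$ forces a strict win; $\Phi_i'=2f_i+\theta_if_i'>0$ on winning intervals; a zero set of positive measure would give zero revenue, which any price in $(0,\bar p_{-i})$ beats; and $\{\Phi_i>0\}$ is an upper interval, because with $q_i=1$ the gap $U_i-O_i$ cannot fall.

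You instead use the slope-one outside option and $q_i\le 1$ to show that, for every menu, the winning set is a lower interval. This turns the best-response problem into a truncated linear monopoly problem with a single scalar constraint. Integration by parts and one multiplier then give an explicit pointwise coefficient: the density-weighted virtual value shifted by the constant $\mu-F_i(\theta^\ast)$. What your approach buys:
\begin{itemize}
\item You never apply the maximum principle to an integrand that is discontinuous in the state.
\item Duality is standard. Slater's condition holds because $q_i\equiv1$ gives $U_i(\theta^\ast)=\theta^\ast>\theta^\ast-d$, which is exactly where $d>0$ enters.
\item The constant shift makes it transparent why Myerson regularity no longer suffices.
\item Step 4 identifies the best response explicitly as the maximizer of $p(1-F_i(p))$ over $p\le d$, a capped monopoly price. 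The paper does not state this.
\end{itemize}
The paper's route buys uniformity with the proof of \cref{thm:br}.

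A few points need tightening.
\begin{itemize}
\item The support starts at $c_i>0$, so $U_i(c_i)$ is a choice variable with $0\le U_i(c_i)\le c_iq_i(c_i)$. Your Lagrangian is therefore missing the boundary term $U_i(c_i)\bigl(\mu-F_i(\theta^\ast)\bigr)$. Complementary slackness handles it: if $\mu=F_i(\theta^\ast)>0$, then $\Phi(c_i)=c_if_i(c_i)>0$, so $q_i=1$ on $[c_i,\theta^\ast]$, the constraint binds, and the price is $d$. The paper's proof passes over the same point by imposing $U_i(0)=0$ on a support that starts at $c_i$.
\item For the claim about \emph{every} best response, you need strict rather than weak improvement when $\theta^\ast<\bar\theta+c_i$. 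It is strict: the truncated optimum is a posted price $\hat p\in(0,d]$ that wins every type, so extending it to the whole support adds $\hat p\,(1-F_i(\theta^\ast))>0$.
\item If $d<c_i$, the interval in Step 4 is empty and $p=c_i$ loses every buyer; the optimum is then $p=d$.
\item You count ties as wins, whereas the paper counts them as losses. Your value is an upper bound for any tie-breaking rule and is attained whenever $p_i^\star<d$. At $p_i^\star=d$, whether it is attained depends on the rule under either convention.
\end{itemize}
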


The condition \(p_j>c_j-c_i\) for all \(j\ne i\) ensures that firm \(i\) can win the contract with positive probability. If \(p_j\le c_j-c_i\) for some \(j\ne i\), then firm \(i\) cannot strictly beat firm \(j\) even by giving the good away for free. In that case, firm \(i\)'s maximal attainable revenue may be zero, and the best response need not be unique.

The existence of a posted-price equilibrium is guaranteed: any profile in which at least two firms $i$ post sufficiently low price such that no firms can earn positive revenue constitutes an equilibrium\footnote{Formally, let $c_i = \max_{1 \le j \le n}c_j$. Any price no larger than $c_i - \max_{j \ne i} c_j$ will be low enough to prevent firms from earning positive expected revenue regardless of any unilateral deviation.}.
Nevertheless, \cref{prop:per_cor} is not a vacuous result. In the monopoly screening problem, the result that a posted-price menu is optimal can be established without any regularity condition. Under competition, however, this is no longer the case, as the following example demonstrates.

\begin{example}
Consider a duopoly with perfectly correlated types: \(\theta_1=\theta_2=\theta\). Let the type distribution be the two-point distribution
\[
\theta\in\{\theta_L,\theta_H\}=\{0.4,1\},
\qquad
\Pr(\theta=0.4)=\Pr(\theta=1)=\frac12.
\]
Assume that firm \(2\) posts the price \(p_2=0.7\). Thus the agent's outside utility from firm \(2\) is
\[
U_2(\theta)=(\theta-0.7)_+.
\]

We first compute the best posted-price response of firm \(1\). Suppose firm \(1\) posts price \(p\). If \(p<0.4\), both types strictly prefer firm \(1\), so firm \(1\)'s revenue is
\[
ER_{\mathrm{post}}(p)=p.
\]
Hence the supremum pooling posted-price revenue is \(0.4\).

If \(0.4<p<0.7\), only the high type buys from firm \(1\), so
\[
ER_{\mathrm{post}}(p)=\frac12p<0.35<0.4.
\]
If \(p=0.7\), the high type is indifferent between the two firms. Under any fixed tie-breaking rule, firm \(1\)'s revenue is at most
\[
\frac12\cdot 0.7=0.35<0.4.
\]
If \(p>0.7\), firm \(1\) loses the high type to firm \(2\), and the low type does not buy from firm \(1\). Therefore, the supremum revenue that firm \(1\) can obtain from posted-price menus is \(0.4\).

Now consider the following non-posted menu for firm \(1\):
\[
M_1=
\left\{
(0,0),
\left(0.5,0.2-\varepsilon\right),
\left(1,0.7-2\varepsilon\right)
\right\},
\]
where \(0<\varepsilon<1/30\). The low type strictly chooses \((0.5,0.2-\varepsilon)\), obtaining utility \(\varepsilon>0\), while the high type strictly chooses \((1,0.7-2\varepsilon)\), obtaining utility \(0.3+2\varepsilon\). Both types therefore strictly prefer firm \(1\) to firm \(2\).

Firm \(1\)'s expected revenue from this menu is
\[
ER_{\mathrm{menu}}
=
\frac12(0.2-\varepsilon)
+
\frac12(0.7-2\varepsilon)
=
0.45-\frac32\varepsilon.
\]
Since \(0<\varepsilon<1/30\),
\[
0.45-\frac32\varepsilon>0.4.
\]
Thus firm \(1\)'s best response to firm \(2\)'s posted-price menu is not a posted-price menu.

This example is written as a two-point distribution for transparency. It can be approximated by a smooth density with two sharp peaks around \(0.4\) and \(1\), with a small positive density between the peaks. In the low-density region between the two peaks, \(f(\theta)\) can be made arbitrarily small while \(1-F(\theta)\) remains bounded away from zero. The virtual value falls sharply on that region, and hence the density-regularity fails. Since the revenue gap above is strict, the same strict inequality survives for all sufficiently close smooth approximations of the two-point distribution. We conclude that the best response to an opponent's posted-price menu need not be a posted-price menu when density-regularity is violated.
\end{example}

A final remark of this section is that allowing for correlated types can make each firm's best response considerably more complex. \textcite{jullien2000participation} solves the extreme case in which the two goods are treated as perfect substitutes, and even in this case the optimal mechanism can already take a complicated form. In general, the correlation structure could be richer, and how this additional information enters each firm's optimization problem is likely to be intractable in closed form. We consider this an open question worthy of future study.

    \section{Concluding Remarks}\label{sec:conclude}

This paper develops a tractable model of common agency with multidimensional types. The model is general enough to capture markets in which firms offer differentiated varieties of a substitutable good for two reasons. First, we abstract away from restrictive utility assumptions, requiring only that the agent's valuations for competing goods are drawn independently from given distributions. Second, the menu-posting game involves no loss of generality, as it can replicate any equilibrium outcome of the broader and more complex communication game inherent in common agency problems.

While previous literature suggests that multidimensional screening should involve complex mechanisms, posted pricing remains the most prevalent pricing scheme in real-world markets. Our paper bridges the gap between theory and empirical reality by identifying a sufficient condition---density-regularity---under which competition simplifies the screening problem, making the posted-price menu the unique best response to any menu posted by competitors. Beyond its theoretical interest, this result offers practical guidance to firms: whether posted pricing is optimal depends on whether the market in which the firm operates satisfies density-regularity. Conversely, if a firm does not adopt posted pricing in a market satisfying density-regularity, this may suggest that the firm perceives itself as enjoying substantial market power---an observation with potential implications for antitrust regulation. We believe this application of our result is a promising direction for future research, with the potential to connect more closely to real-world markets and to guide both business strategy and regulatory policy.

Some limitations of our study are also worth exploring further from a theoretical perspective. First, we restrict attention to the duopoly case when analyzing non-exclusive contracting. In this setting, the only strategically relevant variable is the difference in the agent's valuations for the two goods, which allows us to reduce the problem to one-dimensional screening. Introducing more than two firms precludes this dimensional reduction and makes the multidimensional screening problem substantially more difficult. Extending the analysis to more than two firms would be challenging but valuable for assessing the robustness of our results.

Second, when allowing type distributions to be correlated, our main result extends only to the case of negative correlation. We expect the analysis to become largely intractable once arbitrary correlation in the type distribution is allowed. It would be valuable to develop a framework that captures empirically realistic patterns of correlation while still delivering positive results---a direction we leave to future research.

\nocite{}  
\printbibliography 

\appendix
\section{Omitted Proof of \cref{sec:e}}

\subsection{Proof of \cref{thm:br}}

We prove \cref{thm:br}. Under the maintained assumption that transfers are nonnegative and that every menu contains the walk-away contract \((0,0)\), the induced indirect utility satisfies
\[
U_i(0)=0.
\]
Moreover, by the envelope theorem,
\[
U_i'(\theta_i)=q_i(\theta_i)
\quad \text{a.e.}
\]
Thus firm \(i\)'s problem can be written as choosing an allocation rule \(q_i:\Theta_i\to[0,1]\), with associated state variable \(U_i\), to maximize
\[
ER_i
=
\int_{\Theta_i}
\Big[
\theta_iq_i(\theta_i)-U_i(\theta_i)
\Big]
P_i(U_i(\theta_i))f_i(\theta_i)\,d\theta_i,
\]
subject to
\[
U_i'(\theta_i)=q_i(\theta_i),
\qquad
U_i(0)=0.
\]

We first solve the relaxed optimal control problem in which the monotonicity constraint on \(q_i\) is ignored. We then show that the optimal control obtained from the relaxed problem is monotone under density-regularity, and hence feasible for the original problem.

Let \(\lambda(\theta_i)\) denote the costate variable. The Hamiltonian is
\begin{align*}
\mathcal H(\theta_i)
&=
\Big[
\theta_i q_i(\theta_i)-U_i(\theta_i)
\Big]
P_i(U_i(\theta_i))f_i(\theta_i)
+
\lambda(\theta_i)q_i(\theta_i) \\
&=
q_i(\theta_i)
\Big[
\theta_iP_i(U_i(\theta_i))f_i(\theta_i)
+
\lambda(\theta_i)
\Big]
-
U_i(\theta_i)P_i(U_i(\theta_i))f_i(\theta_i).
\end{align*}

Let \((q_i^\star,U_i^\star,\lambda^\star)\) be an optimal trajectory of the relaxed problem. By Pontryagin's Maximum Principle, the optimal control maximizes the Hamiltonian pointwise almost everywhere. Define the switching function
\[
\Phi(\theta_i)
\equiv
\theta_iP_i(U_i^\star(\theta_i))f_i(\theta_i)
+
\lambda^\star(\theta_i).
\]
Since the Hamiltonian is linear in \(q_i\), the optimal control satisfies
\[
\Phi(\theta_i)>0
\implies
q_i^\star(\theta_i)=1,
\qquad
\Phi(\theta_i)<0
\implies
q_i^\star(\theta_i)=0.
\]
The costate equation is
\[
\lambda^{\star\prime}(\theta_i)
=
P_i(U_i^\star(\theta_i))f_i(\theta_i)
-
\Big[
\theta_iq_i^\star(\theta_i)-U_i^\star(\theta_i)
\Big]
P_i'(U_i^\star(\theta_i))f_i(\theta_i),
\]
with transversality condition
\[
\lambda^\star(\bar\theta_i)=0.
\]

We now show that the switching function is non-decreasing. Differentiating \(\Phi\) and substituting the costate equation gives
\begin{align}
\Phi'(\theta_i)
&=
\Big[
P_i(U_i(\theta_i))f_i(\theta_i)
+
\theta_iP_i'(U_i(\theta_i))q_i(\theta_i)f_i(\theta_i)
+
\theta_iP_i(U_i(\theta_i))f_i'(\theta_i)
\Big]
+
\lambda'(\theta_i) \notag \\
&=
\Big[
2f_i(\theta_i)+\theta_i f_i'(\theta_i)
\Big]
P_i(U_i(\theta_i))
+
U_i(\theta_i)f_i(\theta_i)P_i'(U_i(\theta_i)).
\label{eq:phi_prime}
\end{align}
Since \(P_i\ge0\), \(P_i'\ge0\), \(U_i\ge0\), and \(f_i>0\), the second term is nonnegative. Density-regularity,
\[
2f_i(\theta_i)+\theta_i f_i'(\theta_i)>0
\quad
\forall\,\theta_i\in\Theta_i,
\]
therefore implies that
\[
\Phi'(\theta_i)\ge0.
\]
Hence \(\Phi\) is non-decreasing.

It remains to address the possibility that \(\Phi\) is equal to zero on a non-degenerate interval. On such an interval, the Hamiltonian maximization condition does not uniquely determine \(q_i^\star\). The following lemma rules out any payoff-relevant singular interval.

\begin{lemma}\label{lemma:no_singular_arc}
Assume that \(f_i\) is density-regular. Let \((q_i^\star,U_i^\star,\lambda^\star)\) be an optimal trajectory. There cannot exist an interval \([\theta_a,\theta_b]\subseteq\Theta_i\), with \(\theta_b>\theta_a\), such that
\[
\Phi^\star(\theta_i)=0
\quad
\text{and}
\quad
U_i^\star(\theta_i)>0
\]
for all \(\theta_i\in[\theta_a,\theta_b]\).
\end{lemma}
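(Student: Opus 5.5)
We argue by contradiction. Suppose there is a nondegenerate interval $[\theta_a,\theta_b]\subseteq\Theta_i$ on which $\Phi^\star\equiv 0$ and $U_i^\star>0$. Since $\Phi^\star$ is constant on this interval, its derivative vanishes almost everywhere there. The strategy is to show instead that \eqref{eq:phi_prime} is strictly positive on $[\theta_a,\theta_b]$.

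On a singular arc the control $q_i^\star$ may take interior values, so we cannot use the bang-bang structure. We therefore work directly with the expression \eqref{eq:phi_prime}. That expression was obtained by substituting the costate equation into $\Phi'$, and it holds wherever $U_i^\star$ is differentiable, whatever the value of $q_i^\star$. The $q_i^\star$ terms canceled: $\theta_iP_i'q_i f_i$ from differentiating $\theta_iP_if_i$ offsets $-\theta_iq_iP_i'f_i$ from $\lambda'$. So on $[\theta_a,\theta_b]$ we have, almost everywhere,
\[
0=\Phi^{\star\prime}(\theta_i)=\big[2f_i(\theta_i)+\theta_if_i'(\theta_i)\big]P_i(U_i^\star(\theta_i))+U_i^\star(\theta_i)f_i(\theta_i)P_i'(U_i^\star(\theta_i)).
\]
Both terms are nonnegative. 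Density-regularity makes the first coefficient strictly positive, so we get $P_i(U_i^\star(\theta_i))=0$ a.e. on the interval. Then, since $U_i^\star>0$ and $f_i>0$, we also need $P_i'(U_i^\star)=0$.

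Next we use $P_i(U_i^\star)=0$ on a set of positive measure to contradict $\Phi^\star\equiv0$ through the costate. By monotonicity of $P_i$ and $U_i^\star$, $P_i(U_i^\star(\theta))=0$ for all $\theta\le\theta_b$. Hence firm $i$ earns nothing from types below $\theta_b$, and
\[
\Phi^\star(\theta_a)=\lambda^\star(\theta_a)=-\int_{\theta_a}^{\bar\theta_i}\lambda^{\star\prime}.
\]
We then compare this with an explicit deviation. Lowering $q_i^\star$ to $0$ on $[0,\theta_b]$ reduces $U_i^\star$ on $[\theta_b,\bar\theta_i]$ by the constant $U_i^\star(\theta_b)>0$. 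This keeps the trajectory feasible and raises transfers $\theta q-U$ for all higher types. The open question is whether it can lower the winning probability. The argument to make is that on a singular arc with $U_i^\star>0$ but $P_i=0$, the firm is giving away utility to types it never wins. Shifting $U$ down and starting the arc at the first type with $P_i>0$ is then strictly profitable, unless no such type exists, in which case the optimal revenue is zero. That degenerate case must be handled separately: any posted price below the opponents' reach yields positive revenue whenever positive revenue is attainable, and otherwise the interval is payoff-irrelevant.

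The main obstacle is this last step. Showing $P_i=0$ on the arc is immediate from density-regularity. Turning it into a contradiction requires showing that the translated trajectory strictly improves revenue. Translating $U_i$ down lowers $P_i(U_i)$ for higher types, so the comparison is not pointwise. I would first try to avoid the issue: redefine $q_i^\star$ on the null-revenue region $\{P_i(U_i^\star)=0\}$ so that $U_i^\star=0$ up to the threshold $\underline u=\sup\{u:P_i(u)=0\}$, arguing that giving types utility up to $\underline u$ is pure waste. If that fails, I would fall back on the weaker, payoff-relevant reading of the lemma, which is what the statement's phrasing ``payoff-relevant singular interval'' suggests.
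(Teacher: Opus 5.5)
Your first step is correct. On the putative arc, \eqref{eq:phi_prime} holds whatever the value of $q_i^\star$, since the $q_i$ terms cancel. Both summands are nonnegative, and density-regularity then forces $P_i(U_i^\star(\theta_i))=0$ almost everywhere on $[\theta_a,\theta_b]$.

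The gap is in what comes next. You treat ``$P_i(U_i^\star)=0$ while $U_i^\star>0$'' as a possible, if wasteful, configuration. You then try to rule it out with a translation deviation. You acknowledge you cannot finish that argument, because lowering $U_i$ also lowers $P_i$ at higher types. Your fallback to a ``payoff-relevant'' reading would not prove the lemma as stated, which excludes such an interval for every optimal trajectory.

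The missing observation is that this configuration is impossible outright: $P_i(u)>0$ for every $u>0$. Every opponent menu contains $(0,0)$ and has only contracts with $q_j\le 1$ and $t_j\ge 0$. Hence $U_j(\theta_j)\le\theta_j$; in particular $U_j(0)=0$ and $U_j$ is continuous. So every opponent type $\theta_j\le\min\{u,\bar\theta_j\}$ is weakly beaten by utility $u$. By independence and strictly positive densities,
\[
P_i(u)\ \ge\ \prod_{j\ne i}F_j\bigl(\min\{u,\bar\theta_j\}\bigr)\ >\ 0 .
\]
This is exactly the key step in the paper's proof. Since $U_i^\star>0$ on the interval, $P_i(U_i^\star)>0$ there. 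The right-hand side of \eqref{eq:phi_prime} is therefore strictly positive almost everywhere on $[\theta_a,\theta_b]$, which contradicts $\Phi^\star\equiv 0$.

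With this fact, your own first step already gives the contradiction. The deviation argument becomes unnecessary. So does your separate ``zero optimal revenue'' case, which cannot arise for the same reason: any posted price below $\bar\theta_i$ earns positive revenue.
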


\begin{proof}
Suppose, toward a contradiction, that such an interval exists. Since \(\Phi^\star\equiv0\) on \([\theta_a,\theta_b]\), we have
\[
\Phi^{\star\prime}(\theta_i)=0
\quad
\text{a.e. on }[\theta_a,\theta_b].
\]
Using \eqref{eq:phi_prime},
\[
0
=
\Big[
2f_i(\theta_i)+\theta_i f_i'(\theta_i)
\Big]
P_i(U_i^\star(\theta_i))
+
U_i^\star(\theta_i)f_i(\theta_i)P_i'(U_i^\star(\theta_i))
\]
almost everywhere on this interval.

The first bracket is strictly positive by density-regularity. Moreover, since \(U_i^\star(\theta_i)>0\), we have
\[
P_i(U_i^\star(\theta_i))>0.
\]
Indeed, for each opponent \(j\ne i\), nonnegative transfers and the walk-away contract imply \(U_j(0)=0\). Since \(U_j\) is continuous and \(U_i^\star(\theta_i)>0\), a positive-measure set of sufficiently low \(\theta_j\)'s satisfies
\[
U_j(\theta_j)<U_i^\star(\theta_i).
\]
Because each type distribution has full support, the probability that all opponents' utilities are below \(U_i^\star(\theta_i)\) is strictly positive. Thus \(P_i(U_i^\star(\theta_i))>0\).

The second term is nonnegative because \(U_i^\star(\theta_i)>0\), \(f_i(\theta_i)>0\), and \(P_i'\ge0\). Therefore,
\[
\Phi^{\star\prime}(\theta_i)>0
\]
almost everywhere on \([\theta_a,\theta_b]\), contradicting \(\Phi^\star\equiv0\) on this interval.
\end{proof}

Since \(\Phi^\star\) is non-decreasing, the set on which \(\Phi^\star=0\) is an interval, possibly degenerate. By \cref{lemma:no_singular_arc}, this interval cannot contain a non-degenerate subinterval on which \(U_i^\star>0\). On any interval where \(U_i^\star=0\), the envelope condition implies \(q_i^\star=0\) almost everywhere. Hence the Hamiltonian maximization condition implies
\[
q_i^\star(\theta_i)
=
\mathbf 1\{\Phi^\star(\theta_i)>0\}
\quad
\text{a.e.}
\]
Since \(\Phi^\star\) is non-decreasing, there exists a cutoff \(\theta_i^\star\in\Theta_i\) such that
\[
q_i^\star(\theta_i)
=
\begin{cases}
0, & \theta_i<\theta_i^\star,\\
1, & \theta_i\ge \theta_i^\star,
\end{cases}
\quad
\text{a.e.}
\]
Thus the relaxed optimal allocation rule is a step function. Since this allocation rule is non-decreasing, it is feasible for the original problem.

Finally, because \(U_i(0)=0\), the envelope condition yields
\[
U_i^\star(\theta_i)
=
\int_0^{\theta_i}q_i^\star(\tau)\,d\tau
=
\max\{\theta_i-\theta_i^\star,0\}.
\]
The corresponding transfer is
\[
t_i^\star(\theta_i)
=
\theta_iq_i^\star(\theta_i)-U_i^\star(\theta_i)
=
\begin{cases}
0, & \theta_i<\theta_i^\star,\\
\theta_i^\star, & \theta_i\ge\theta_i^\star.
\end{cases}
\]
Therefore, every optimal menu induces the same outcome as the posted-price menu
\[
M_i^\star=\{(0,0),(1,\theta_i^\star)\}.
\]
This proves that every best response is outcome-equivalent to a posted-price menu.

\subsection{Proof of \cref{cor:eqm}}

Note that the expected revenue function \(ER_i\) is continuous and that the strategy space \([0,\bar{\theta}_i]\) is compact. Hence, by Berge's Maximum Theorem, the best response correspondence \(BR_i\) is upper hemicontinuous. Combining this with the result from \cref{thm:br} that, under density-regularity, \(BR_i\) is single-valued, we conclude that \(BR_i\) is a continuous function.

Now define
\[
BR(\bm p)
=
\bigl(BR_1(p_{-1}),\dots,BR_n(p_{-n})\bigr).
\]
Since each \(BR_i\) is continuous, \(BR\) is a continuous function from the compact and convex set \(\prod_{i=1}^n[0,\bar{\theta}_i]\) into itself. By Brouwer's Fixed Point Theorem, there exists \(\bm p^\star\) such that
\[
\bm p^\star=BR(\bm p^\star).
\]
Thus, there exists an equilibrium in the price-posting game. By \cref{thm:br}, posting the menu $M_i^\star = \{(0,0), (1,p^\star_i)\}$ remains a best response to the posted-price menu profile associated with the opponents' price vector $p^\star_{-i}$ in the menu-posting game. Hence, this price-posting equilibrium $\bm p^\star$ also constitutes an equilibrium in the menu-posting game.

\subsection{Proof of \cref{prop:lattice}}

The proof of \cref{prop:lattice} consists of two parts. We first establish \cref{lemma:supermodular}, which allows us to apply Tarski's fixed point theorem and show that the set of equilibria forms a complete lattice. We then prove by contradiction that the largest and smallest equilibria must coincide.

\begin{lemma}\label{lemma:supermodular}
Assume that \(f_k\) is density-regular and log-concave for all \(1\le k\le n\). For each firm \(i\), the log-revenue function \(\ln ER_i(p_i,p_{-i})\) has non-decreasing differences in \((p_i,p_j)\) for all \(j\ne i\), on the region where \(ER_i(p_i,p_{-i})>0\).
\end{lemma}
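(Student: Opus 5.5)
We need to show that \(\partial^2 \ln ER_i/\partial p_i\partial p_j \ge 0\). Since \(\ln ER_i=\ln p_i+\ln Q_i(p_i,p_{-i})\) and the first term does not depend on \(p_j\), it suffices to show that \(\ln Q_i\) has non-decreasing differences in \((p_i,p_j)\). We first change variables. Set \(x=\theta_i-p_i\), the utility firm \(i\) delivers to a buyer. Then
\[
Q_i(p_i,p_{-i})=\int_0^{\bar\theta_i-p_i} f_i(x+p_i)\prod_{k\ne i}F_k(x+p_k)\,dx .
\]
In these coordinates \(p_i\) enters only through the shifted density \(f_i(x+p_i)\) and the upper limit. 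Each \(p_k\), \(k\neq i\), enters only through the shifted CDF \(F_k(x+p_k)\). Because \(\Theta_i\) is bounded, the upper limit is equivalent to extending \(f_i\) by zero beyond \(\bar\theta_i\). That extension preserves log-concavity on the support, and the truncation can be handled as a boundary term.

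The plan is to express \(Q_i\) as a marginal of a log-supermodular integrand and then use preservation of log-supermodularity under integration (the Karlin–Rinott / Ahlswede–Daykin result). Consider the integrand as a function of \((x,p_i,p_j)\), with the other \(p_k\) held fixed:
\[
g(x,p_i,p_j)=f_i(x+p_i)\,F_j(x+p_j)\prod_{k\ne i,j}F_k(x+p_k).
\]
We show \(g\) is log-supermodular on \(\mathbb R^3\), pair by pair.

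First, log-concavity of \(f_i\) makes \(\ln f_i(x+p_i)\) a concave function of \(x+p_i\). A concave function of a sum has non-\emph{increasing} differences, so the sign must be fixed by reorienting a variable. We work with \(-p_i\), equivalently with \(x\) and \(p_i\) in opposite orders, and check that the resulting orientation is consistent for every pair:
\begin{itemize}
\item For \((x,p_j)\): \(\ln F_j(x+p_j)\) is concave, because log-concavity of \(f_j\) implies log-concavity of \(F_j\). It therefore has non-increasing differences in \((x,p_j)\).
\item For \((x,p_i)\): the pair also has non-increasing differences, for the same reason.
\item For \((p_i,p_j)\): no term depends on both, so the cross difference is zero.
\end{itemize}
Flipping \(x\mapsto -x\) makes both pairs involving \(x\) supermodular in \((-x,p_i)\) and \((-x,p_j)\), while \((p_i,p_j)\) stays modular. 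Hence \(g\) is log-supermodular in \((-x,p_i,p_j)\). Integrating out \(x\) then gives log-supermodularity of \(Q_i\) in \((p_i,p_j)\). This is the strategic-complementarity statement. The other factors \(F_k\) for \(k\ne i,j\) depend only on \(x\) and fixed \(p_k\), and are log-concave in \(x\). They do not break the argument, since Karlin–Rinott only needs joint log-supermodularity.

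The main obstacle is the support truncation and the extended CDFs. The extended \(F_k\) equals \(0\) below \(0\) and \(1\) above \(\bar\theta_k\), which is still log-concave on its positive region. The integrand vanishes outside a set, and we need that set to be a sublattice for the lattice version of the Prékopa/Karlin–Rinott inequality to apply. I would check this directly: the positivity region \(\{x+p_i\le\bar\theta_i,\ x+p_k\ge 0\}\) is an intersection of half-spaces of the sublattice form \(\{-x\ge\cdot\}\) in the flipped coordinates, so this holds. As a fallback, if the lattice version is awkward, I would compute \(\partial_{p_i}\ln Q_i\) directly as a ratio: a boundary term \(-f_i(\bar\theta_i)\prod F_k/Q_i\) plus an average of \((\ln f_i)'\) under the density \(g/Q_i\). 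I would then show this ratio is increasing in \(p_j\) by monotone-likelihood-ratio reasoning. Raising \(p_j\) reweights the density by \(F_j(x+p_j+\delta)/F_j(x+p_j)\), which is decreasing in \(x\) by log-concavity of \(F_j\). So it shifts mass toward low \(x\), where \((\ln f_i)'(x+p_i)\) is larger. It also lowers the boundary-term weight at the top. Density-regularity is not needed for this lemma beyond guaranteeing that \(ER_i\) is a well-behaved single-peaked objective. It is used later, in the contraction step.
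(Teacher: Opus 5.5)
Your proof is correct, but its main line differs from the paper's.

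\textbf{The paper's route.} The paper does not invoke preservation of log-supermodularity under integration. It differentiates in $p_j$ rather than $p_i$ and writes
\[
\frac{\partial \ln Q_i}{\partial p_j}=\mathbb{E}_g\!\left[\frac{f_j(v+p_j)}{F_j(v+p_j)}\right],
\]
where $g$ is the normalized integrand. Two facts then finish it:
\begin{itemize}
\item the reverse hazard rate is non-increasing in $v$, by log-concavity of $F_j$;
\item $g$ shifts toward lower $v$ in the likelihood-ratio order as $p_i$ rises, by log-concavity of $f_i$.
\end{itemize}
So the expectation is non-decreasing in $p_i$.

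\textbf{Comparison with your fallback.} Your fallback is essentially this argument with the roles of $p_i$ and $p_j$ swapped. The paper's choice is slightly cleaner. Once $f_i$ is extended by zero and the integral runs over $[0,\infty)$, no integration limit moves with $p_j$, so no boundary term appears. The truncation at $\bar\theta_i$ is then absorbed into the likelihood-ratio comparison: the ratio $f_i(v+p_i')/f_i(v+p_i)$ just drops to zero past $\bar\theta_i-p_i'$. Differentiating in $p_i$, as you do, forces you to carry the term $-g(\bar\theta_i-p_i)/Q_i$ separately. Your handling of it is right: the reweighting factor $F_j(x+p_j+\delta)/F_j(x+p_j)$ is smallest at the top, so this term rises with $p_j$.

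\textbf{Your Karlin--Rinott route.} This route is more conceptual and needs no differentiability. Applied to the whole vector $(-x,p_i,p_{-i})$, it gives joint log-supermodularity of $Q_i$ in all prices at once. The cost is a heavier theorem and the need for the MTP$_2$ inequality to hold globally, zeros included. Your sublattice description is loose but correct in substance. With $y=-x$, the constraints are $y\le 0$, $p_i-y\le\bar\theta_i$ and $p_k-y\ge 0$; these bound coordinate differences, and such sets are sublattices. More simply, $\phi(p-y)$ is supermodular in $(p,y)$ whenever $\phi$ is concave as an extended-real function. Hence the zero-extended integrand is MTP$_2$ directly, and no separate support check is needed.

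You are also right that density-regularity plays no role here; the paper's proof does not use it either.
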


\begin{proof}
Firm \(i\)'s expected revenue is
\[
ER_i(p_i,p_{-i})=p_i Q_i(p_i,p_{-i}),
\]
where
\[
Q_i(p_i,p_{-i})
=
\int_{p_i}^{\bar\theta_i}
f_i(\theta_i)
\prod_{k\ne i}F_k(\theta_i-p_i+p_k)
\,d\theta_i.
\]
On the region where \(ER_i>0\), we can take logarithms:
\[
\ln ER_i(p_i,p_{-i})
=
\ln p_i+\ln Q_i(p_i,p_{-i}).
\]
Since \(\ln p_i\) does not depend on any competitor's price \(p_j\), it suffices to show
\[
\frac{\partial^2 \ln Q_i(p_i,p_{-i})}{\partial p_i\partial p_j}\ge0.
\]

We extend \(f_i(x)=0\) outside \(\Theta_i\), and extend each \(F_k\) by setting \(F_k(x)=0\) below its support and \(F_k(x)=1\) above its support. With this convention, after the change of variables \(v=\theta_i-p_i\), expected demand can be written as
\[
Q_i(p_i,p_{-i})
=
\int_0^\infty
f_i(v+p_i)
\prod_{k\ne i}F_k(v+p_k)
\,dv.
\]
Let
\[
G(v,p_i,p_{-i})
=
f_i(v+p_i)
\prod_{k\ne i}F_k(v+p_k).
\]
Then
\[
Q_i(p_i,p_{-i})=\int_0^\infty G(v,p_i,p_{-i})\,dv.
\]
Hence
\[
g(v,p_i,p_{-i})
=
\frac{G(v,p_i,p_{-i})}{Q_i(p_i,p_{-i})}
\]
is a probability density over \(v\in[0,\infty)\).

Differentiating \(Q_i\) with respect to \(p_j\), for \(j\ne i\), gives
\begin{align*}
\frac{\partial Q_i}{\partial p_j}
&=
\int_0^\infty
f_i(v+p_i)
f_j(v+p_j)
\prod_{k\ne i,j}F_k(v+p_k)
\,dv \\
&=
\int_0^\infty
\frac{f_j(v+p_j)}{F_j(v+p_j)}
G(v,p_i,p_{-i})
\,dv.
\end{align*}
Therefore,
\[
\frac{\partial \ln Q_i}{\partial p_j}
=
\int_0^\infty
\frac{f_j(v+p_j)}{F_j(v+p_j)}
g(v,p_i,p_{-i})
\,dv.
\]
Define
\[
w(v,p_j)
=
\frac{f_j(v+p_j)}{F_j(v+p_j)}.
\]
Then
\[
\frac{\partial \ln Q_i}{\partial p_j}
=
\mathbb E_g[w(v,p_j)].
\]

We now show that this expectation is non-decreasing in \(p_i\). Since \(f_j\) is log-concave, \(F_j\) is log-concave by Pr\'ekopa's theorem. Hence \(\ln F_j\) is concave, and therefore
\[
w(v,p_j)
=
\frac{\partial}{\partial v}\ln F_j(v+p_j)
\]
is non-increasing in \(v\).

Next, we show that \(g(v,p_i,p_{-i})\) shifts toward lower values of \(v\) as \(p_i\) increases. It is enough to verify reverse MLRP in \(p_i\). Since
\[
\ln g(v,p_i,p_{-i})
=
\ln G(v,p_i,p_{-i})
-
\ln Q_i(p_i,p_{-i}),
\]
and \(Q_i\) does not depend on \(v\), we have
\[
\frac{\partial^2 \ln g}{\partial v\partial p_i}
=
\frac{\partial^2 \ln G}{\partial v\partial p_i}.
\]
Moreover,
\[
\ln G(v,p_i,p_{-i})
=
\ln f_i(v+p_i)
+
\sum_{k\ne i}\ln F_k(v+p_k).
\]
Only the first term depends on \(p_i\), so
\[
\frac{\partial^2 \ln g}{\partial v\partial p_i}
=
\frac{\partial^2}{\partial v\partial p_i}
\ln f_i(v+p_i)
\le0,
\]
where the inequality follows from log-concavity of \(f_i\). Hence \(g(v,p_i,p_{-i})\) satisfies reverse MLRP with respect to \(p_i\): as \(p_i\) increases, \(g\) shifts toward lower values of \(v\).

Since \(w(v,p_j)\) is non-increasing in \(v\), and \(g(v,p_i,p_{-i})\) shifts toward lower values of \(v\) as \(p_i\) increases, the expectation \(\mathbb E_g[w(v,p_j)]\) is non-decreasing in \(p_i\). Therefore,
\[
\frac{\partial^2 \ln Q_i}{\partial p_i\partial p_j}\ge0.
\]
Thus \(\ln ER_i\) has non-decreasing differences in \((p_i,p_j)\).
\end{proof}

We are now ready to apply Tarski's fixed point theorem. For each firm \(i\), define the best response function
\[
BR_i(p_{-i})
=
\arg\max_{p_i\in\Theta_i}ER_i(p_i,p_{-i}).
\]
By \cref{thm:br}, under density-regularity, \(BR_i\) is single-valued.

Since \(\ln(\cdot)\) is strictly increasing, maximizing \(ER_i\) is equivalent to maximizing \(\ln ER_i\) on the region where \(ER_i>0\). We use the convention that \(\ln ER_i=-\infty\) whenever \(ER_i=0\). Hence,
\[
BR_i(p_{-i})
=
\arg\max_{p_i\in\Theta_i}\ln ER_i(p_i,p_{-i}).
\]

By \cref{lemma:supermodular}, \(\ln ER_i\) has non-decreasing differences in \((p_i,p_j)\) for every \(j\ne i\). Since \(\Theta_i\) is a compact complete lattice, Topkis's theorem implies that \(BR_i(p_{-i})\) is non-decreasing in \(p_{-i}\). That is, if \(p_{-i}\ge p_{-i}'\), then
\[
BR_i(p_{-i})\ge BR_i(p_{-i}').
\]

Define the best response operator
\[
BR(\bm p)
=
\bigl(
BR_1(p_{-1}),\dots,BR_n(p_{-n})
\bigr).
\]
Since each \(BR_i\) is non-decreasing, \(BR\) is a monotone function from the complete lattice $\prod_{i=1}^n\Theta_i$ into itself. By Tarski's fixed point theorem, \(BR\) has a fixed point. Hence, an equilibrium exists. Moreover, Tarski's theorem implies that the set of fixed points, and therefore the set of equilibria, forms a complete lattice.

It remains to establish uniqueness of equilibrium. To do so, we first prove a contraction property of the best response in \cref{lemma:contraction}.

\begin{lemma}\label{lemma:contraction}
Assume that \(f_k\) is density-regular and log-concave for all \(1\le k\le n\). At every interior differentiability point of \(BR_i\),
\[
\sum_{j\ne i}
\frac{\partial BR_i}{\partial p_j}
<1.
\]
\end{lemma}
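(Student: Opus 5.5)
We will apply the implicit function theorem to the first-order condition for $BR_i$. The sign of $\sum_{j\ne i}\partial BR_i/\partial p_j-1$ is then governed by how the marginal-revenue expression changes when \emph{all} prices, including $p_i$, rise by the same amount. That uniform-shift derivative has a closed form, and density-regularity signs it.

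\textbf{Setup.} Write
\[
ER_i=p_iQ_i
\qquad\text{and}\qquad
\Psi(p_i,p_{-i})\equiv \frac{\partial ER_i}{\partial p_i}=Q_i+p_i\,\partial_iQ_i ,
\]
with $\partial_k$ denoting $\partial/\partial p_k$ and $Q_i=\int_{p_i}^{\bar\theta_i}f_i(\theta)\prod_{k\ne i}F_k(\theta-p_i+p_k)\,d\theta$. At an interior point where $BR_i$ is differentiable we have $\Psi(BR_i(p_{-i}),p_{-i})=0$. We will assume the second-order condition holds strictly there, $\partial_i\Psi<0$. This should follow from the uniqueness of the maximizer in \cref{thm:br} together with differentiability; if needed, we will argue it through the log-concave structure used in \cref{lemma:supermodular}. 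Implicit differentiation gives $\partial BR_i/\partial p_j=-\partial_j\Psi/\partial_i\Psi$.

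\textbf{Reduction to a uniform shift.} Let $D=\sum_{k=1}^n\partial_k$ be the derivative along the diagonal direction $(1,\dots,1)$. Then
\[
\sum_{j\ne i}\frac{\partial BR_i}{\partial p_j}
=-\frac{D\Psi-\partial_i\Psi}{\partial_i\Psi}
=1-\frac{D\Psi}{\partial_i\Psi}.
\]
Since $\partial_i\Psi<0$, the claim is equivalent to $D\Psi<0$.

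\textbf{Computing $D\Psi$.} Shifting every price by $s$ leaves each argument $\theta-p_i+p_k$ unchanged and only moves the lower limit of integration to $p_i+s$. Hence
\[
DQ_i=-f_i(p_i)P_0,\qquad P_0\equiv\prod_{k\ne i}F_k(p_k).
\]
$P_0$ does not depend on $p_i$, and $D$ commutes with $\partial_i$, so
\[
D\partial_iQ_i=\partial_i DQ_i=-f_i'(p_i)P_0 .
\]
Also $D(p_i)=1$. Therefore
\[
D\Psi=DQ_i+\partial_iQ_i+p_iD\partial_iQ_i=-f_i(p_i)P_0+\partial_iQ_i-p_if_i'(p_i)P_0 .
\]
Differentiating $Q_i$ directly in $p_i$ gives
\[
\partial_iQ_i=-f_i(p_i)P_0-\int_{p_i}^{\bar\theta_i}f_i(\theta)\sum_{k\ne i}f_k(\cdot)\prod_{l\ne i,k}F_l(\cdot)\,d\theta\le -f_i(p_i)P_0 .
\]
Combining,
\[
D\Psi\le-\bigl[2f_i(p_i)+p_if_i'(p_i)\bigr]P_0 -R,
\]
where $R\ge0$ is the integral remainder above. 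By density-regularity this is strictly negative whenever $P_0>0$.

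\textbf{Main obstacle: the boundary case $P_0=0$.} Here the density-regularity term vanishes, so strictness must come from $R$. We plan to use interiority: $p_i$ solves $\Psi=0$ with $ER_i>0$, so $Q_i>0$. By the first-order condition, $\partial_iQ_i=-Q_i/p_i<0$, and with $P_0=0$ this forces $R=-\partial_iQ_i>0$. Hence $D\Psi\le -R<0$.

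The remaining delicate point is to justify strictness of the second-order condition at differentiability points. If that fails we plan to argue directly with the strictly concave structure of $\ln ER_i$ in $p_i$; strict log-concavity would need to be checked, which is where log-concavity enters.
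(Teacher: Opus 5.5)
Your core computation is correct and is the paper's argument in compact form. The paper differentiates $\partial Q_i/\partial p_i=-C-\sum_{j\ne i}Q_{ij}$ to obtain $\partial^2ER_i/\partial p_i^2=-N-\sum_{j\ne i}Q_{ij}-\bigl(2C+p_i\,\partial C/\partial p_i\bigr)$, where $N=\sum_{j\ne i}\partial^2ER_i/\partial p_i\partial p_j$. Your diagonal derivative satisfies $D\Psi=\partial_i\Psi+N$, and your $R$ equals $\sum_{j\ne i}Q_{ij}$. So the paper's identity is exactly yours, $D\Psi=-\bigl[2f_i(p_i)+p_if_i'(p_i)\bigr]P_0-R$, which you reach more cleanly through shift invariance. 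Your handling of $P_0=0$ (an opponent pricing at zero) is more careful than the paper's. The paper only records strict positivity of the density-regularity term when $\prod_{k\ne i}F_k(p_k)>0$. Using the first-order condition to get $R=Q_i/p_i>0$ is exactly what that case needs.

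The one real gap is the strict second-order condition $\partial_i\Psi<0$, which you assume rather than prove. Your primary justification fails: uniqueness of the maximizer (\cref{thm:br}) does not exclude a degenerate maximum with $\partial^2ER_i/\partial p_i^2=0$, and then the implicit-function formula is unavailable. The paper does not assume this; it derives it, and this is exactly where log-concavity is used. At the first-order condition $\partial ER_i/\partial p_i=0$, so $\partial^2ER_i/\partial p_i\partial p_j=ER_i\,\partial^2\ln ER_i/\partial p_i\partial p_j\ge0$ by \cref{lemma:supermodular}. Hence $N\ge0$ and $\partial_i\Psi=D\Psi-N\le D\Psi<0$, which is one line given your identity.

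Your fallback can also be made to work. By Pr\'ekopa, $Q_i=\int_0^\infty f_i(v+p_i)\prod_{k\ne i}F_k(v+p_k)\,dv$ is log-concave in $p_i$, because the integrand is jointly log-concave in $(v,p_i)$. So $\partial^2\ln ER_i/\partial p_i^2\le-1/p_i^2$, and at the first-order condition $\partial_i\Psi=ER_i\,\partial^2\ln ER_i/\partial p_i^2<0$.

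A third route avoids log-concavity altogether. Since $BR_i$ is continuous, the first-order condition holds in a neighborhood of an interior point. Differentiating it at a differentiability point gives $\partial_i\Psi\,(S-1)=-D\Psi>0$, with $S=\sum_{j\ne i}\partial BR_i/\partial p_j$. Combined with the necessary condition $\partial_i\Psi\le0$, this forces $\partial_i\Psi<0$ and $S<1$ directly. With any of these inserted, your proof is complete.
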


\begin{proof}
Firm \(i\)'s expected revenue is
\[
ER_i(p_i,p_{-i})
=
p_iQ_i(p_i,p_{-i}),
\]
where
\[
Q_i(p_i,p_{-i})
=
\int_{p_i}^{\bar\theta_i}
\prod_{k\ne i}F_k(\theta_i-p_i+p_k)f_i(\theta_i)\,d\theta_i.
\]
For \(j\ne i\), define
\[
Q_{ij}
=
\frac{\partial Q_i}{\partial p_j}
=
\int_{p_i}^{\bar\theta_i}
f_j(\theta_i-p_i+p_j)
\prod_{k\ne i,j}F_k(\theta_i-p_i+p_k)
f_i(\theta_i)
\,d\theta_i
\ge0.
\]

We first compute the own-price derivative of demand. Let
\[
g(\theta_i,p_i)
=
\prod_{k\ne i}F_k(\theta_i-p_i+p_k)f_i(\theta_i).
\]
By Leibniz's rule,
\[
\frac{\partial Q_i}{\partial p_i}
=
-g(p_i,p_i)
+
\int_{p_i}^{\bar\theta_i}
\frac{\partial g(\theta_i,p_i)}{\partial p_i}
\,d\theta_i.
\]
The boundary term is
\[
g(p_i,p_i)
=
\prod_{k\ne i}F_k(p_k)f_i(p_i)
\equiv C.
\]
Moreover,
\[
\frac{\partial g(\theta_i,p_i)}{\partial p_i}
=
-
\sum_{j\ne i}
f_j(\theta_i-p_i+p_j)
\prod_{k\ne i,j}F_k(\theta_i-p_i+p_k)
f_i(\theta_i).
\]
Therefore,
\[
\frac{\partial Q_i}{\partial p_i}
=
-C-\sum_{j\ne i}Q_{ij}.
\]

At an interior best response, the first-order condition is
\[
\frac{\partial ER_i}{\partial p_i}
=
Q_i+p_i\frac{\partial Q_i}{\partial p_i}
=
0.
\]
At an interior differentiability point of \(BR_i\), the implicit function theorem gives
\[
\frac{\partial BR_i}{\partial p_j}
=
-
\frac{
\frac{\partial^2 ER_i}{\partial p_i\partial p_j}
}{
\frac{\partial^2 ER_i}{\partial p_i^2}
}.
\]
Let
\[
N
=
\sum_{j\ne i}
\frac{\partial^2 ER_i}{\partial p_i\partial p_j},
\qquad
D
=
\frac{\partial^2 ER_i}{\partial p_i^2}.
\]
Then
\[
\sum_{j\ne i}
\frac{\partial BR_i}{\partial p_j}
=
-\frac{N}{D}.
\]

We compute \(D\) in terms of \(N\). First,
\[
\frac{\partial^2 ER_i}{\partial p_i\partial p_j}
=
Q_{ij}
+
p_i
\frac{\partial^2 Q_i}{\partial p_i\partial p_j},
\]
so
\[
N
=
\sum_{j\ne i}Q_{ij}
+
p_i
\sum_{j\ne i}
\frac{\partial^2 Q_i}{\partial p_i\partial p_j}.
\]
Next,
\[
D
=
2\frac{\partial Q_i}{\partial p_i}
+
p_i\frac{\partial^2 Q_i}{\partial p_i^2}.
\]
Differentiating
\[
\frac{\partial Q_i}{\partial p_i}
=
-C-\sum_{j\ne i}Q_{ij}
\]
with respect to \(p_i\) yields
\[
\frac{\partial^2 Q_i}{\partial p_i^2}
=
-\frac{\partial C}{\partial p_i}
-
\sum_{j\ne i}
\frac{\partial^2 Q_i}{\partial p_i\partial p_j}.
\]
Substituting into \(D\), we obtain
\begin{align*}
D
&=
2\left(-C-\sum_{j\ne i}Q_{ij}\right)
+
p_i
\left(
-\frac{\partial C}{\partial p_i}
-
\sum_{j\ne i}
\frac{\partial^2 Q_i}{\partial p_i\partial p_j}
\right) \\
&=
-N
-
\sum_{j\ne i}Q_{ij}
-
\left(
2C+p_i\frac{\partial C}{\partial p_i}
\right).
\end{align*}
Since
\[
C
=
\prod_{k\ne i}F_k(p_k)f_i(p_i),
\]
we have
\[
2C+p_i\frac{\partial C}{\partial p_i}
=
\prod_{k\ne i}F_k(p_k)
\left[
2f_i(p_i)+p_i f_i'(p_i)
\right].
\]
By density-regularity, this term is strictly positive whenever \(\prod_{k\ne i}F_k(p_k)>0\). Also, \(\sum_{j\ne i}Q_{ij} \ge 0\). Hence
\[
D<-N,
\qquad
\text{so}
\qquad
-D>N.
\]

By \cref{lemma:supermodular}, we know that
\[
\frac{\partial^2 ER_{i}}{p_i p_j}\ge0, \ \forall \ j \ne i
\]
Hence, we have $N = \sum_{j \ne i} \frac{\partial^2 ER_{i}}{p_i p_j} \ge 0$. It follows that
\[
-D>N\ge0.
\]
Therefore,
\[
\sum_{j\ne i}
\frac{\partial BR_i}{\partial p_j}
=
-\frac{N}{D}
<1.
\]
\end{proof}

We now prove uniqueness. Since \(BR\) is increasing and maps the
price lattice into itself, Tarski's fixed point theorem implies that
the set of fixed points is a complete lattice. Let
\[
\bar{\bm p}
\quad\text{and}\quad
\underline{\bm p}
\]
be the largest and smallest fixed points, respectively.

Suppose, toward a contradiction, that
\(\bar{\bm p}\ne \underline{\bm p}\). Since
\(\bar{\bm p}\ge \underline{\bm p}\), there exists \(k\) such that
\[
\Delta_k
=
\bar p_k-\underline p_k
=
\max_i(\bar p_i-\underline p_i)
>0.
\]
Because both price vectors are fixed points,
\[
\Delta_k
=
BR_k(\bar p_{-k})-BR_k(\underline p_{-k}).
\]

Let
\[
p_{-k}(s)
=
\underline p_{-k}
+
s(\bar p_{-k}-\underline p_{-k}),
\qquad s\in[0,1].
\]
By the regularity of \(BR_k\) on the interior region, the map
\(s\mapsto BR_k(p_{-k}(s))\) is absolutely continuous, and hence
\[
BR_k(\bar p_{-k})-BR_k(\underline p_{-k})
=
\int_0^1
\sum_{j\ne k}
\frac{\partial BR_k}{\partial p_j}
\bigl(p_{-k}(s)\bigr)
(\bar p_j-\underline p_j)
\,ds,
\]
where the derivatives exist for a.e. \(s\).

By monotonicity of \(BR_k\),
\[
\frac{\partial BR_k}{\partial p_j}\bigl(p_{-k}(s)\bigr)\ge0
\quad\text{for a.e. }s.
\]
Moreover, by the choice of \(k\),
\[
\bar p_j-\underline p_j\le \Delta_k
\quad\text{for all }j.
\]
Therefore,
\[
\Delta_k
\le
\Delta_k
\int_0^1
\sum_{j\ne k}
\frac{\partial BR_k}{\partial p_j}
\bigl(p_{-k}(s)\bigr)
\,ds.
\]
By \cref{lemma:contraction},
\[
\sum_{j\ne k}
\frac{\partial BR_k}{\partial p_j}
\bigl(p_{-k}(s)\bigr)
<1
\quad\text{for a.e. }s.
\]
Hence
\[
\int_0^1
\sum_{j\ne k}
\frac{\partial BR_k}{\partial p_j}
\bigl(p_{-k}(s)\bigr)
\,ds
<1.
\]
Thus
\[
\Delta_k<\Delta_k,
\]
a contradiction. Therefore
\[
\bar{\bm p}=\underline{\bm p}.
\]
Since every equilibrium lies between the smallest and largest fixed
points, the equilibrium price vector is unique. Consequently, the
equilibrium outcome is unique.

\section{Omitted Proof of \cref{sec:n}}

\subsection{Proof of \cref{cor:non}}

Let \((p_1^\star,p_2^\star)\) be an equilibrium of the duopoly price-posting game. The existence of such an equilibrium follows from the proof of \cref{cor:eqm}. Let \(\bm M^\star\) be the posted-price menu profile corresponding to \((p_1^\star,p_2^\star)\). By construction, each firm is best responding to the other in the price-posting game. Moreover, by the posted-price best-response result in \cref{thm:br}, no firm can profitably deviate to a non-posted menu. Hence, firms are also best responding to each other under \(\bm M^\star\) in the duopoly non-exclusive menu-posting game.

\subsection{Proof of \cref{prop:demand}}

Suppose, toward a contradiction, that $\bm M = (M_1, M_2)$ is an equilibrium that is 
not demand-matching. Without loss of generality, we assume the violation 
occurs at firm 1: there exists an active contract $(q_1, t_1) \in M_1^A(\bm M)$ 
and a set of types $\bm\Theta_A' \subset \bm\Theta$ with strictly positive 
measure under $F$ such that for every $\bm\theta = (\theta_1, \theta_2) \in 
\bm\Theta_A'$, the agent optimally selects $(q_1, t_1)$ from $M_1$ paired with 
some contract $(q_2(q_1, \bm\theta), t_2(q_1, \bm\theta)) \in M_2$ satisfying 
$q_1 + q_2(q_1, \bm\theta) < 1$, where
\[
(q_2(q_1, \bm\theta), t_2(q_1, \bm\theta)) \in \argmax_{(q_2, t_2) \in M_2,\ 
q_2 < 1 - q_1} \big\{ \theta_2 q_2 - t_2 \big\}.
\]

Since $(q_1, t_1)$ is active, define $\underline{\theta}_1 = \inf\{\theta_1 
: \bm\theta \in \bm\Theta_A'\} + \eta > 0$, where $\eta>0$ is sufficiently small such that $\underline{\theta}_1 \in \bm \Theta_A'$. Let $\bm \Theta_A = \{ \bm \theta \subseteq \bm \Theta_A': \theta_1 \ge \underline{\theta}_1\}$. Consider firm 1 deviating to the augmented menu
\[
M_1' = M_1 \cup \{(q_1', t_1')\}, \quad q_1' = q_1 + \varepsilon, \quad t_1' 
= t_1 + \underline{\theta}_1 \varepsilon,
\]
 where $\varepsilon > 0$ is undetermined. 
 
 We know that there exists $\varepsilon' > 0$, such that $\{(q_2,t_2) \in M_2:q_2 \le 1-q_1\} = \{(q_2,t_2) \in M_2:q_2 \le 1-q_1 - \varepsilon'\}$ and $q_1+\varepsilon' + q_2(q_1+\varepsilon',\bm \theta) < 1$, for all $\bm \theta \in \bm \Theta_A$, as $M_1$ is a menu of finite contracts. Hence,  for all $\bm \theta \in \bm \Theta_A$, and $\varepsilon' \ge \varepsilon > 0$, we have
\begin{align*}
    q_1'\theta_1+q_2(q_1',\bm \theta)\theta_2 -t_1' -t_2(q_1',\bm\theta) &= q_1'\theta_1+q_2(q_1,\bm \theta)\theta_2 -t_1' -t_2(q_1,\bm\theta) \\
    &> q_1\theta_1+q_2(q_1,\bm \theta)\theta_2 -t_1 -t_2(q_1,\bm\theta)\\
    & \ge  q_1''\theta_1+q_2(q_1'',\bm \theta)\theta_2 -t_1'' -t_2(q_1'',\bm\theta), \ \forall (q''_1,t_1'') \in M_1
\end{align*}
The first equality comes from the fact that $\varepsilon' > 0$, such that $\{(q_2,t_2) \in M_2:q_2 \le 1-q_1\} = \{(q_2,t_2) \in M_2:q_2 \le 1-q_1 - \varepsilon\}$, so that the agent chooses the same contract from $M_2$, regardless of whether her choice from $M_1'$ is $q_1$ or $q_1'$. The rest of inequalities come directly from the definition. We can conclude that $(q_1',t_1')$ and $(q_2(q_1, 
\bm\theta), t_2(q_1, \bm\theta))$ are the optimal choice for all $\bm \theta \in \bm\Theta_A$ under $M_1'$, as $M_1' = M_1 \cup \{(q_1', t_1')\}$. As for all $\bm \theta \in \bm \Theta_A' \setminus \bm\Theta_A$, they stick to the original contract $(q_1,t_1)$.

Now, we consider any type $\bm\theta \notin \bm\Theta_A'$ that originally chooses the contract $(\hat{q}_1,\hat{t}_1) \ne (q_1,t_1)$ under $\bm M$, we must have, 
\begin{align*}
    \hat{q}_1 \theta_1 - \hat{t}_1 + q_2(\hat{q}_1,\bm \theta)\theta_2 - t_2(\hat{q}_1, \bm \theta) > q_1 \theta_1 - t_1 + q_2(q_1,\bm \theta)\theta_2 - t_2(q_1,\bm \theta)
\end{align*}
We know that there exists $\varepsilon' \ge \varepsilon''>0$ such that, 
\begin{align*}
    \hat{q}_1 \theta_1 - \hat{t}_1 + q_2(\hat{q}_1,\bm \theta)\theta_2 - t_2(\hat{q}_1, \bm \theta) > (q_1 +\varepsilon'') \theta_1 - t_1 + q_2(q_1,\bm \theta)\theta_2 - t_2(q_1,\bm \theta)
\end{align*}
Therefore, for all $\varepsilon'' \ge \varepsilon > 0$, we have
\begin{align*}
    \hat{q}_1 \theta_1 - \hat{t}_1 + q_2(\hat{q}_1,\bm \theta)\theta_2 - t_2(\hat{q}_1, \bm \theta) > q_1'  \theta_1 - t_1' + q_2(q_1',\bm \theta)\theta_2 - t_2(q_1',\bm \theta)
\end{align*}
This gives us the result that no type  $\bm\theta \notin \bm\Theta_A'$ will have an incentive to deviate from her original contracts under $M_1'$.

We conclude that there exists $\varepsilon > 0$ such that, for all $\bm \theta\in \bm \Theta_A$, they will deviate to $(q_1',t_1')$ and pay a higher transfer to firm 1, and for all $\bm \theta \not\in \bm \Theta_A$ will not change their original choice of contracts. Firm 1 can be strictly better off by deviating to $\bm M'$, which completes the proof that $\bm M$ is not an equilibrium.

\subsection{Proof of \cref{prop:finite}}

Let \((M_1,M_2)\) be an equilibrium where each firm offers a finite menu of contracts. By \cref{prop:demand}, every equilibrium is demand-matching, so every active contract pair satisfies \(q_1+q_2=1\). Since \(V(\Delta)\) is the upper envelope of finitely many linear functions, it is convex. Therefore, \(q_1(\Delta)=V'(\Delta)\) is non-decreasing almost everywhere, and the type space \([\underline{\Delta},\overline{\Delta}]\) is partitioned into finitely many intervals, on each of which the agent chooses a fixed contract pair.

Order the active contracts of firm \(1\) by quantity:
\[
0\le q_1<q_2<\cdots<q_k\le1,
\]
chosen by types on intervals
\[
[\Delta_0,\Delta_1],\ [\Delta_1,\Delta_2],\ \ldots,\ [\Delta_{k-1},\Delta_k],
\]
respectively, where \(\Delta_0=\underline{\Delta}\) and \(\Delta_k=\overline{\Delta}\). We want to show that the largest active contract must satisfy \(q_k=1\).

Suppose toward a contradiction that \(q_k<1\), so that the largest active contract of firm \(1\) is \((q_k,t_1^k)\), with \(q_k\in(0,1)\), chosen by types \(\Delta\in[\Delta_{k-1},\Delta_k]\). By demand-matching, firm \(2\)'s complementary active contract is \((1-q_k,t_2^k)\), chosen simultaneously by the same types.

Consider firm \(1\) adding a new contract \((1,p)\) to its menu. We set \(p\) such that type \(\hat{\Delta}\) is exactly indifferent between taking \((1,p)\) alone and choosing \((q_k,t_1^k)\) paired with \((1-q_k,t_2^k)\):
\[
\hat{\Delta}-p
=
\hat{\Delta}q_k-t_1^k-t_2^k.
\]
Thus,
\begin{equation}\label{eq:indiff_p_hat}
p
=
\hat{\Delta}(1-q_k)+t_1^k+t_2^k.
\end{equation}
Since \(\Delta_k=\overline{\Delta}>0\), \(q_k<1\), and \(t_2^k\ge0\), we have
\[
\Delta_k(1-q_k)+t_2^k>0.
\]
We can choose \(\hat{\Delta}\in(\Delta_{k-1},\Delta_k)\) sufficiently close to \(\Delta_k\) such that
\begin{align*}
    & \hat{\Delta}(1-q_k)+t_2^k>0\\
    \implies & p-t_1^k
=
\hat{\Delta}(1-q_k)+t_2^k
>
0.
\end{align*}

For any type \(\Delta\in[\Delta_{k-1},\Delta_k]\), the utility gain from \((1,p)\) over \((q_k,t_1^k)\) paired with \((1-q_k,t_2^k)\) is
\[
[\Delta-p]-[\Delta q_k-t_1^k-t_2^k]
=
(\Delta-\hat{\Delta})(1-q_k).
\]
Hence, all types \(\Delta>\hat{\Delta}\) strictly prefer \((1,p)\), while all types \(\Delta<\hat{\Delta}\) in the top interval do not prefer the new contract.

We now show that no type below \(\Delta_{k-1}\) wants to deviate to the new contract. For each contract \((q_j,t_1^j)\) with \(j\le k-1\), consider types \(\Delta\in[\Delta_{j-1},\Delta_j]\). The utility gain from \((1,p)\) over \((q_j,t_1^j)\) paired with \((1-q_j,t_2^j)\) is
\[
[\Delta-p]-[\Delta q_j-t_1^j-t_2^j]
=
\Delta(1-q_j)-p+t_1^j+t_2^j.
\]
This expression is increasing in \(\Delta\), since \(1-q_j\ge0\). Therefore, it is enough to evaluate it at the upper boundary \(\Delta=\Delta_j\le\Delta_{k-1}\).

By the chain of indifference conditions, at each threshold \(\Delta_l\), for \(l=j,j+1,\ldots,k-1\), the agent is indifferent between consecutive contract pairs:
\[
\Delta_l q_l-t_1^l-t_2^l
=
\Delta_l q_{l+1}-t_1^{l+1}-t_2^{l+1}.
\]
For each such \(l\), this implies
\[
(\Delta_j q_{l+1}-t_1^{l+1}-t_2^{l+1})
-
(\Delta_j q_l-t_1^l-t_2^l)
=
(\Delta_j-\Delta_l)(q_{l+1}-q_l).
\]
Summing these differences from \(l=j\) to \(k-1\) yields
\[
\Delta_j q_j-t_1^j-t_2^j
=
\Delta_j q_k-t_1^k-t_2^k
-
\sum_{l=j}^{k-1}
(\Delta_j-\Delta_l)(q_{l+1}-q_l).
\]

Therefore, the utility difference at \(\Delta=\Delta_j\) is
\begin{align*}
&[\Delta_j-p]-[\Delta_j q_j-t_1^j-t_2^j] \\
=&
\Delta_j-p
-
\left[
\Delta_j q_k-t_1^k-t_2^k
-
\sum_{l=j}^{k-1}
(\Delta_j-\Delta_l)(q_{l+1}-q_l)
\right] \\
=&
\Delta_j-p-\Delta_j q_k+t_1^k+t_2^k
+
\sum_{l=j}^{k-1}
(\Delta_j-\Delta_l)(q_{l+1}-q_l) \\
=&
(\Delta_j-\hat{\Delta})(1-q_k)
+
\sum_{l=j}^{k-1}
(\Delta_j-\Delta_l)(q_{l+1}-q_l).
\end{align*}
The last equality uses the definition of \(p\) in \eqref{eq:indiff_p_hat}. Since \(\Delta_j\le\Delta_{k-1}<\hat{\Delta}\), the first term is strictly negative. Moreover, for all \(l\ge j\), we have \(\Delta_j\le\Delta_l\) and \(q_{l+1}>q_l\), so each term in the sum is non-positive. Hence
\[
[\Delta_j-p]-[\Delta_j q_j-t_1^j-t_2^j]<0.
\]
Since the utility gain from the new contract is increasing in \(\Delta\) on the interval \([\Delta_{j-1},\Delta_j]\), it is strictly negative throughout that interval. Therefore, no type below \(\Delta_{k-1}\) wants to deviate to the new contract.

Under the deviation, exactly the types \(\Delta\in(\hat{\Delta},\Delta_k]\) switch from \((q_k,t_1^k)\) to \((1,p)\). The change in firm \(1\)'s expected revenue is
\[
\Delta\Pi_1
=
(p-t_1^k)\bigl[H(\Delta_k)-H(\hat{\Delta})\bigr].
\]
Since \(p-t_1^k>0\) and \(\hat{\Delta}<\Delta_k\), we have
\[
\Delta\Pi_1>0.
\]
Thus firm \(1\) has a profitable deviation, contradicting the assumption that \((M_1,M_2)\) is an equilibrium. Therefore, the largest active contract of firm \(1\) must satisfy \(q_k=1\).

\subsection{Proof of \cref{prop:finite_unique}}

By \cref{prop:finite}, both firms' active menus contain posted-price contracts $(1, p_1)$ and $(1, p_2)$. Suppose for contradiction that the active menus contain interior contracts. Order firm 1's active contracts by allocation:
\[
0 = q_0 < q_1 < q_2 < \cdots < q_{K-1} < q_K = 1,
\]
with corresponding transfers $t_1^0 = 0,\, t_1^1, \ldots, t_1^{K-1},\, t_1^K = p_1$. We assume $K \geq 2$, so that at least one interior contract exists. By demand-matching (\cref{prop:demand}), firm 2's complementary active contracts have allocations $1 - q_k$ with transfers $t_2^k$, and in particular $t_2^0 = p_2$ and $t_2^K = 0$.

Since $V(\Delta)$ is the upper envelope of finitely many affine functions, it is piecewise linear and convex, so the allocation $q_1(\Delta) = V'(\Delta)$ almost everywhere is a non-decreasing step function. The type space is partitioned as follows:
\begin{itemize}
\item Types with $\Delta < \Delta_0$ buy exclusively from firm 2, choosing $(0, 0)$ from firm 1 and $(1, p_2)$ from firm 2.
\item Types with $\Delta \in [\Delta_{k-1}, \Delta_k]$, for $k = 1, \ldots, K-1$, choose interior contract $k$ from firm 1, namely $(q_k, t_1^k)$, paired with $(1 - q_k, t_2^k)$ from firm 2.
\item Types with $\Delta > \Delta_{K-1}$ buy exclusively from firm 1, choosing $(1, p_1)$ from firm 1 and $(0,0)$ from firm 2.
\end{itemize}

We denote the step sizes by $\alpha_k = q_k - q_{k-1} > 0$ for $k = 1, \ldots, K$, and the total transfer for contract $k$ by $s_k = t_1^k + t_2^k$. We know that, at each boundary $\Delta_k$ for $k = 0, 1, \ldots, K-1$, the agent is indifferent between adjacent contracts $k$ and $k+1$:
\[
\Delta_k \, q_k - s_k = \Delta_k \, q_{k+1} - s_{k+1}.
\]
This yields:
\begin{equation}\label{eq:indiff}
s_{k+1} - s_k = \Delta_k \, \alpha_{k+1}, \quad k = 0, 1, \ldots, K-1.
\end{equation}

Give the menu profile we can write firm 1's expected revenue as
\[
\Pi_1 = \sum_{j=1}^{K-1} t_1^j \big[H(\Delta_j) - H(\Delta_{j-1})\big] + p_1 \big[1 - H(\Delta_{K-1})\big].
\]
Changing $t_1^k$ shifts the boundaries $\Delta_{k-1}$ and $\Delta_k$ via the indifference conditions. From \eqref{eq:indiff}:
\[
\frac{\partial \Delta_{k-1}}{\partial t_1^k} = \frac{1}{\alpha_k}, \qquad \frac{\partial \Delta_k}{\partial t_1^k} = \frac{-1}{\alpha_{k+1}}.
\]
Since the given menu profile is an equilibrium, firm 1 must maximize its expected revenue. By first order condition, we must have $\frac{\partial \Pi_1}{\partial t_1^k} = 0$, which give us
\begin{equation}\label{eq:F1k}
H(\Delta_k) - H(\Delta_{k-1}) + \frac{h(\Delta_{k-1})}{\alpha_k}(t_1^{k-1} - t_1^k) + \frac{h(\Delta_k)}{\alpha_{k+1}}(t_1^{k+1} - t_1^k) = 0. \tag{F1-$k$}
\end{equation}

Similarly, we can write firm 2's expected revenue as
\[
\Pi_2 = p_2 \, H(\Delta_0) + \sum_{j=1}^{K-1} t_2^j \big[H(\Delta_j) - H(\Delta_{j-1})\big].
\]
By the same calculation, with $\frac{\partial \Delta_{k-1}}{\partial t_2^k} = \frac{1}{\alpha_k}$ and $\frac{\partial \Delta_k}{\partial t_2^k} = \frac{-1}{\alpha_{k+1}}$, setting $\frac{\partial \Pi_2}{\partial t_2^k} = 0$ yields:
\begin{equation}\label{eq:F2k}
H(\Delta_k) - H(\Delta_{k-1}) + \frac{h(\Delta_{k-1})}{\alpha_k}(t_2^{k-1} - t_2^k) + \frac{h(\Delta_k)}{\alpha_{k+1}}(t_2^{k+1} - t_2^k) = 0. \tag{F2-$k$}
\end{equation}

Note that, since the given profile is an equilibrium, \eqref{eq:F1k} and \eqref{eq:F2k} must hold at the same time. Adding \eqref{eq:F1k} and \eqref{eq:F2k}, and writing $s_k = t_1^k + t_2^k$:
\begin{equation}\label{eq:combined}
2\big[H(\Delta_k) - H(\Delta_{k-1})\big] + \frac{h(\Delta_{k-1})}{\alpha_k}(s_{k-1} - s_k) + \frac{h(\Delta_k)}{\alpha_{k+1}}(s_{k+1} - s_k) = 0.
\end{equation}

Substituting the indifference conditions \eqref{eq:indiff}, which give $s_{k-1} - s_k = -\Delta_{k-1}\alpha_k$ and $s_{k+1} - s_k = \Delta_k \alpha_{k+1}$:
\begin{align*}
    & 2\big[H(\Delta_k) - H(\Delta_{k-1})\big] - h(\Delta_{k-1})\,\Delta_{k-1} + h(\Delta_k)\,\Delta_k = 0 \\
    \implies & \Delta_{k-1}h(\Delta_{k-1}) + 2 H(\Delta_{k-1}) = \Delta_kh(\Delta_{k}) + 2 H(\Delta_{k})
\end{align*}

The above condition must hold for every $k = 1, \ldots, K-1$. Defining $\varphi(\Delta) =  \Delta h(\Delta) + 2 H(\Delta)$, if $\varphi(\Delta)$ is strictly monotone, then there must exist a contradiction.

\subsection{Proof of \cref{prop:continuous}}

Let \((t_1,t_2)\) be an equilibrium. We know that \(t_1\) must maximize firm 1's expected profit, taking \(t_2\) as given. We first write firm 1's expected profit:
\[
\int_{\underline{\Delta}}^{\overline{\Delta}}
\Big[ t_1(q_1(\Delta)) - c_1 q_1(\Delta) \Big] h(\Delta)\, d\Delta.
\]

We also know that
\[
t_1(q_1(\Delta))
=
\Delta q_1(\Delta)-t_2(1-q_1(\Delta))-V(\Delta).
\]

Substituting this into firm 1's expected profit gives
\[
\int_{\underline{\Delta}}^{\overline{\Delta}}
\Big[
\Delta q_1(\Delta)-t_2(1-q_1(\Delta))-V(\Delta)-c_1q_1(\Delta)
\Big]h(\Delta)\,d\Delta.
\]

We use integration by parts on the term \(-\int V(\Delta)h(\Delta)d\Delta\). Let \(u=V(\Delta)\), so \(du=q_1(\Delta)d\Delta\), and let \(dv=h(\Delta)d\Delta\), so \(v=-(1-H(\Delta))\). We have
\[
-\int_{\underline{\Delta}}^{\overline{\Delta}} V(\Delta)h(\Delta)d\Delta
=
-V(\underline{\Delta})
-
\int_{\underline{\Delta}}^{\overline{\Delta}}
q_1(\Delta)\frac{1-H(\Delta)}{h(\Delta)}h(\Delta)d\Delta.
\]

Substituting this back into firm 1's expected profit, we have
\[
\int_{\underline{\Delta}}^{\overline{\Delta}}
\left[
\Delta q_1(\Delta)-t_2(1-q_1(\Delta))-c_1q_1(\Delta)
-q_1(\Delta)\frac{1-H(\Delta)}{h(\Delta)}
\right]h(\Delta)d\Delta
-
V(\underline{\Delta}).
\]

Firm 1 performs pointwise maximization with respect to \(q_1\). Taking the first-order condition inside the bracket gives
\[
\Delta+t_2'(1-q_1(\Delta))-c_1-\frac{1-H(\Delta)}{h(\Delta)}=0.
\]

We can derive firm 2's best-response condition in a similar fashion. We first write firm 2's expected profit:
\[
\int_{\underline{\Delta}}^{\overline{\Delta}}
\Big[
t_2(1-q_1(\Delta))-c_2(1-q_1(\Delta))
\Big]h(\Delta)d\Delta.
\]

Knowing that
\[
t_2(1-q_1(\Delta))
=
\Delta q_1(\Delta)-t_1(q_1(\Delta))-V(\Delta),
\]
and substituting this back into firm 2's expected profit gives
\[
\int_{\underline{\Delta}}^{\overline{\Delta}}
\left[
\Delta q_1(\Delta)-t_1(q_1(\Delta))-V(\Delta)-c_2(1-q_1(\Delta))
\right]h(\Delta)\,d\Delta.
\]

We need to handle the term \(\int V(\Delta)h(\Delta)\,d\Delta\). Using integration by parts yields
\[
\int_{\underline{\Delta}}^{\overline{\Delta}}V(\Delta)h(\Delta)\,d\Delta
=
V(\overline{\Delta})
-
\int_{\underline{\Delta}}^{\overline{\Delta}}H(\Delta)q_1(\Delta)\,d\Delta.
\]

Substituting this result back into firm 2's profit expression, we can pull out \(V(\overline{\Delta})\), since it is a constant, and group the integral terms:
\[
-V(\overline{\Delta})
+
\int_{\underline{\Delta}}^{\overline{\Delta}}
\left[
\Delta q_1(\Delta)-t_1(q_1(\Delta))-c_2(1-q_1(\Delta))
+\frac{H(\Delta)}{h(\Delta)}q_1(\Delta)
\right]h(\Delta)\,d\Delta.
\]

To maximize this integral, firm 2 maximizes the expression inside the brackets pointwise with respect to \(q_1\). Taking the derivative with respect to \(q_1\) and setting it equal to zero gives
\[
\Delta-t_1'(q_1(\Delta))+c_2+\frac{H(\Delta)}{h(\Delta)}=0.
\]

Now suppose that type \(\Delta\) chooses an interior solution \(q_1^\star(\Delta)\in(0,1)\). The agent's first-order condition requires
\[
\Delta=t_1'(q_1(\Delta))-t_2'(1-q_1(\Delta)).
\]

Since \(V(\Delta)\) is convex and \(V'(\Delta)=q_1(\Delta)\), differentiating the agent's first-order condition with respect to \(\Delta\) gives
\[
1=
\left[
t_1''(q_1(\Delta))+t_2''(1-q_1(\Delta))
\right]q_1'(\Delta).
\]

Now we differentiate the best-response conditions with respect to \(\Delta\), which gives
\begin{align*}
1-t_2''(1-q_1(\Delta))q_1'(\Delta)
&=
\frac{d}{d\Delta}\left[\frac{1-H(\Delta)}{h(\Delta)}\right],\\
1-t_1''(q_1(\Delta))q_1'(\Delta)
&=
-\frac{d}{d\Delta}\left[\frac{H(\Delta)}{h(\Delta)}\right].
\end{align*}

Adding the above two equations gives
\[
\left[
t_1''(q_1(\Delta))+t_2''(1-q_1(\Delta))
\right]q_1'(\Delta)
=
2+\frac{d}{d\Delta}\left[\frac{H(\Delta)}{h(\Delta)}\right]
-
\frac{d}{d\Delta}\left[\frac{1-H(\Delta)}{h(\Delta)}\right].
\]

Substituting the differentiated agent's first-order condition into the left-hand side, we have
\begin{align*}
&1
=
2+\frac{d}{d\Delta}\left[\frac{H(\Delta)}{h(\Delta)}\right]
-
\frac{d}{d\Delta}\left[\frac{1-H(\Delta)}{h(\Delta)}\right] \\
\implies &
\frac{d}{d\Delta}\left[\frac{1-H(\Delta)}{h(\Delta)}\right]
-
\frac{d}{d\Delta}\left[\frac{H(\Delta)}{h(\Delta)}\right]
=1 \\
\implies &
\frac{d}{d\Delta}\left[\frac{1-2H(\Delta)}{h(\Delta)}\right]
=1.
\end{align*}

That is, \(\psi'(\Delta)=1\). This is a necessary condition for any type \(\Delta\) to choose an interior solution. Therefore, the set of types choosing interior solutions is contained in
\[
\mathcal I
=
\left\{
\Delta\in[\underline{\Delta},\overline{\Delta}]
:
\psi'(\Delta)=1
\right\}.
\]

If \(\psi'(\Delta) \ne 1\) almost everywhere, then \(\mathcal I\) is a measure-zero set under \(h\), and therefore agents choose interior solutions with probability zero.
\section{Omitted Proof of \cref{sec:cor}}

\subsection{Proof of \cref{prop:correlated}}

The proof follows the structure of \cref{thm:br}, with the modification needed to handle the direct dependence of the winning probability on \(\theta_i\). Firm \(i\)'s expected revenue can be written as
\[
ER_i
=
\int_{0}^{\bar{\theta}_i}
\big[\theta_i q_i(\theta_i)-U_i(\theta_i)\big]
P_i(\theta_i,U_i(\theta_i))f_i(\theta_i)\,d\theta_i,
\]
where
\[
P_i(\theta_i,U_i)
=
\Pr\left(
\max_{j\ne i}U_j(\theta_j)\le U_i
\,\middle|\,
\theta_i
\right)
\]
is the conditional probability that firm \(i\) wins, given own type \(\theta_i\) and offered utility \(U_i\). Unlike in the independent case, \(P_i\) depends on \(\theta_i\) both directly and through \(U_i(\theta_i)\).

Setting up the Hamiltonian as before,
\[
\mathcal H(\theta_i)
=
q_i(\theta_i)
\big[
\theta_iP_i(\theta_i,U_i)f_i(\theta_i)
+
\lambda(\theta_i)
\big]
-
U_i(\theta_i)P_i(\theta_i,U_i)f_i(\theta_i),
\]
the switching function is
\[
\Phi(\theta_i)
=
\theta_iP_i(\theta_i,U_i(\theta_i))f_i(\theta_i)
+
\lambda(\theta_i),
\]
and the costate evolves according to
\[
\lambda'(\theta_i)
=
P_i(\theta_i,U_i)f_i(\theta_i)
-
\big[\theta_i q_i(\theta_i)-U_i(\theta_i)\big]
\frac{\partial P_i}{\partial U_i}(\theta_i,U_i)f_i(\theta_i),
\]
with transversality condition
\[
\lambda(\bar{\theta}_i)=0.
\]

Differentiating \(\Phi(\theta_i)\) with respect to \(\theta_i\), using \(U_i'(\theta_i)=q_i(\theta_i)\), and substituting the costate equation yields
\begin{equation}\label{eq:phi-prime-correlated}
\Phi'(\theta_i)
=
\big[2f_i(\theta_i)+\theta_i f_i'(\theta_i)\big]P_i
+
\theta_i\frac{\partial P_i}{\partial\theta_i}f_i(\theta_i)
+
U_i(\theta_i)\frac{\partial P_i}{\partial U_i}f_i(\theta_i).
\end{equation}
Compared to the independent case, \eqref{eq:phi-prime-correlated} contains the additional term
\[
\theta_i\frac{\partial P_i}{\partial\theta_i}f_i(\theta_i).
\]
We now show that this term is non-negative under stochastic decreasingness in own type.

For a given \(U_i\), define
\[
A(U_i)
=
\left\{
\bm\theta_{-i}:
\max_{j\ne i}U_j(\theta_j)\le U_i
\right\}.
\]
Since each \(U_j\) is non-decreasing, this set can be written as a lower rectangle using generalized inverses:
\[
A(U_i)
=
\prod_{j\ne i}
\left[0,U_j^{-1}(U_i)\right].
\]
Therefore,
\[
P_i(\theta_i,U_i)
=
F_{-i\mid i}
\left(
(U_j^{-1}(U_i))_{j\ne i}
\,\middle|\,
\theta_i
\right).
\]
By \cref{def:sdcd}, the conditional CDF \(F_{-i\mid i}(\cdot\mid\theta_i)\) is non-decreasing in \(\theta_i\). Hence
\[
\frac{\partial P_i}{\partial\theta_i}\ge0
\]
wherever the derivative exists.

Density-regularity gives
\[
2f_i(\theta_i)+\theta_i f_i'(\theta_i)>0.
\]
Moreover, \(P_i\ge0\), \(\theta_i\ge0\), \(f_i>0\), \(U_i\ge0\), and \(\partial P_i/\partial U_i\ge0\). Together with \(\partial P_i/\partial\theta_i\ge0\), all terms in \eqref{eq:phi-prime-correlated} are non-negative. Therefore,
\[
\Phi'(\theta_i)\ge0.
\]
The same no-singular-arc argument as in \cref{lemma:no_singular_arc} then implies that the optimal allocation rule is a step function, up to changes on a measure-zero set. The rest of the proof follows from the proof of \cref{thm:br}.

\subsection{Proof of \cref{prop:cor_post}}

Fix a posted-price profile \(p_{-i}\). As shown above, firm \(i\)'s winning probability is
\[
P_i^p(\theta_i,U)
=
F_{-i|i}(U+p_{-i}\mid\theta_i).
\]
We can set up the Hamiltonian as in the proof of \cref{prop:correlated}. The switching function associated with firm \(i\)'s optimal control problem is
\[
\Phi_i(\theta_i)
=
\theta_i f_i(\theta_i)P_i^p(\theta_i,U_i(\theta_i))
+
\lambda_i(\theta_i),
\]
and its derivative is
\[
\begin{aligned}
\Phi_i'(\theta_i)
&=
\left[
2f_i(\theta_i)+\theta_i f_i'(\theta_i)
\right]
P_i^p(\theta_i,U_i(\theta_i)) \\
&\quad
+
\theta_i f_i(\theta_i)
\frac{\partial P_i^p}{\partial \theta_i}
(\theta_i,U_i(\theta_i)) \\
&\quad
+
U_i(\theta_i)f_i(\theta_i)
\frac{\partial P_i^p}{\partial U}
(\theta_i,U_i(\theta_i)).
\end{aligned}
\]
Because \(P_i^p(\theta_i,U)=F_{-i|i}(U+p_{-i}\mid\theta_i)\), the first two terms can be written as
\[
\left[
2f_i(\theta_i)+\theta_i f_i'(\theta_i)
\right]
F_{-i|i}(U_i(\theta_i)+p_{-i}\mid\theta_i)
+
\theta_i f_i(\theta_i)
\frac{\partial}{\partial \theta_i}
F_{-i|i}(U_i(\theta_i)+p_{-i}\mid\theta_i).
\]
By correlation-adjusted density-regularity, this expression is non-negative for any trajectory \((q_i,U_i,\lambda_i)\). The third term is also non-negative because \(U_i\ge0\), \(f_i>0\), and \(P_i^p\) is non-decreasing in \(U\). Hence
\[
\Phi_i'(\theta_i)\ge0.
\]
Thus, the switching function \(\Phi_i\) is non-decreasing. The rest of the proof follows from the proof of \cref{thm:br}.

\subsection{Proof of \cref{cor:eqm_cor}}
The proof follows from \cref{cor:eqm}. As we assume that the joint density is continuous and strictly positive, we can show that the best response is a continuous function. Then, Brouwer’s Fixed Point Theorem applies.

\subsection{Proof of \cref{prop:per_cor}}

By our setting, there exists a scalar random variable \(\theta\) and constants
\(c_1,\ldots,c_n\) such that
\[
    \theta_i=\theta+c_i,\qquad i=1,\ldots,n.
\]
For every opponent \(j\neq i\), we have
\[
    \theta_j=\theta_i+c_j-c_i.
\]
Suppose that all opponents post prices \(p_j\), \(j\neq i\). Then opponent \(j\) gives the agent utility
\[
    U_j(\theta_j)
    =
    (\theta_j-p_j)_+
    =
    (\theta_i+c_j-c_i-p_j)_+.
\]
Hence the best utility available from the opponents is
\[
    O_i(\theta_i)
    =
    \max_{j\neq i}(\theta_i+c_j-c_i-p_j)_+
    =
    (\theta_i-\bar p_{-i})_+,
\]
where
\[
    \bar p_{-i}
    =
    \min_{j\neq i}\{p_j-c_j+c_i\}.
\]
By the maintained assumption that \(p_j>c_j-c_i\) for all \(j\neq i\), we have
\[
    \bar p_{-i}>0.
\]

Firm \(i\) wins type \(\theta_i\) if and only if
\[
    U_i(\theta_i)>O_i(\theta_i).
\]
Thus define the winning indicator
\[
    P_i(\theta_i,U_i)
    =
    \mathbf 1\{U_i>O_i(\theta_i)\}.
\]
Firm \(i\)'s expected revenue can be written as
\[
ER_i
=
\int_{\Theta_i}
\left[
\theta_iq_i(\theta_i)-U_i(\theta_i)
\right]
P_i(\theta_i,U_i(\theta_i))
f_i(\theta_i)\,d\theta_i,
\]
subject to the state equation
\[
U_i'(\theta_i)=q_i(\theta_i),
\qquad
U_i(0)=0.
\]

The Hamiltonian is
\[
\mathcal H(\theta_i,U_i,q_i,\lambda_i)
=
\left[
\theta_iq_i-U_i
\right]
P_i(\theta_i,U_i)f_i(\theta_i)
+
\lambda_iq_i.
\]
Equivalently,
\[
\mathcal H
=
q_i
\left[
\theta_iP_i(\theta_i,U_i)f_i(\theta_i)
+
\lambda_i
\right]
-
U_iP_i(\theta_i,U_i)f_i(\theta_i).
\]
Hence the switching function is
\[
\Phi_i(\theta_i)
=
\theta_iP_i(\theta_i,U_i(\theta_i))f_i(\theta_i)
+
\lambda_i(\theta_i).
\]
The Hamiltonian maximization condition implies
\[
\Phi_i(\theta_i)>0
\implies
q_i^\star(\theta_i)=1,
\qquad
\Phi_i(\theta_i)<0
\implies
q_i^\star(\theta_i)=0.
\]

Since \(P_i\) is an indicator, it is not differentiable at contact points where
\[
U_i(\theta_i)=O_i(\theta_i).
\]
Away from such contact points, \(P_i\) is locally constant. Therefore, the costate equation reduces to
\[
\lambda_i'(\theta_i)
=
P_i(\theta_i,U_i(\theta_i))f_i(\theta_i),
\]
with transversality condition
\[
\lambda_i(\bar\theta_i)=0.
\]
Thus
\[
\lambda_i(\theta_i)
=
-\int_{\theta_i}^{\bar\theta_i}
P_i(s,U_i(s))f_i(s)\,ds.
\]

In particular, if firm \(i\) loses type \(\theta_i\), then
\[
P_i(\theta_i,U_i(\theta_i))=0,
\]
and therefore
\[
\Phi_i(\theta_i)
=
\lambda_i(\theta_i)
=
-\int_{\theta_i}^{\bar\theta_i}
P_i(s,U_i(s))f_i(s)\,ds
\le0.
\]
Hence
\[
\Phi_i(\theta_i)>0
\implies
P_i(\theta_i,U_i(\theta_i))=1.
\]
That is, whenever the switching function is positive, firm \(i\) strictly wins that type.

On any strict winning interval, we have \(P_i=1\). Thus
\[
\Phi_i(\theta_i)
=
\theta_i f_i(\theta_i)+\lambda_i(\theta_i).
\]
Since \(\lambda_i'(\theta_i)=f_i(\theta_i)\) on such an interval, differentiating gives
\[
\Phi_i'(\theta_i)
=
2f_i(\theta_i)+\theta_i f_i'(\theta_i).
\]
By density-regularity,
\[
2f_i(\theta_i)+\theta_i f_i'(\theta_i)>0.
\]
Therefore,
\[
\Phi_i'(\theta_i)>0
\]
on every strict winning interval.

We next show that the set on which \(\Phi_i=0\) cannot contain a non-degenerate interval. Suppose, toward a contradiction, that there exists an interval \([a,b]\subseteq\Theta_i\), with \(b>a\), such that
\[
\Phi_i(\theta_i)=0
\quad
\text{for all } \theta_i\in[a,b].
\]
This interval cannot intersect a strict winning interval on a set of positive measure, since \(\Phi_i'>0\) on every strict winning interval. Hence \(P_i=0\) almost everywhere on \([a,b]\). On this interval,
\[
\Phi_i(\theta_i)
=
-\int_{\theta_i}^{\bar\theta_i}
P_i(s,U_i(s))f_i(s)\,ds.
\]
Since \(\Phi_i(\theta_i)=0\) for all \(\theta_i\in[a,b]\), it follows that
\[
\int_{\theta_i}^{\bar\theta_i}
P_i(s,U_i(s))f_i(s)\,ds
=
0
\quad
\text{for all } \theta_i\in[a,b].
\]
Therefore \(P_i(s,U_i(s))=0\) almost everywhere on \([\theta_i,\bar\theta_i]\) for every \(\theta_i\in[a,b]\). In particular, firm \(i\) wins only a measure-zero set of types above \(a\). Since \(\Phi_i\le0\) on losing intervals and \(\Phi_i=0\) on \([a,b]\), the Hamiltonian maximization condition implies that the allocation is zero almost everywhere on and below this region. Hence this trajectory yields zero expected revenue.

This is suboptimal. Since \(\bar p_{-i}>0\), firm \(i\) can post any price \(p_i\in(0,\bar p_{-i})\). Then all sufficiently high types strictly prefer firm \(i\)'s posted-price contract to the opponents' outside option, and firm \(i\) obtains strictly positive expected revenue. This contradiction shows that
\[
\{\theta_i:\Phi_i(\theta_i)=0\}
\]
has measure zero.

It remains to show that the set on which \(\Phi_i>0\) is an upper interval. Suppose \(\Phi_i(\theta_0)>0\). Then, as shown above,
\[
P_i(\theta_0,U_i(\theta_0))=1,
\]
so firm \(i\) strictly wins type \(\theta_0\):
\[
U_i(\theta_0)>O_i(\theta_0).
\]
Moreover, the Hamiltonian maximization condition implies that whenever \(\Phi_i>0\),
\[
q_i^\star(\theta_i)=1.
\]
Hence, on any interval on which \(\Phi_i>0\),
\[
U_i'(\theta_i)=1.
\]
The outside option is
\[
O_i(\theta_i)=(\theta_i-\bar p_{-i})_+,
\]
so \(O_i'(\theta_i)\le1\) wherever \(O_i\) is differentiable. Therefore, on any interval on which \(\Phi_i>0\),
\[
\frac{d}{d\theta_i}
\left[
U_i(\theta_i)-O_i(\theta_i)
\right]
=
U_i'(\theta_i)-O_i'(\theta_i)
=
1-O_i'(\theta_i)
\ge0.
\]
Thus, after \(\Phi_i\) becomes positive, the gap \(U_i(\theta_i)-O_i(\theta_i)\) cannot cross downward from positive to non-positive. Hence firm \(i\) cannot move from a strict winning region to a losing region after \(\Phi_i\) has become positive.

Since \(\Phi_i\le0\) on losing intervals and \(\Phi_i\) is strictly increasing on strict winning intervals, it follows that once \(\Phi_i\) becomes positive, it remains positive for all higher types. Therefore,
\[
\{\theta_i:\Phi_i(\theta_i)>0\}
\]
is an upper interval.

Since the zero set of \(\Phi_i\) has measure zero, the optimal allocation rule is pinned down almost everywhere by the sign of \(\Phi_i\). Hence there exists a cutoff \(\theta_i^\star\in\Theta_i\) such that
\[
q_i^\star(\theta_i)
=
\begin{cases}
0, & \theta_i<\theta_i^\star,\\
1, & \theta_i\ge\theta_i^\star,
\end{cases}
\quad
\text{a.e.}
\]
Since \(U_i(0)=0\), the envelope condition gives
\[
U_i^\star(\theta_i)
=
\int_0^{\theta_i}q_i^\star(s)\,ds
=
\max\{\theta_i-\theta_i^\star,0\}.
\]
The corresponding transfer is
\[
t_i^\star(\theta_i)
=
\theta_iq_i^\star(\theta_i)-U_i^\star(\theta_i)
=
\begin{cases}
0, & \theta_i<\theta_i^\star,\\
\theta_i^\star, & \theta_i\ge\theta_i^\star.
\end{cases}
\]
Therefore, every best response induces the same outcome as the posted-price menu
\[
M_i^\star=\{(0,0),(1,\theta_i^\star)\}.
\]
This completes the proof.

\end{document}